\documentclass[letterpaper,11pt]{article}


\usepackage{amsmath}
\usepackage{amsfonts}
\usepackage{amssymb}
\usepackage{amsthm}
\usepackage[usenames]{color}
\usepackage{fullpage}
\usepackage{xspace}
\usepackage{url,ifthen}
\usepackage{srcltx}
\usepackage{multirow}
\usepackage{boxedminipage}
\usepackage[margin=1.2in]{geometry}
\usepackage{nicefrac}
\usepackage{xspace}
\usepackage{graphicx}
\usepackage{color}
\definecolor{DarkGreen}{rgb}{0.1,0.5,0.1}
\definecolor{DarkRed}{rgb}{0.5,0.1,0.1}
\definecolor{DarkBlue}{rgb}{0.1,0.1,0.5}

\usepackage[small]{caption}

\usepackage[pdftex]{hyperref}
\hypersetup{
    unicode=false,          
    pdftoolbar=true,        
    pdfmenubar=true,        
    pdffitwindow=false,      
    pdfnewwindow=true,      
    colorlinks=true,       
    linkcolor=DarkRed,          
    citecolor=DarkGreen,        
    filecolor=DarkRed,      
    urlcolor=DarkBlue,          
    %
    %
    pdftitle={},
    pdfauthor={},
}

\def\draft{0}

\def\submit{0}

\ifnum\draft=1 
    \def\ShowAuthNotes{1}
\else
    \def\ShowAuthNotes{0}
\fi

\ifnum\submit=1
\newcommand{\forsubmit}[1]{#1}
\newcommand{\forreals}[1]{}
\else
\newcommand{\forreals}[1]{#1}
\newcommand{\forsubmit}[1]{}
\fi

\ifnum\ShowAuthNotes=1
\newcommand{\authnote}[2]{{ \footnotesize \bf{\color{DarkRed}[#1's Note:
{\color{DarkBlue}#2}]}}}
\else
\newcommand{\authnote}[2]{}
\fi

\newcommand{\Mnote}[1]{{\authnote{Moritz} {#1}}}

%
%
\newtheorem{theorem}{Theorem}[section]

\newtheorem{remark}[theorem]{Remark}
\newtheorem{lemma}[theorem]{Lemma}
\newtheorem{corollary}[theorem]{Corollary}
\newtheorem{proposition}[theorem]{Proposition}

\newtheorem{claim}[theorem]{Claim}
\newtheorem{fact}[theorem]{Fact}

\theoremstyle{definition}
\newtheorem{definition}[theorem]{Definition}

%
%

\newcommand{\chapterref}[1]{\hyperref[ch:#1]{Chapter~\ref{ch:#1}}}
\newcommand{\claimlabel}[1]{\label{claim:#1}}
\newcommand{\claimref}[1]{\hyperref[claim:#1]{Claim~\ref{claim:#1}}}
\newcommand{\corollarylabel}[1]{\label{cor:#1}}
\newcommand{\corollaryref}[1]{\hyperref[cor:#1]{Corollary~\ref{cor:#1}}}
\newcommand{\definitionlabel}[1]{\label{def:#1}}
\newcommand{\definitionref}[1]{\hyperref[def:#1]{Definition~\ref{def:#1}}}
\newcommand{\equationlabel}[1]{\label{eq:#1}}
\newcommand{\equationref}[1]{\hyperref[eq:#1]{Equation~\ref{eq:#1}}}
\newcommand{\factlabel}[1]{\label{fact:#1}}
\newcommand{\factref}[1]{\hyperref[fact:#1]{Fact~\ref{fact:#1}}}
\newcommand{\figurelabel}[1]{\label{fig:#1}}
\newcommand{\figureref}[1]{\hyperref[fig:#1]{Figure~\ref{fig:#1}}}
\newcommand{\itemlabel}[1]{\label{item:#1}}
\newcommand{\itemref}[1]{\hyperref[item:#1]{Item~\ref{item:#1}}}
\newcommand{\lemmalabel}[1]{\label{lem:#1}}
\newcommand{\lemmaref}[1]{\hyperref[lem:#1]{Lemma~\ref{lem:#1}}}

\newcommand{\propref}[1]{\hyperref[prop:#1]{Proposition~\ref{prop:#1}}}

\newcommand{\propositionref}[1]{\hyperref[prop:#1]{Proposition~\ref{prop:#1}}}

\newcommand{\remarkref}[1]{\hyperref[rem:#1]{Remark~\ref{rem:#1}}}
\newcommand{\sectionlabel}[1]{\label{sec:#1}}
\newcommand{\sectionref}[1]{\hyperref[sec:#1]{Section~\ref{sec:#1}}}
\newcommand{\theoremlabel}[1]{\label{thm:#1}}
\newcommand{\theoremref}[1]{\hyperref[thm:#1]{Theorem~\ref{thm:#1}}}

%
%

\usepackage[T1]{fontenc}
\usepackage{times}
\usepackage[varg]{txfonts} 
\renewcommand{\mathbb}{\varmathbb}

\usepackage{microtype}

\newcommand{\Esymb}{\mathbb{E}}
\newcommand{\Psymb}{\mathbb{P}}

\DeclareMathOperator*{\E}{\Esymb}

\DeclareMathOperator*{\ProbOp}{\Psymb r}
\renewcommand{\Pr}{\ProbOp}


\newcommand{\nfrac}{\nicefrac}

\newcommand{\mper}{\,.}
\newcommand{\mcom}{\,,}

\renewcommand{\hat}{\widehat}


\newcommand{\cM}{{\cal M}}


\renewcommand{\leq}{\leqslant}
\renewcommand{\le}{\leqslant}
\renewcommand{\geq}{\geqslant}
\renewcommand{\ge}{\geqslant}



\newcommand{\Set}[1]{\left\{#1\right\}}




\newcommand{\bits}{\{0,1\}}

\newcommand{\R}{\mathbb{R}}


\usepackage{bm}











\newcommand{\polylog}{{\rm polylog}}

\renewcommand{\epsilon}{\varepsilon}


\newcommand{\remove}[1]{}

%
%


%
%

\newcommand{\range}{\mathrm{range}}
\newcommand{\PFP}{\textrm{\small PFP}\xspace}


\title{Beating Randomized Response on Incoherent Matrices}
\author{Moritz Hardt\thanks{IBM Research Almaden. Email: {\tt
mhardt@us.ibm.com}}
\and Aaron Roth\thanks{Department of Computer and Information Sciences, University of Pennsylvania. Email: {\tt aaroth@cis.upenn.edu}}}

\begin{document}

\maketitle

\begin{abstract}
Computing accurate low rank approximations of large matrices is a fundamental
data mining task. In many applications however the matrix contains sensitive
information about individuals. In such case we would like to release a low
rank approximation that satisfies a strong privacy guarantee such as
differential privacy. Unfortunately, to date the best known algorithm for this
task that satisfies differential privacy is based on naive \emph{input
perturbation} or \emph{randomized response}: Each entry of the matrix is
perturbed independently by a sufficiently large random noise variable, a low
rank approximation is then computed on the resulting matrix.

We give (the first) significant improvements in accuracy over randomized
response under the natural and necessary assumption that the matrix has
\emph{low coherence}.  Our algorithm is also very efficient and finds a
constant rank approximation of an $m\times n$ matrix in time $O(mn).$ Note
that even generating the noise matrix required for randomized response already
requires time $O(mn).$
\end{abstract}

\vfill
\thispagestyle{empty}
\pagebreak

\section{Introduction}
Consider a large $m\times n$ matrix $A$ in which rows correspond to
individuals, columns correspond to movies, and the non-zero entry in $A(i,j)$
represent the rating that individual $i$ has given to movie $j$.  Such a data
set shares two important characteristics with many other data sets:
\begin{enumerate}
\item It can be represented as a \emph{matrix} with very different dimensions. There are many more people than movies, so $n \gg m$
\item It is composed of \emph{sensitive information}: the rating that an individual gives to a particular movie (and the very fact that he watched said movie) can be possibly compromising information.
\end{enumerate}
Nevertheless, although we want to reveal little about the existence of
individual ratings in this data set, it might be extremely useful to
be able to allow data analysts to mine such a matrix for statistical
information. Even while protecting the privacy of individual entries, it might
still be possible to release another matrix that encodes a great deal of
information about the original data set. For example, we might hope to be able
to recover the cut structure of the corresponding rating graph, perform
principal component analysis (PCA), or apply some other data mining technique.

Indeed, this example is not merely theoretical. Data of exactly this form was
released by Netflix as part of their competition to design improved
recommender systems. Spectral methods such as PCA were commonly used on this
dataset, and privacy concerns were acknowledged: Netflix attempted to
``anonymize'' the dataset in an ad-hoc way. Following this supposedly
anonymized release, Naranyanan and Shmatikov \cite{NarayananS08} were able to
re-identify many individuals in the dataset by cross-referencing the reviews
with publicly available reviews in the internet movie database. As a result of
their work, a planned second Netflix challenge was canceled. The story need
not have ended this way however -- the formal privacy guarantee known as
\emph{differential privacy} could have prevented the attack of
\cite{NarayananS08}, and indeed, McSherry and Mironov \cite{McSherryM09}
demonstrated that many of the recommender systems proposed in the competition
could have been implemented in a differentially private way.
\cite{McSherryM09} make use of private low-rank matrix approximations using
input perturbation methods. In fact, it is not possible to generically improve
on input perturbation methods for all matrices without violating \emph{blatant
non privacy} \cite{DinurN03}. Nevertheless,  in this paper, we give the first
algorithms for low rank matrix approximation with performance guarantees that
are significantly better than input perturbation, under certain commonly
satisfied conditions \emph{which are already assumed} in prior work on
non-private low-rank matrix approximation.

In this paper, we consider the problem of privately releasing accurate
low-rank approximations to datasets that can be represented as matrices. Such
matrix approximations are one of the most fundamental building blocks for
statistical analysis and data mining, with key applications including latent
semantic indexing and principle component analysis. We provide theorems
bounding the accuracy of our approximations as compared to the optimal low
rank approximations in the Frobenius norm. The classical Eckart-Young theorem
asserts that the
optimal rank-$k$ approximation of a matrix $A$ (in either the Frobenius or
Spectral norms) is obtained by computing the singular value decomposition $A =
U\Sigma V^T$, and releasing the \emph{truncated} SVD $A_k = U\Sigma_kV^T$,
where in $\Sigma_k$, all but the top $k$ singular values have been zeroed out.
Computing the SVD of a matrix takes time $O(mn^2)$. In addition to offering
privacy guarantees, our algorithm is also extremely efficient: it requires
only elementary matrix operations and simple noisy perturbations, and for
constant $k$ takes time only $O(mn)$. This represents a happy confluence of
the two goals of privacy and efficiency. Normally, the two are at odds, and
differentially private algorithms tend to be (much) less efficient than their
non-private counterparts. In this case, however, we will see that some algorithms
for fast approximate low-rank matrix approximation are much more amenable
to a private implementation than their slower counterparts.

Computing low rank matrix approximations privately has been considered at
least since \cite{BlumDMN05}, and to date, no algorithm has improved over simple
input  perturbation, which achieves an error (when compared with the best rank
$k$ approximation $A_k$) in Frobenius norm of $\Theta(\sqrt{k(n + m)})$. Although this error is optimal without making any assumptions on the matrix, this
error can be prohibitive when the best rank $k$ approximation is actually very
good: when $\|A-A_k\|_F \ll \sqrt{k(n+m)}.$ That is, exactly in the case when a
low rank approximation to the matrix would be most useful.  We give an
algorithm which improves over input perturbation under the conditions that $m
\ll n$ and that the \emph{coherence} of the matrix is small: roughly, that no
single row of the matrix is too significantly correlated with any of the right singular
vectors of the matrix. Equivalently, no left singular vector has large
correlation with one of the standard basis vectors. Low coherence is a
commonly studied and satisfied condition. For example, Candes and Tao,
motivated by the same Netflix Prize dataset re-identified by
\cite{NarayananS08}, consider the problem of matrix completion under low
coherence conditions \cite{CandesT10}. They show that matrix completion is
possible under low coherence assumptions, and that several reasonable random
matrix models exhibit a strong incoherence property.  Notably, \cite{CandesT10}
were not concerned with privacy at all: they viewed low coherence as a natural
assumption satisfied for datasets resembling the Netflix prize data that could
be leveraged to obtain stronger utility guarantees. This represents a second
happy confluence of the goals of data privacy and utility: low coherence is an
assumption that others \emph{already} make free of privacy concerns
in order to improve the state of the art in data analysis.  We show that the
same assumption can simultaneously be leveraged for data privacy. In
retrospect, low coherence is also an extremely natural condition in the
context of privacy, although one that has not previously been considered in
the literature. If a matrix fails to have low coherence, then intuitively the
data of individual rows of the matrix is encoded closely in individual
singular vectors. If it does have low coherence, no small set of singular
vectors can be used to encode any row of the matrix with high accuracy, and
intuitively, low rank approximations reveal less local information about
particular entries of the matrix.

The problem we solve is the following: Given a matrix $A$ and a target rank
$k$ we privately compute and release a rank $O(k)$ matrix $B$ such that
$\|A-B\|_F$ is not much larger than $\|A-A_k\|_F$, where $A_k$ is the
\emph{optimal} rank $k$ approximation to $A$, and $\|\cdot\|_F$ is the
\emph{Frobenius norm}. The quality of the approximation depends on several
factors, including $n$, $m$, the desired rank $k$, and the coherence of the
matrix. Our approach improves over input perturbation when the matrix
coherence is small.

Our algorithm promises $(\epsilon,\delta)$-\emph{differential privacy}
\cite{DworkMNS06} with respect to changes of any single row of magnitude $1$ in
the $\ell_2$-norm.  This is only stronger than the standard notion of changing
any single entry in the matrix by a unit amount. In the very special case of
the matrix representing a (possibly unbalanced) graph, this captures (for
example) the addition or removal of a single edge. Therefore in this case our
algorithm is promising \emph{edge privacy} rather than \emph{vertex privacy}.
From a privacy point of view, this is less desirable than vertex privacy, but
is still a strong guarantee which is appropriate in many settings. We note
that edge privacy is well studied with respect to graph problems (see, e.g.
\cite{NissimRS07,GuptaLMRT10, GuptaRU11}), and we do not know of any
algorithms with non-trivial guarantees on graphs that promise vertex privacy,
nor any algorithms in the more general case of matrices that promise privacy
with respect to entire rows.

\subsection{Our results}

We start with our first algorithm that improves over randomized response on
matrices of small \emph{$C$-coherence}. We say that an $m\times n$
matrix $A$ has \emph{coherence $C,$} if no row has Euclidean norm more than
$C\cdot \|A\|_F/\sqrt{m},$ i.e., more than $C$ times the the typical row
norm.This parameter varies between $1$ and $\sqrt{m},$ since no row can have
Euclidean norm more than $\|A\|_F.$ Intuitively the condition says that no
single row contributes too significantly to the Frobenius norm of the matrix.

\begin{theorem}[Informal version of \theoremref{C-approx}]
\theoremlabel{informal1}
There is an $(\epsilon,\delta)$-differentially private algorithm which given a
matrix $A\in\R^{m\times n}$ of coherence $C$ such that $n\ge m$
computes a rank $2k$ matrix $B$ such that with probability $9/10,$
\[
\|A-B\|_F
\le O\left(\|A-A_k\|_F\right)
+ O_{\epsilon,\delta}\left(
\sqrt{km} + \sqrt{kn}\cdot \frac{\sqrt{Ck\|A\|_F}}{(nm)^{1/4}}\right)\mper
\]
Moreover, the algorithm runs in time $O(kmn).$
\end{theorem}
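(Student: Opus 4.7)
The plan is to design a two-phase private variant of the Halko-Martinsson-Tropp randomized sketching algorithm. In Phase~1, privately obtain an orthonormal basis $\hat U\in\R^{m\times 2k}$ whose columns approximate the top-$k$ left singular subspace of $A$: sample a random matrix $\Omega\in\R^{n\times 2k}$ whose columns are normalized so that $\|\Omega^{\top}e\|=O(1)$ for every unit $e$ (e.g.\ scaled Gaussian or $\pm 1/\sqrt{n}$ entries), compute $Y=A\Omega$, release $\tilde Y=Y+N_1$ via the Gaussian mechanism, and take $\hat U$ to be an orthonormal basis for the column span of $\tilde Y$. In Phase~2, treat $\hat U$ as the (public) output of Phase~1 and release $\tilde M=\hat U^{\top}A+N_2$; the final output is $B=\hat U\tilde M$, a matrix of rank at most $2k$.

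For privacy, a row change in $A$ of $\ell_2$-magnitude $1$ alters $Y=A\Omega$ by at most $\|\Omega^{\top}\|_{\mathrm{op}}=O(1)$ in Frobenius norm, so Gaussian noise $N_1$ of per-entry scale $O(1/\epsilon)$ suffices for $(\epsilon/2,\delta/2)$-differential privacy on Phase~1, contributing Frobenius error $\|N_1\|_F=O(\sqrt{km}/\epsilon)$. Since $\hat U$ is a deterministic post-processing of $\tilde Y$, we may treat it as public in Phase~2; the $\ell_2$-sensitivity of $A\mapsto\hat U^{\top}A$ to a unit row change in row $i$ is exactly $\|\hat U^{\top}e_i\|$, so Gaussian noise $N_2$ of per-entry scale $O(\max_i\|\hat U^{\top}e_i\|/\epsilon)$ is $(\epsilon/2,\delta/2)$-DP. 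Sequential composition yields overall $(\epsilon,\delta)$-DP.

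For utility, writing $B=\hat U\hat U^{\top}A+\hat U N_2$ and using $\hat U^{\top}\hat U=I$ gives
\[
\|A-B\|_F \;\le\; \|A-\hat U\hat U^{\top}A\|_F \;+\; \|N_2\|_F.
\]
The first term is handled by the standard sketching analysis applied to the perturbed sketch $\tilde Y$, yielding $O(\|A-A_k\|_F)+O(\|N_1\|_F)$ and producing the $\sqrt{km}$ term of the theorem. The second term is $\|N_2\|_F=O\!\left(\sqrt{kn}\cdot\max_i\|\hat U^{\top}e_i\|/\epsilon\right)$; matching this to the claimed $\sqrt{kn}\cdot\sqrt{Ck\|A\|_F}/(nm)^{1/4}$ demands the coherence bound $\max_i\|\hat U^{\top}e_i\|=O\!\left(\sqrt{Ck\|A\|_F/\sqrt{nm}}\right)$.

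The main obstacle is exactly this last inequality. The paper's coherence hypothesis controls only the row norms of $A$, and the naive deduction $(AA^{\top})_{ii}=\|A_i\|^2\le C^2\|A\|_F^2/m$ gives only $\|U_k^{\top}e_i\|\le C\|A\|_F/(\sqrt{m}\,\sigma_k)$, which carries an unwanted $\sigma_k$ dependence. To hit the stated form, the proof will need to analyze how the Frobenius mass of the sketch $A\Omega$ (or its perturbed version $\tilde Y$) distributes across rows under the coherence assumption, so that the \emph{sketched} subspace $\hat U$ inherits a row-uniformity bound expressible in terms of $\|A\|_F$, $C$, and the ambient dimensions $m,n$ rather than $\sigma_k$; alternatively, the algorithm can be preconditioned by a random rotation that enforces low row coherence on $\hat U$ by construction. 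I expect this coherence transfer to be the delicate step, with everything else reducing to routine Gaussian-mechanism bookkeeping and the classical error analysis of randomized low-rank approximation.
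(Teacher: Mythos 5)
Your Phase~1 and the overall ``private sketch, then private projection'' architecture match the paper's algorithm, and your error bookkeeping for the $\sqrt{km}$ term and the $\|A-\hat U\hat U^TA\|_F$ term is essentially right. But the step you flag as the ``main obstacle'' is not a delicate step you can expect to push through---the coherence transfer you need is false under $C$-coherence. $C$-coherence only bounds row norms of $A$; it says nothing about left singular vectors attached to small or moderate singular values. Concretely, a $C$-coherent matrix may have a row of norm $C\|A\|_F/\sqrt m$ whose support is orthogonal to every other row, so that $e_i$ is an exact left singular vector with a non-negligible singular value; the sketched range of $\tilde Y$ then contains a basis vector with an $\Omega(1)$ entry in coordinate $i$, and $\max_i\|\hat U^Te_i\|$ is $\Omega(1)$ rather than $O\bigl(\sqrt{Ck\|A\|_F}/(nm)^{1/4}\bigr)$ precisely in the regime $Ck\|A\|_F\ll\sqrt{nm}$ where the theorem is supposed to beat randomized response. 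Your fallback of preconditioning by a random rotation $R$ also fails: privacy is measured against changes of a row of $A$ in the \emph{original} basis, and the Phase~2 sensitivity becomes $\|\hat U^TRe_i\|$ with $\hat U\approx R\,U_{\mathrm{top}}$, which equals the very quantity $\|U_{\mathrm{top}}^Te_i\|$ you were trying to avoid; conjugating by a public rotation gains nothing.

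The paper's resolution is different in kind: it never proves that the computed basis $W$ has small entries. Instead it \emph{prunes} $W$, zeroing every entry of magnitude larger than a threshold $\alpha$, and runs the projection with the pruned matrix $W_\alpha$, so the sensitivity is at most $\alpha$ per column by construction. The price of pruning is controlled by the $C$-coherence of $A$ itself: in a unit column at most $1/\alpha^2$ entries can exceed $\alpha$, so the discarded part is sparse, and \lemmaref{sparse-vector} shows a sparse unit vector $w$ has $\|w^TA\|\le C\sqrt{\ell}\,\|A\|_F/\sqrt m$; summing over columns (\lemmaref{truncation-error}) gives a truncation error of $O\bigl(Ck\|A\|_F/(\alpha\sqrt m)\bigr)$ in Frobenius norm. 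Balancing this $1/\alpha$ term against the $\alpha k\sqrt n\log(k/\delta)/\epsilon$ projection-noise term yields exactly the stated bound; note that the $\sqrt{C\|A\|_F}$ and $(nm)^{1/4}$ fractional powers are the signature of this geometric-mean optimization, which is a structural hint that no single coherence bound on $\hat U$ of the form you sought would produce the claimed expression. (The route you were attempting---bounding the entries of the range basis directly---is what the paper does only under the stronger $\mu_0$-coherence assumption, in \theoremref{mu-approx}, and even there it needs the full rank $r$ of $A$ and a separate argument about Gram--Schmidt with random noise directions.) To repair your proof, insert the pruning step between your two phases, invoke the sparse-vector argument to bound the induced error, and optimize the threshold.
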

Hidden in the $O_{\epsilon,\delta}$-notation is a factor of
$O(\log(k/\delta)/\epsilon)$ that depends on the privacy parameters.
Usually, $\delta\ll 1/k$ so that
$\log(k/\delta)\le 2\log(1/\delta).$
To understand the error bound note that the first term is proportional to the
best possible approximation error $\|A-A_k\|_F$ of any rank $k$
approximation. In particular, this term is optimal up to constant factors.
The second term expresses a more interesting phenomenon. Recall that we assume
$n\gg m$ so that $\sqrt{kn}$ would usually dominate $\sqrt{km}$ except that the
the $\sqrt{kn}$ term is multiplied by a factor which can be very
small if the matrix has low coherence and is not too dense. For example, when
$k=O(1),$ $C=O(1)$ and $\|A\|_F=O(\sqrt{n}),$ the error is roughly
$O(\sqrt{m}+\sqrt{n}/m^{1/4})$ which can be as small as $O(n^{3/8})$ depending
on the magnitude of $m.$ However, already in a much wider range of parameters we
observe an error of $o(\sqrt{kn}).$ In fact, in \sectionref{netflix} we
illustrate why the Netflix data satisfies the assumptions made
here and why they are likely to hold in other recommender systems.

When $\|A\|_F\ge \sqrt{n},$ the previous theorem cannot improve on randomized
response by more than a factor of $O(m^{1/4}).$ Our next theorem uses a
stronger but standard notion of coherence known as $\mu_0$-coherence. We
defer a formal definition of $\mu_0$-coherence to \sectionref{incoherent}, but
we remark that this parameter varies between $1$ and $m.$ Using this notion we
are able to obtain improvements roughly of order $O(\sqrt{m}).$

\begin{theorem}[Informal version of \theoremref{mu-approx}]
\theoremlabel{informal2}
There is an $(\epsilon,\delta)$-differentially private algorithm which given a
matrix $A\in\R^{m\times n}$ with $n\ge m$ and of $\mu_0$-coherence $\mu$ and
rank $r\ge 2k$ computes a rank $2k$ matrix $B$ such that with
probability $9/10,$
\[
\|A-B\|_F
\le O\left(\|A-A_k\|_F\right)
+ O_{\epsilon,\delta}\left(
\sqrt{km} + \sqrt{kn}\cdot \sqrt{\frac{\mu kr}{m}}\right)\mper
\]
Moreover, the algorithm runs in time $O(kmn).$
\end{theorem}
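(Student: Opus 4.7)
The plan is to use a two-stage, sketch-and-project approach analogous to that underlying \theoremref{C-approx}, strengthening the utility analysis by exploiting the sharper $\mu_0$-coherence bound on the left singular vectors of $A$. Setting $k' = 2k$, Stage~1 draws a Gaussian sketching matrix $\Omega \in \R^{m\times k'}$ and releases
\[
Y \;=\; \Omega^\top A + N_1,
\]
where $N_1$ has i.i.d.\ Gaussian entries whose scale is calibrated so that Stage~1 is $(\epsilon/2,\delta/2)$-differentially private under unit-norm row changes of $A$. One then takes $V \in \R^{n\times k'}$ to be an orthonormal basis for the top-$k'$ right singular subspace of $Y$. Stage~2 releases $\widehat U = AV + N_2$ for an appropriately scaled Gaussian $N_2 \in \R^{m\times k'}$, and outputs $B = \widehat U V^\top$, which has rank at most $2k$. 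Privacy follows from the Gaussian mechanism and basic composition; the running time is $O(kmn)$, dominated by the two matrix-matrix products.

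For utility, orthonormality of $V$ gives
\[
\|A-B\|_F \;\le\; \|A - AVV^\top\|_F + \|N_2\|_F,
\]
and $\|N_2\|_F = O_{\epsilon,\delta}(\sqrt{km})$ by Gaussian concentration, matching the first summand of the claimed bound. What remains is to show $\|A - AVV^\top\|_F \le O(\|A-A_k\|_F) + O_{\epsilon,\delta}(\sqrt{kn}\cdot\sqrt{\mu kr/m})$. Writing the SVD $A = U\Sigma V_A^\top$ with $U \in \R^{m\times r}$, we have $Y = \Omega^\top U\Sigma V_A^\top + N_1$, so the goal becomes to show that the top-$k'$ right singular subspace of $Y$ is close to the row span of $A_k$. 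The $\mu_0$-coherence bound $\max_i \|U_i\|_2 \le \sqrt{\mu r/m}$ is the key input: it controls, via Gaussian concentration over $\Omega$, the way the signal $\Omega^\top U\Sigma V_A^\top$ interacts with the noise $N_1$ on the $O(k)$-dimensional subspaces that determine the extracted $V$. The assumption $r\ge 2k$ ensures that $\Omega^\top U$ has full row rank with high probability, so that all of the signal's energy is preserved.

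The main obstacle is this matrix-perturbation step. A naive Davis--Kahan bound fails because the $k$-th singular value of $A$ may be arbitrarily small; one needs a \emph{gap-free} inequality in the spirit of the Halko--Martinsson--Tropp analysis of randomized SVD, but adapted to tolerate the additive noise $N_1$ on the sketch. The most delicate point is to show that the component of $N_1$ lying outside the row span of $A_k$ gets absorbed into the $O(\|A-A_k\|_F)$ term rather than amplified, which requires a trace-projection argument that tracks how $\mu_0$-coherence constrains the geometry of the singular subspaces at play and ultimately produces the scaling factor $\sqrt{\mu kr/m}$ rather than $1$ in the second error term.
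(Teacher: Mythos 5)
There is a genuine gap, and it is not one you can patch within your architecture: the coherence-dependent step you defer to the end (``the most delicate point'') is in fact impossible for the mechanism you propose. You sketch on the \emph{left}, releasing $Y=\Omega^\top A+N_1$, and extract a right subspace $V\subseteq\R^n$. Write $A=U\Sigma V_A^\top$. Since $\Omega$ is i.i.d.\ Gaussian and $U$ has orthonormal columns, $\Omega^\top U$ is a standard $2k\times r$ Gaussian matrix \emph{whatever} $U$ is (rotational invariance), so the distribution of $Y$, hence of $V$, hence of the utility quantity $\|A-AVV^\top\|_F=\|\Sigma V_A^\top(I-VV^\top)\|_F$, depends on $A$ only through $(\Sigma,V_A)$. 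But $\mu_0$-coherence in \theoremref{mu-approx} is a property of the \emph{left} factor $U$ (\definitionref{mu0}), so it is statistically invisible to your pipeline: your algorithm has identical error distribution on an incoherent matrix and on a maximally coherent matrix with the same $\Sigma,V_A$. Consequently, if your mechanism satisfied the claimed bound with the $\sqrt{\mu kr/m}$ factor on incoherent inputs, it would achieve error $o(\sqrt{kn})$ on the (highly coherent, essentially rank-$k$) matrices used in the paper's reconstruction-attack proposition, contradicting the fact that an $(\epsilon,\delta)$-private algorithm cannot be blatantly non-private. No ``gap-free Davis--Kahan'' or trace-projection argument can rescue this; the claimed scaling simply cannot be proved for your mechanism. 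There is also a quantitative symptom of the same problem: a unit $\ell_2$ change to one row of $A$ perturbs $\Omega^\top A$ by a rank-one matrix $\omega_i e^\top$ of Frobenius norm $\approx\sqrt{k}$ spread over all $n$ columns, so $N_1$ must be a dense $2k\times n$ Gaussian at full scale, and nothing reduces its impact on $V$; at best you recover the randomized-response-type term $\Theta_{\epsilon,\delta}(\sqrt{kn})$, not $\sqrt{kn}\cdot\sqrt{\mu kr/m}$.

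The paper's architecture is the transpose of yours precisely to make coherence usable. The range finder (\figureref{range-finder}) forms $Y=A\Omega$ with $\Omega\in\R^{n\times k}$, so a row change of $A$ alters only a single row of the $m\times k$ sketch; the added noise is only $m\times k$, contributing the $\sqrt{km}$ term. The $n$-dimensional noise is deferred to the projection step (\figureref{projection}), where one releases $W^\top A+N$ and the per-row sensitivity is $\|w_j\|_\infty$ rather than $1$, because a change in row $i$ of $A$ meets only the $i$-th entries of the columns of $W$. Incoherence enters exactly there: \lemmaref{mu0-projection} (via the Gram--Schmidt/L\'evy argument of \lemmaref{mu0-perturb}, which handles the noise columns appended to the incoherent range of $A$) shows $\max_j\|w_j\|_\infty\le\sqrt{4r\mu_0(A)/m}+O(\sqrt{k\log m/m})$, and this factor multiplies the $\sqrt{kn}$ noise term, giving the stated bound. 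If you want to keep a two-stage sketch-and-project proof, you must keep the sketch on the right and prove an entrywise bound on the orthonormalized noisy range, not a subspace-perturbation bound on a left sketch.
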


The hidden factor here is the same as before.   Note that
when $\mu kr=\polylog(n),$ the theorem can lead to an error bound $\tilde
O\left(n^{1/4}\right)$ depending on the magnitude of $m.$ Note that this is
roughly the square root of what randomized response would give. But again under
much milder assumptions on the coherence, the error remains
$o\left(\sqrt{kn}\right).$ Notably, Candes and Tao~\cite{CandesT10} work
with a stronger incoherence assumption than what is needed here.
Nevertheless they show that even their stronger assumption is
satisfied in a number of reasonable random matrix models. A slight
disadvantage of the error bound in \theoremref{informal2} is that the actual
rank~$r$ of the matrix enters the picture. \theoremref{informal2} hence cannot
improve over \theoremref{informal1} when the matrix has very large rank. We do
not know if the dependence on $r$ in the above bound is inherent or rather an
artifact of our analysis.

Finally, we remark that while our result depends on the $\mu_0$-coherence of the input
matrix, our algorithm does not require knowledge or estimation of the
$\mu_0$-coherence of the input matrix. The only parameters provided to the algorithm
are the target rank and the privacy parameters.

\paragraph{Reconstruction attacks and tightness of our results.}
As it turns out, existing work on ``blatant non-privacy'' and
reconstruction attacks~\cite{DinurN03} demonstrates that our results are
essentially tight under the given assumptions. To draw this connection, let us
first observe why \emph{input perturbation} cannot be improved without any
assumption on the matrix. To be more precise, by input perturbation we refer
to the method which simply perturbs each entry of the matrix with independent
Gaussian noise of magnitude
$O\left(\epsilon^{-1}\sqrt{\log(1/\delta)}\right)$, which is sufficient to
achieve $(\epsilon,\delta)$-differential privacy with respect to unit $\ell_2$
perturbations of the entire matrix. To obtain a rank $k$ approximation to the
original matrix, one can then simply compute the exactly optimal rank $k$
approximation to the perturbed matrix using the singular value decomposition,
which as one can show introduces error $O_{\epsilon,\delta}\left(\sqrt{km}+\sqrt{kn}\right)$
compared to the optimal rank $k$ approximation to the original matrix in the
Frobenius norm. First, let us observe that it is not possible in general to
have an algorithm which guarantees error in the Frobenius norm of
$o(\sqrt{kn})$ for \emph{every} matrix $A$, without violating \emph{blatant
non-privacy}\footnote{An algorithm $\cM$ is blatantly non-private if for every
database $D\in\bits^{n'}$ it is possible to reconstruct a $1-o(1)$ fraction of
the entries of $D$ exactly, given only the output of the mechanism $\cM(D)$.},
as defined by \cite{DinurN03}.  This is because there is a simple
reduction which starts with an $(\epsilon,\delta)$-differentially private
algorithm for computing rank $k$ approximations to matrices $A \in
\mathbb{R}^{m\times n}$ and gives an $(\epsilon,\delta)$-differentially
private algorithm which can be used to reconstruct almost every entry in any
database $D \in \{0,1\}^{n'}$ for $n' = k\cdot n$. It is known that
$(\epsilon,\delta)$-private mechanisms do not admit such reconstruction
attacks, 
 and so the result is a lower
bound. The reduction follows from the fact that we can always encode a
bit-valued database $D \in \{0,1\}^{n'}$ for $n' = k\cdot n$ as $k$ rows of an
$m\times n$ matrix for any $m \geq k$, simply by zeroing out all additional
$m-k$ rows. Note that the resulting matrix only has rank $k$, and so the
optimal rank $k$ approximation to this matrix has \emph{zero}
error. If we could recover a  matrix $A'$ such that $\|A-A'\|_F =
o(\sqrt{kn})$, this would mean that for a typical nonzero row $A_i$ of the
matrix with $i \in [k]$, we would have $\|A_i - A'_i\|_2 = o(\sqrt{n})$, and $\|A_i -
A'_i\|_1 \leq o(n)$. Then, by simply rounding the entries, we could
reconstruct the original database $D$ in almost all of its entries, giving
blatant non-privacy as defined by \cite{DinurN03}.

What is happening in the above example? Intuitively, the problem is that in
the rank $k$ matrix we construct from $D$, the $k$ nonzero rows of the matrix
are encoded accurately by only $k$~right singular vectors. On the other hand, low
coherence implies that any $k$ right singular vectors poorly represent a set
of only $k$ rows. Hence, there is hope to circumvent the above impediment using a
low coherence assumption on the matrix. Indeed, this is precisely what
\theoremref{informal1} and \theoremref{informal2} demonstrate.
%
Nevertheless, reconstruction attacks still lead to lower bounds even under low
coherence assumptions. Indeed, using the above ideas, the next proposition shows
that \theoremref{informal1} is essentially tight up to a factor
of~$O(\sqrt{k}).$ Since in many applications $k=O(1),$ this discrepancy
between our upper bound and the lower bound is often insignificant.
%
\begin{proposition}
Any algorithm which given an $m\times n$ matrix $A$ of coherence $C$ outputs a
rank $k$ matrix $B$ such that with high probability
\[
\|A-B\|_F\le o\left(\sqrt{kn}\cdot\frac{\sqrt{C\|A\|_F}}{(nm)^{1/4}}\right)
\]
cannot satisfy $(\epsilon,\delta)$-differential privacy for
sufficiently small constants $\epsilon,\delta.$
\end{proposition}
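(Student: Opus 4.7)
The plan is to prove the lower bound by a reduction to a Dinur--Nissim-style reconstruction attack, following the template sketched in the paragraph preceding the statement.

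First I would set $s := \lceil m/C^2\rceil$ and encode an arbitrary database $D\in\{0,1\}^{kn}$ into an $m\times n$ matrix $A$ as follows: partition $D$ into $k$ blocks $p_1,\dots,p_k\in\{0,1\}^n$ and let the first $s$ rows of $A$ consist of $s/k$ consecutive copies of each $p_j$, with the remaining $m-s$ rows set to zero. A direct check shows that $A$ has rank at most $k$ (so $\|A-A_k\|_F=0$), coherence at most $\sqrt{m/s}\le C$, and Frobenius norm $\|A\|_F=\Theta(\sqrt{sn})$. The crucial feature is that neighbouring databases $D,D'$ (differing in a single bit) correspond to matrices $A,A'$ differing in a single entry of magnitude $1$, i.e.\ in a single row by exactly unit $\ell_2$-norm---precisely the notion of neighbours used in the paper's privacy definition.

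Next I would show that a sufficiently accurate output permits reconstruction of $D$ in all but an $o(1)$ fraction of its bits. Suppose $\cM$ is $(\epsilon,\delta)$-differentially private and on input $A$ outputs a rank-$k$ matrix $B$ with $\|A-B\|_F\le\epsilon_0$. For each $j$, let $\bar B_j$ denote the average of the $s/k$ rows of $B$ at positions where $A$ has row equal to $p_j$. Convexity of $\|\cdot\|_2^2$ gives
\[
\sum_{j=1}^{k}\|\bar B_j-p_j\|_2^{2}\;\le\;\frac{k}{s}\,\|B-A\|_F^{2}\;\le\;\frac{k}{s}\,\epsilon_0^{2}.
\]
Rounding each coordinate of each $\bar B_j$ to the nearer element of $\{0,1\}$ yields at most $4(k/s)\epsilon_0^{2}$ incorrect bits out of $kn$. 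Hence $\epsilon_0=o(\sqrt{sn})=o(\sqrt{mn}/C)$ would recover a $1-o(1)$ fraction of $D$, contradicting $(\epsilon,\delta)$-differential privacy for sufficiently small constants $\epsilon,\delta$ by the standard reconstruction lower bound of~\cite{DinurN03}; note that averaging and rounding are post-processings and therefore preserve differential privacy.

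Finally I would check that this reconstruction bound dominates the quantity in the proposition. Substituting $\|A\|_F=\Theta(\sqrt{mn/C^2})$ yields $\sqrt{kn}\cdot\sqrt{C\|A\|_F}/(nm)^{1/4}=\Theta(\sqrt{kn})$, whereas the reconstruction lower bound is $\Theta(\sqrt{mn}/C)\ge\Theta(\sqrt{kn})$ whenever $s\ge k$ (equivalently $C\le\sqrt{m/k}$)---precisely the regime in which the construction is meaningful, since we need $s\ge k$ to accommodate $k$ distinct patterns. Therefore any algorithm achieving error $o(\sqrt{kn}\cdot\sqrt{C\|A\|_F}/(nm)^{1/4})$ would also achieve $o(\sqrt{mn}/C)$ error on the constructed instance and cannot be differentially private. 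The main obstacle I anticipate is citing a Dinur--Nissim-style impossibility result that applies cleanly in the $(\epsilon,\delta)$-DP regime and in terms of total $\ell_2$ reconstruction error (rather than per-query $\ell_\infty$ error), since it is the averaged estimates $\bar B_j$ that enjoy the favourable error bound, not the raw output $B$; once this piece is nailed down, the remaining steps are routine parameter checks.
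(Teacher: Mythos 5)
There is a genuine gap in the privacy half of your reduction, and it sits exactly in the part where you go beyond the paper. Your encoding copies each block $p_j$ into $s/k$ distinct rows of $A$, so a single-bit change in $D$ changes one entry in \emph{each of the $s/k$ copies}, i.e.\ it changes $s/k$ different rows of $A_D$. Under the paper's neighboring relation (two matrices are neighbors if they differ in a single row by $\ell_2$-norm at most $1$), $A_D$ and $A_{D'}$ are therefore not neighbors but are at group-privacy distance $s/k$; your claim that ``neighbouring databases correspond to matrices differing in a single entry of magnitude $1$'' is false whenever $s/k>1$. Consequently $\cM'$ only inherits roughly $\bigl(\tfrac{s}{k}\epsilon,\ \tfrac{s}{k}e^{(s/k)\epsilon}\delta\bigr)$-differential privacy, which is vacuous in precisely the regime your construction was designed to cover (e.g.\ $C=O(1)$, $k=O(1)$ gives $s/k=\Theta(m)$), so no contradiction with the reconstruction/blatant-non-privacy lower bound follows there. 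Rescaling the entries does not help: each bit would still be spread over $s/k$ rows, and the row-wise neighboring notion forces the same $s/k$ group-privacy blowup. The accuracy side of your argument (Jensen over the repeated rows, rounding, and the parameter check showing the hypothesized error is $o(\sqrt{kn})=o(\sqrt{sn})$) is fine; it is only the sensitivity-preservation step that breaks.

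The special case $s=k$ of your construction, i.e.\ $C=\sqrt{m/k}$ with each block appearing exactly once, is exactly the paper's proof: encode $D$ in the first $k$ rows, observe the matrix has rank $k$ (so zero optimal rank-$k$ error), coherence $\sqrt{m/k}$ and $\|A\|_F\le\sqrt{kn}$, note the map $D\mapsto A_D$ is genuinely sensitivity-preserving, and then pass from Frobenius error $o(\sqrt{kn})$ to $\ell_2$, then $\ell_1$, error on the encoded rows and round. Since the proposition only requires exhibiting one family of hard instances on which the claimed error bound forces blatant non-privacy, restricting to $s=k$ (and dropping the repetition) repairs your argument and recovers the paper's; extending the lower bound to smaller coherence values $C\ll\sqrt{m/k}$ would need a different encoding in which each bit affects only a single row, which your current scheme does not provide.
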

\begin{proof}[Informal proof]
For the sake of contradiction, suppose there exists such an algorithm $\cM$
that satisfies $(\epsilon,\delta)$-differential privacy. Then consider a
randomized algorithm $\cM'\colon\bits^{n'}\to\R^{n'}$ which takes a data set
$D\in\bits^{n'}$ containing a sensitive bit for $n'=kn$ individuals and
encodes it as the $m\times n$ matrix $A_D$ which contains $D$ in its first $k$
rows and is $0$ everywhere else. $\cM'(D)$ then computes $\cM(A_D)$ and
outputs the projection of $\cM(A_D)$ onto the first $k$ rows (thought of as a
vector of length $n'=kn$).

We claim that $\cM'$ is
$(\epsilon,\delta)$-differentially privacy. This is because the map from $D$
to $A_D$ is sensitivity preserving and the post-processing computed on $M(A_D)$
preserves $(\epsilon,\delta)$-differential privacy of $M.$

On the other hand, we claim that $\cM'$ is blatantly non-private. To see this
note that the matrix $A_D$ has coherence $C=\sqrt{m/k}$ and $\|A\|_F\le\sqrt{kn}$
so that one can check that $\|A-\cM(A)\|_F\le o(\sqrt{kn})$ with high probability.
This implies that $\|D-\cM'(D)\|_2\le o(\sqrt{n'})$ with high probability. We therefore also have $\|D - \cM'(D)\|_1 \le o(n')$. But in
this case we can compute a data set $D'$ from the output of $M'(D)$ such that
$\|D-D'\|_0=o(n')$ by rounding. This is the definition of a reconstruction attack showing
that $\cM'$ is blatantly non-private. Since $(\epsilon,\delta)$-differential
privacy is known to prevent blatant non-privacy\footnote{See, e.g., the proof
of Theorem 4.1 in~\cite{De11}.} for sufficiently small
$\epsilon,\delta>0,$ this presents the contradiction we sought.
\end{proof}
A similar proof shows that error $o(\sqrt{n}\cdot \sqrt{\mu/m})$ (where $\mu$
is the $\mu_0$-coherence of the matrix) cannot be achieved with
$(\epsilon,\delta)$-differential privacy. This shows that also
\theoremref{informal2} is tight up to the exact dependence on~$k$ and~$r.$ We
leave it as an intriguing open problem to determine the exact interplay
between coherence and the other parameters.



\subsection{Techniques and proof overview}
Our algorithm is based on a random-projection algorithm of Halko, Martinsson
and Tropp~\cite{HalkoMT11},
which involves two steps: \emph{range finding} and \emph{projection}. The
range finding algorithm first computes $k$ Gaussian measurements of $A,$ which
we denote by $Y=A\Omega.$ Here, $A$ is $m\times n$ and $\Omega$ is $n\times
k.$ These measurements can be thought of as a random projection of the matrix
into a lower dimensional representation, i.e., $Y$ is $m\times k.$ The crux of
the analysis in~\cite{HalkoMT11} is in arguing that $Y$ already captures most
of the range of~$A.$ Hence, all that remains to be done is to compute the
orthonormal projection operator~$P_Y$ into the span of $Y,$ and to compute the
projection $P_YA.$ Note that $P_YA$ is now a $k$-dimensional approximation of
$A$ and since $Y$ closely approximated the range of $A,$ it must be a good
approximation, say, in the Frobenius norm.

The motivation of~\cite{HalkoMT11} was to obtain a fast low rank approximation
algorithm. Indeed, \cite{HalkoMT11} give a detailed theoretical analysis and
empirical evaluation of the algorithm's performance.

\paragraph{Step 1: Privacy preserving range finder and projection.}
We will leverage the algorithm of~\cite{HalkoMT11} to obtain improved accuracy
bounds in the privacy setting. As a first step, we need to be able to carry out
the range finding and projection step in a privacy preserving manner.
Our analysis proceeds by observing that the projection of
$A$ to $Y$ approximately preserves all of the $\ell_2$ row-norms of $A$, and
so we can apply a Gaussian perturbation to $Y$, rather than to $A$. (An
$m\times k$ standard Gaussian matrix has Frobenius norm $O(\sqrt{km})$,
which is now independent of $n$). The formal presentation of this part of the
argument appears in \sectionref{range-projection}. This step provides an
approximation to the range of $A$ which might already be useful for some applications,
but has not yet achieved our goal of computing a low rank approximation to $A$
itself. For this, we need the projection step discussed next.

\paragraph{Step 2: Controlling the projection matrix using low coherence.}
We then show that under our low-coherence
assumption on $A$, the entries of the projection matrix into the range of $Y$,
$P_Y$, must be small in magnitude.
Finally, when $P_Y$ has small entries, the final
projection step, of computing $P_YA$ has low sensitivity, and although we must
now again add a Gaussian perturbation of dimension $m \times n$, the magnitude
of the perturbation in each entry can be smaller than would have been
necessary under naive input perturbation.

In order to obtain bounds on the $\ell_\infty$-norm of the projection operator
we make crucial use of the low-coherence assumption. Here we describe the
proof strategy that leads to \theoremref{informal2}. \theoremref{informal1} is
somewhat easier to show and follows along similar lines.
The first observation is that the Gaussian measurements
taken by the range finding algorithm are mostly linear combinations of the top
left singular vectors of the matrix. But when the matrix $A$ has low coherence,
then its top left singular vectors must have very small correlation with the
standard basis. This means that the top singular vectors must have small
coordinates. As a result each of the Gaussian measurements we take must have
small $\ell_\infty$-norm relative to the magnitude of the measurement. Some
complications arise as we must add noise to the matrix $Y$ for privacy reasons and
then orthonormalize it using the Gram-Schmidt orthonormalization algorithm.
A key observation is that the noise matrix is generated independently of $Y$.
As a result, it must be the case that all columns of the noise matrix have
very small inner product with the columns of $Y.$ A careful technical argument
uses this observation in order to show that the effect of noise can be
controlled throughout the Gram-Schmidt orthonormalization. The result is a
projection matrix in which the magnitude of each entry is small whenever the
coherence of~$A$ was small to begin with.

The exact proof strategy depends on the notion of coherence that we work with.
Both notions we consider in this paper are presented and analyzed in
\sectionref{incoherent}. We then also show that small $\mu_0$-coherence is indeed a
stronger assumption than small $C$-coherence.

\subsection{Related Work}
\subsubsection{Differential Privacy}
We use as our privacy solution concept the by now standard notion of
\emph{differential privacy}, developed in a series of papers \cite{BlumDMN05,
ChawlaDMSW05,DworkMNS06}, and first defined by Dwork, McSherry, Nissim, and Smith
\cite{DworkMNS06}. The problem of privately computing low-rank approximations to
matrix valued data was one of the first problems studied in the differential
privacy literature, first considered by Blum, Dwork, McSherry, and Nissim
\cite{BlumDMN05}, who give an input perturbation based algorithm for computing
the singular value decomposition by directly computing the eigenvector
decomposition of a perturbed covariance matrix. Computing low rank
approximations is an extremely useful primitive for differentially private
algorithms, and indeed, McSherry and Mironov \cite{McSherryM09} used the algorithm
given in \cite{BlumDMN05} in order to implement and evaluate differentially
private versions of recommendation algorithms from the Netflix prize
competition.

Finding differentially private low-rank approximation algorithms with superior
theoretical performance guarantees to input perturbation methods has remained
an open problem. Beating input perturbation methods for arbitrary symmetric
matrices was recently explicitly proposed as an open problem in
\cite{GuptaRU11}, who showed that such algorithms would lead to the first
efficient algorithm for privately releasing synthetic data useful for
\emph{graph cuts} which improves over simple randomized response. Our work
does not resolve this open question because our results only improve over
input perturbation methods for matrices with unbalanced dimensions which
satisfy a low-coherence assumption, but is the first algorithm to improve over
\cite{BlumDMN05} under any condition.

\paragraph{Comparison to recent results of Kapralov, McSherry and Talwar.}

In a recent independent and simultaneous work, Kapralov, McSherry, and
Talwar~\cite{KapralovMT11} give a new polynomial-time algorithm for computing
privacy-preserving rank 1 approximations to symmetric, positive-semidefinite matrices. Their algorithm achieves
$(\epsilon,0)$-differential privacy under unit spectral norm perturbations to
the matrix. Their algorithm outputs a vector $v$ such that for all $\alpha > 0$, $\E[v^TAv] \geq (1-\alpha)\|A\| - O(n\log(1/\alpha)/(\epsilon\alpha))$ (where $\|\cdot\|$ denotes the spectral norm) and they show that this is nearly tight for $(\epsilon,0)$-differential privacy guarantees. Our results are therefore
strictly incomparable. In this work, the goal is to achieve error
$o(\sqrt{kn})$ (i.e. $o(\sqrt{n})$ for rank-1 approximations) assuming low coherence, (a stronger error bound) under
$(\epsilon,\delta)$-differential privacy (a weaker privacy guarantee) and without making any assumptions about symmetry or positive-semidefiniteness.

\subsubsection{Fast Computation of Low Rank Matrix Approximations}
There is also an extensive literature on randomized algorithms for computing
approximately optimal low rank matrix approximations, motivated by improving
the running time of the exact singular value decompositions. This literature
originated with the work of Papadimitriou et al \cite{PapadimitriouTRV98} and Frieze,
Kannan, and Vempala \cite{FriezeKV04}, who gave algorithms based on random
projections and column sampling (in both cases with the goal of decreasing the
dimension of the matrix). Achlioptas and McSherry \cite{AchlioptasM01} give fast
algorithms for computing low rank approximations based on randomly perturbing
the original matrix (which can be done to induce sparsity). Although
\cite{AchlioptasM01} pre-dated the privacy literature, some of the algorithms presented
in it can be viewed as privacy preserving, because perturbing the actual
matrix with appropriately scaled Gaussian noise is a privacy preserving
operation sometimes referred to as \emph{randomized response}. When
appropriately scaled (for privacy) Gaussian noise is added to an $m\times n$
matrix, it results in an algorithm for approximating the best rank $k$
approximation up to an additive error of $O(\sqrt{k(m+n)})$ in the Frobenius
norm.

Our algorithms are most closely related to  the very recent work of Halko,
Martinsson, and Tropp \cite{HalkoMT11}, who give fast algorithms for computing
low rank approximations based on two steps: range finding, and projection. As
already discussed, in the first step, these algorithms project the matrix $A$
into an $m\times k$ matrix $Y$ which approximately captures the \emph{range}
of $A$. Then $A$ is projected into the range of $Y$, which yields a rank $k$
matrix which gives a good approximation to $A$ if a good rank-$k$
approximation exists. We will further discuss the algorithm of
\cite{HalkoMT11} and our modifications in the course of the paper.


\subsubsection{Low Coherence Conditions}

Low coherence conditions have been recently studied in a number of papers for
a number of matrix problems, and is a commonly satisfied condition on
matrices. Recently, Candes and Recht \cite{CandesR09} and Candes and Tao
\cite{CandesT10} considered the problem of \emph{matrix completion}. Matrix
completion is the problem of recovering all entries of a matrix from which
only a subset of the entries which have been randomly sampled. This problem is
inspired by the Netflix prize recommendation problem, in which a matrix is
given, with individuals on the rows, movies on the columns, and in which the
matrix entries correspond to individual movie ratings. The matrix provides
only a small number of movie ratings per individual, and the challenge is to
predict the missing entries in the matrix.  Clearly accurate matrix completion
is impossible for arbitrary matrices, but \cite{CandesR09,CandesT10} show the
remarkable result that it is possible under low coherence assumptions. Candes
and Tao \cite{CandesT10} also show that almost every matrix satisfies a low
coherence condition, in the sense that randomly generated matrices will be low
coherence with extremely high probability.

Talwalkar and Rostamizadeh recently used low-coherence assumptions for the
problem of (non-private) low-rank matrix approximation \cite{TalwalkarR10}. A common
heuristic for speeding the computation of low-rank matrix approximations is to
compute on only a small randomly chosen subset of the columns, rather than on
the entire matrix. \cite{TalwalkarR10} showed that under low-coherence assumptions
similar to those of \cite{CandesR09,CandesT10}, the spectrum of a matrix is in fact
well approximated by a small number of randomly sampled columns, and give
formal guarantees on the approximation quality of the sampling based
Nystr\"{o}m method of low-rank matrix approximation.

\subsection*{Acknowledgments}

We thank Boaz Barak, Anupam Gupta, and Jon Ullman for many helpful
discussions. We thank Sham Kakade for pointing us to the paper of
\cite{HalkoMT11}. We thank Michael Kapralov, Frank McSherry and Kunal Talwar
for communicating their result to us, and for useful
conversations about low rank matrix approximations. We wish to thank
Microsoft Research New England where part of this research was done.

\section{Preliminaries}
We view our dataset as a real valued \emph{matrix} $A\in\mathbb{R}^{m\times
n}.$ We sometimes denote the  $i$-th of a matrix by $A_{(i)}.$
Let
\begin{equation}
\mathcal{N} = \{P \in \mathbb{R}^{m\times n} : \textrm{there exists
an index } i \in [m] \textrm{ such that } \|P_{(i)}\|_2 \leq 1 \textrm{ and }
\|P_{(j)}\|_2 = 0 \textrm{ for all } j\neq i\}
\end{equation}
denote the set of matrices
that take $0$ at all values, except possibly in a single row, which has
Euclidean norm at most $1$.
\begin{definition} We say that two matrices $A,
A' \in \mathbb{R}^{m\times n}$ are \emph{neighboring} if $(A - A') \in
\mathcal{N}$.
\end{definition}
We use the by now standard privacy solution concept of differential privacy:
\begin{definition}
An algorithm $M\colon\mathbb{R}^{m\times n}\rightarrow R$ (where $R$ is some
arbitrary abstract range) is \emph{$(\epsilon,\delta)$-differentially private}
if for all pairs of neighboring databases $A, A' \in \mathbb{R}^{m\times n}$, and for
all subsets of the range $S \subseteq R$:
\[
\Pr\Set{M(A) \in S} \leq
\exp(\epsilon)\Pr\Set{M(A') \in S} + \delta\]
\end{definition}

We make use of the following useful facts about differential privacy.
\begin{fact}
If $M:\mathbb{R}^{m\times n}\rightarrow R$ is $(\epsilon,\delta)$-differentially private, and $M':R\rightarrow R'$ is an arbitrary randomized algorithm mapping $R$ to $R'$, then $M'(M(\cdot)):\mathbb{R}^{m\times n}\rightarrow R'$ is $(\epsilon,\delta)$-differentially private.
\end{fact}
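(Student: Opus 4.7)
The plan is to reduce the post-processing guarantee to the definition of differential privacy applied to the inner mechanism $M$. The key conceptual step is to represent the randomized algorithm $M'$ in a way that lets us apply $M$'s privacy guarantee pointwise. Specifically, I model $M'$ as a deterministic function $M'(r;\rho)$ where $\rho$ is drawn from an independent source of randomness, so that conditioning on $\rho$ reduces $M'$ to a deterministic map $R \to R'$.

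First I would fix a pair of neighboring matrices $A, A' \in \mathbb{R}^{m\times n}$ and an arbitrary (measurable) set $S' \subseteq R'$, and set up the preimage sets
\[
S_\rho \defeq \{r \in R : M'(r;\rho) \in S'\} \subseteq R\mper
\]
Each $S_\rho$ depends only on the internal coins $\rho$ of $M'$ and on the target set $S'$, and in particular is independent of whether the input to the composed mechanism is $A$ or $A'$. This independence is what makes the argument go through.

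Next I would write, by independence of $\rho$ from $M$ and from the input,
\[
\Pr\Set{M'(M(A)) \in S'} = \E_\rho\Brac{\Pr\Set{M(A) \in S_\rho}}\mcom
\]
and analogously with $A'$ in place of $A$. Now I apply the $(\epsilon,\delta)$-differential privacy of $M$ pointwise in $\rho$ to the event $\{M(A) \in S_\rho\}$, obtaining
\[
\Pr\Set{M(A) \in S_\rho} \le e^{\epsilon}\Pr\Set{M(A') \in S_\rho} + \delta\mper
\]
Taking expectation of this inequality over $\rho$ (using linearity, and the fact that the additive $\delta$ term is constant in $\rho$) yields
\[
\Pr\Set{M'(M(A)) \in S'} \le e^{\epsilon}\Pr\Set{M'(M(A')) \in S'} + \delta\mcom
\]
which is exactly the required guarantee for $M' \circ M$.

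There is essentially no real obstacle here: this is a direct consequence of the definition of $(\epsilon,\delta)$-differential privacy together with independence of $\rho$ from the input. The only mildly delicate point is a measurability one, namely that the preimage $S_\rho$ is a measurable subset of $R$ for almost every $\rho$; this is standard whenever $M'$ is modeled as a measurable Markov kernel, which is the conventional setting and we take as implicit.
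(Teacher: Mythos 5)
Your argument is correct, and it is the canonical proof of the post-processing property: condition on the coins $\rho$ of $M'$ to reduce to a deterministic map, apply the $(\epsilon,\delta)$-guarantee of $M$ to the preimage set $S_\rho$, and average over $\rho$, noting that the inequality is preserved under taking expectations since the additive $\delta$ is constant. The paper itself states this fact without proof (it is standard folklore, equivalent to the observation that a randomized post-processor is a mixture of deterministic ones and that the privacy inequality survives convex combinations), so there is nothing in the paper to diverge from; your handling of the one delicate point, measurability of $S_\rho$ when $M'$ is a Markov kernel, is exactly the right caveat.
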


The following useful theorem of Dwork, Rothblum, and Vadhan tells us how differential privacy guarantees compose.
\begin{theorem}[Composition \cite{DworkRV10}]
\label{thm:composition}
Let $\epsilon,\delta\in(0,1),\delta'>0.$
If $M_1, \ldots, M_k$ are each $(\epsilon,\delta)$-differentially private
algorithms, then the algorithm $M(A)
\equiv (M_1(A),\ldots,M_k(A))$ releasing the concatenation of the results of
each algorithm is $(k\epsilon, k\delta)$-differentially private. It is also $(\epsilon', k\delta + \delta')$-differentially private for:
$$\epsilon' < \sqrt{2k\ln(1/\delta')}\epsilon + 2k\epsilon^2$$
\end{theorem}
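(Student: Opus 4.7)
The plan is to handle the two bounds separately. The basic $(k\epsilon, k\delta)$ guarantee follows from a direct hybrid argument: for neighboring $A, A'$ and any measurable $S \subseteq R^k$, define hybrids $H_j = (M_1(A'), \ldots, M_j(A'), M_{j+1}(A), \ldots, M_k(A))$, so that $H_0 = M(A)$ and $H_k = M(A')$. Since only the $j$-th coordinate differs between consecutive hybrids and the other mechanisms are run on the same input, the $(\epsilon,\delta)$-DP guarantee of $M_j$ gives $\Pr[H_{j-1} \in S] \le e^\epsilon \Pr[H_j \in S] + \delta$. Iterating $k$ times and bookkeeping the $\delta$-slack cleanly (for instance by extracting a single ``bad event'' of measure $\le k\delta$ on whose complement each pointwise likelihood ratio is bounded by $e^\epsilon$) produces $(k\epsilon, k\delta)$-DP.

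For the advanced bound I would introduce the \emph{privacy loss random variable}. For each $M_i$ and neighboring $A, A'$, let $L_i(o) = \ln \Paren{\Pr[M_i(A)=o]/\Pr[M_i(A')=o]}$, interpreted as a Radon--Nikodym derivative when $M_i$ is continuous. The argument then proceeds in three steps. First, $(\epsilon,\delta)$-DP for $M_i$ implies the existence of an event $E_i$ with $\Pr[E_i] \le \delta$ on whose complement $|L_i| \le \epsilon$. Second, on this good event the same constraint also pins down the systematic bias: $\E[L_i] \le \epsilon \tanh(\epsilon/2) \le \epsilon^2$ in the regime of interest. Third, since the $M_i$ are independent, the total loss $L = \sum_i L_i$ is a sum of independent random variables bounded by $\epsilon$ in magnitude with mean at most $2k\epsilon^2$. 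An Azuma--Hoeffding bound then yields $\Pr[L > 2k\epsilon^2 + t] \le \exp\Paren{-t^2/(2k\epsilon^2)}$; setting $t = \epsilon\sqrt{2k\ln(1/\delta')}$ makes the failure probability at most $\delta'$. A union bound over the $k$ events $E_i$ adds $k\delta$ to the slack, so translating the bound on $L$ back into a likelihood statement for $M(A)$ versus $M(A')$ gives $(\epsilon', k\delta + \delta')$-DP with $\epsilon' = \sqrt{2k\ln(1/\delta')}\epsilon + 2k\epsilon^2$.

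The main obstacle is the very first step of the advanced part: extracting a \emph{symmetric, pointwise} bound $|L_i| \le \epsilon$ on a $(1-\delta)$-probability event from the asymmetric definition of $(\epsilon,\delta)$-DP. The naive reading only says that $L_i > \epsilon$ with probability at most $\delta$ on one side, which is too weak to support both the moment estimate and the two-sided concentration. The standard remedy is a ``dense-model'' style decomposition: write $M_i(A)$ as a mixture of a distribution pointwise within $e^\epsilon$ of $M_i(A')$ and a residual of mass at most $\delta$, and symmetrically for $M_i(A')$. Once this decomposition is in hand, steps two and three reduce to a short moment calculation and a standard concentration inequality for independent bounded random variables, and this ``extra'' $\delta$ per mechanism is precisely what accumulates to the $k\delta$ term in the final privacy parameter.
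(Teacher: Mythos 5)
The paper does not prove this statement at all: it is imported verbatim as a known result of Dwork, Rothblum, and Vadhan \cite{DworkRV10}, so there is no internal proof to compare against. Your sketch is essentially a reconstruction of the DRV argument itself: the hybrid argument (with the bad-event bookkeeping rather than the naive $\delta\sum_j e^{j\epsilon}$ accumulation) for the $(k\epsilon,k\delta)$ bound, and for the advanced bound the privacy-loss random variable, the per-mechanism decomposition of $M_i(A)$ and $M_i(A')$ into pointwise $e^{\epsilon}$-close parts plus residuals of mass at most $\delta$, the expected-loss (KL) bound of order $\epsilon^2$ (your $\epsilon\tanh(\epsilon/2)$ is in fact the tight constant, stronger than the $\epsilon(e^{\epsilon}-1)\le 2\epsilon^2$ used in \cite{DworkRV10}, and either suffices), Hoeffding/Azuma concentration with $t=\epsilon\sqrt{2k\ln(1/\delta')}$, and the standard translation of a tail bound on the loss back into $(\epsilon',k\delta+\delta')$-indistinguishability. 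The outline is sound; the two steps that carry the real technical weight, the simultaneous two-sided decomposition lemma and the conversion from the high-probability bound on the loss to the DP guarantee, are correctly identified by you but left as cited machinery rather than executed, which is the appropriate level of detail for a result the paper itself only quotes.
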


We denote the $1$-dimensional Gaussian distribution of mean $\mu$ and
variance $\sigma^2$ by $N(\mu,\sigma^2).$ We use $N(\mu,\sigma^2)^d$ to denote
the distribution over $d$-dimensional vectors with i.i.d. coordinates sampled from
$N(\mu,\sigma^2).$ We write $X\sim D$ to indicate that a
variable $X$ is distributed according to a distribution~$D.$ We note
the following useful fact about the Gaussian distribution.
\begin{fact}\factlabel{gaussian-sum}
If $g_i\sim N(\mu_i,\sigma_i^2),$ then
$\sum g_i \sim N\left(\sum_i\mu_i,\sum_i\sigma_i^2\right)\mper$
\end{fact}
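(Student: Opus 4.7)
The plan is to prove this standard fact via moment generating functions (MGFs), which makes the additivity of means and variances almost immediate and handles any finite sum uniformly. I will implicitly assume the $g_i$ are independent (otherwise the variance formula need not hold); this is the setting in which the fact is used throughout the paper, and the reader should interpret the statement accordingly.

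First, I would recall the MGF of a single Gaussian: if $X\sim N(\mu,\sigma^2)$, then $M_X(t)\defeq \E[e^{tX}] = \exp(\mu t + \tfrac12 \sigma^2 t^2)$, as one verifies by completing the square inside the density integral. Next, using independence, the MGF of the sum factors:
\[
M_{\sum_i g_i}(t) \;=\; \prod_i M_{g_i}(t) \;=\; \prod_i \exp\!\Paren{\mu_i t + \tfrac12 \sigma_i^2 t^2} \;=\; \exp\!\Paren{\Paren{\sum_i \mu_i}\,t + \tfrac12\Paren{\sum_i \sigma_i^2}\,t^2}\mper
\]
This is exactly the MGF of $N\!\left(\sum_i\mu_i,\sum_i\sigma_i^2\right)$, and since the MGF of a Gaussian is finite in a neighborhood of the origin, it uniquely determines the distribution. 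Hence $\sum_i g_i$ has the claimed law.

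There is essentially no obstacle here: the only subtlety worth mentioning is the independence assumption, which is standard whenever one writes ``$g_i \sim N(\mu_i,\sigma_i^2)$'' in contexts where variances add. An alternative route, if one prefers to avoid MGFs, would be to prove the two-variable case by explicit convolution of densities (a quadratic completion in the integrand yields the desired Gaussian) and then induct on the number of summands; this gives the same conclusion but is notationally heavier.
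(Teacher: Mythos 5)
Your proof is correct. The paper does not actually prove this statement---it is recorded as a standard fact with no argument given---so your MGF derivation simply supplies the routine justification, and it is complete: the one-dimensional Gaussian MGF, factorization over independent summands, and uniqueness of distributions with an MGF finite near the origin together give the claim. Your caveat about independence is also well taken: as stated the fact omits it (and without independence, or at least uncorrelatedness under joint Gaussianity, the variance formula fails), but every use in the paper, e.g.\ $\hat e_j=\sum_\ell e_\ell\,\Omega_{\ell j}$ with i.i.d.\ standard Gaussian entries of $\Omega$ in the range-finder privacy proof, satisfies it.
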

The following theorem is well known folklore. We include a proof in the appendix for completeness.

\begin{theorem}[Gaussian Mechanism]
\label{thm:Gaussian}
Let $x, y \in \mathbb{R}^d$ be any two vectors such that $\|x-y\|_2 \leq c$.
Let $Y \in \mathbb{R}^d$ be an independent random draw from $N(0,\rho^2)^d$,
where $\rho = c\epsilon^{-1}\sqrt{\log(1.25/\delta)}$. Then for any $S
\subseteq \mathbb{R}^d:$ $$\Pr\Set{x + Y \in S} \leq \exp(\epsilon)\Pr[y + Y
\in S] + \delta$$
\end{theorem}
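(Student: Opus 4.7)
The plan is to use the standard privacy loss random variable argument: compute the log-ratio of the densities of $x+Y$ and $y+Y$, show that this ratio exceeds $\epsilon$ only on an event of probability at most $\delta$ under $Y$, and then conclude the $(\epsilon,\delta)$-DP inequality by partitioning any target set $S$ into ``good'' and ``bad'' pieces.

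First I would write down the densities $p(z)$ and $q(z)$ of $x+Y$ and $y+Y$ at an arbitrary $z \in \R^d$, and compute the log-likelihood ratio. Setting $v = x-y$ with $\|v\|_2 \leq c$, a short calculation using the Gaussian density gives
\[
\ln \frac{p(z)}{q(z)} = \frac{\|z-y\|_2^2 - \|z-x\|_2^2}{2\rho^2} = \frac{2\iprod{z-x, v} + \|v\|_2^2}{2\rho^2}.
\]
Evaluating at the random point $z = x + Y$ reduces the question to controlling the real-valued random variable $L(Y) \defeq \iprod{Y,v}/\rho^2 + \|v\|_2^2/(2\rho^2)$. By \factref{gaussian-sum}, $\iprod{Y,v} \sim N(0,\rho^2 \|v\|_2^2)$, so $L(Y)$ is a one-dimensional Gaussian random variable with known mean and variance.

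Next I would let $B = \bigset{z : p(z) > e^\epsilon q(z)}$, so that for $z \in B$ we have $L$ exceeding $\epsilon$. Using $\|v\|_2 \leq c$ and the prescribed $\rho = c\epsilon^{-1}\sqrt{\log(1.25/\delta)}$, I apply the standard Gaussian tail bound $\Pr[N(0,\sigma^2) > t] \leq (\sigma/(t\sqrt{2\pi}))\exp(-t^2/(2\sigma^2))$ to the event $\iprod{Y,v} > \epsilon \rho^2 - \|v\|_2^2/2$ and check that it is at most $\delta$. This constant-chasing is the only delicate step and is the main technical obstacle: one has to verify that the $1.25$ inside the logarithm is large enough to absorb the polynomial prefactor in the Gaussian tail, handling the regime of small $\epsilon$.

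Finally, for any measurable $S \subseteq \R^d$ I would split $S = (S \cap B) \sqcup (S \setminus B)$ and write
\[
\Pr\Set{x+Y \in S} \leq \Pr\Set{x+Y \in B} + \Pr\Set{x+Y \in S \setminus B} \leq \delta + e^\epsilon \Pr\Set{y+Y \in S \setminus B} \leq \delta + e^\epsilon \Pr\Set{y+Y \in S},
\]
where the second inequality uses that on $S \setminus B$ the density ratio $p/q$ is at most $e^\epsilon$ pointwise, and the first uses the tail bound from the previous step. This is exactly the claimed $(\epsilon,\delta)$-differential privacy guarantee.
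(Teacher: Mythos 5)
Your proposal is correct and matches the paper's proof in substance: both control the log-density-ratio (privacy loss), use a Gaussian tail bound with $\rho = c\epsilon^{-1}\sqrt{\log(1.25/\delta)}$ to show it exceeds $\epsilon$ only on an event of probability at most $\delta$, and then split $S$ into the part where the ratio is at most $e^\epsilon$ and the small bad part. The only cosmetic difference is that the paper first reduces to one dimension by an explicit rotation (rotational invariance of the Gaussian) and argues there, whereas you stay in $\mathbb{R}^d$ and extract the same one-dimensional Gaussian directly via $\iprod{Y,v} \sim N(0,\rho^2\|v\|_2^2)$, with the same constant-checking step that the paper likewise leaves to a "plug in $\rho$" verification.
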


\paragraph{Vector and matrix norms.} We denote by $\|\cdot\|_p$ the
$\ell_p$-norm of a vector and sometimes use $\|\cdot\|$ as a shorthand for the
Euclidean norm. Given a real $m\times n$ matrix $A,$ we will work with the
\emph{spectral norm} $\|A\|_2$ and the Frobenius norm~$\|A\|_F$ defined as
\begin{equation}\textstyle
\|A\|_2  =\max_{\|x\|=1}\|Ax\|\qquad\text{and}
\qquad \|A\|_F =\sqrt{\sum_{i,j}a_{ij}^2}\mper
\end{equation}
For any $m\times n$ matrix $A$ of rank $r$ we have
$\|A\|_2\le\|A\|_F\le\sqrt{r}\cdot\|A\|_2\mper$
For a matrix $Y$ we denote by~$P_Y$
the orthonormal projection operator onto the range of $Y.$
\begin{fact}\label{fac:projection}
$P_Y=Y(Y^*Y)Y^{-1}$
\end{fact}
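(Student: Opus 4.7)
The statement is a standard linear algebra identity and the proof is essentially bookkeeping; I read the formula as $P_Y = Y(Y^\ast Y)^{-1} Y^\ast$ (under the mild assumption that $Y$ has full column rank, which is the only case in which $Y^\ast Y$ is invertible; otherwise one replaces the inverse by a pseudoinverse). The plan is to verify that the operator on the right-hand side is the unique orthogonal projector onto $\mathrm{range}(Y)$.

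First, I would observe that it suffices to check the three defining properties of an orthogonal projection onto $\mathrm{range}(Y)$: (i) the image is contained in $\mathrm{range}(Y)$, (ii) the operator acts as the identity on $\mathrm{range}(Y)$, and (iii) the operator is self-adjoint (equivalently, its kernel is exactly $\mathrm{range}(Y)^{\perp}$). Property (i) is immediate because the operator factors through left-multiplication by $Y$. For property (ii), take any $v = Yw$ in $\mathrm{range}(Y)$ and compute
\[
Y(Y^\ast Y)^{-1} Y^\ast (Yw) = Y(Y^\ast Y)^{-1}(Y^\ast Y) w = Yw \mper
\]
Property (iii) follows by direct inspection: $\bigl(Y(Y^\ast Y)^{-1} Y^\ast\bigr)^\ast = Y \bigl((Y^\ast Y)^{-1}\bigr)^\ast Y^\ast = Y(Y^\ast Y)^{-1} Y^\ast$, since $Y^\ast Y$ is self-adjoint and positive definite (hence its inverse is too).

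As an alternative, cleaner route, I would use the orthogonal decomposition $\R^m = \mathrm{range}(Y) \oplus \mathrm{range}(Y)^{\perp}$ and argue: for any $x$, write $x = Yw + z$ with $Y^\ast z = 0$ (this characterizes $z \in \mathrm{range}(Y)^{\perp}$). Applying the formula gives $Y(Y^\ast Y)^{-1} Y^\ast x = Y(Y^\ast Y)^{-1}(Y^\ast Y w + Y^\ast z) = Yw$, which is exactly the component of $x$ in $\mathrm{range}(Y)$, matching the definition of $P_Y$.

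There is no real obstacle here; the only subtlety is the invertibility of $Y^\ast Y$, which holds iff the columns of $Y$ are linearly independent. In the algorithmic context of the paper, $Y$ is obtained by Gaussian-like projection and column-orthonormalization, so full column rank holds with probability $1$, and this technicality is harmless.
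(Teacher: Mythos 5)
Your reading of the statement is the right one: as printed, ``$P_Y=Y(Y^*Y)Y^{-1}$'' is a typo, and the intended identity is $P_Y=Y(Y^*Y)^{-1}Y^*$ (with a pseudoinverse in the rank-deficient case). The paper states this as a Fact with no proof at all, so there is nothing to compare against; your verification is the standard one and is correct: image contained in $\range(Y)$, identity on $\range(Y)$, and self-adjointness (equivalently, your second route via the decomposition $x=Yw+z$ with $Y^*z=0$) together characterize the orthogonal projector onto $\range(Y)$. Your remark about full column rank is also consistent with how the fact is used in the paper: the projection is applied to $\tilde Y=A\Omega+N$, whose columns are linearly independent with probability $1$ because of the Gaussian noise, and after orthonormalization the paper anyway works with $P_{\tilde Y}=WW^T$, for which the inverse is trivial. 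No gaps.
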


\begin{fact}[Submultiplicativity]\factlabel{sub}
For any $m\times n$ matrix $A$ and $n\times r$ matrix $B$ we have
$\|AB\|_F\le\|A\|_F\cdot\|B\|_F.$
\end{fact}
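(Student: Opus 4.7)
The plan is to establish submultiplicativity by rewriting the Frobenius norm of $AB$ as a sum of squared Euclidean row norms and reducing to either Cauchy--Schwarz or a spectral-norm bound. Concretely, I would start from the identity
\[
\|AB\|_F^2 \;=\; \sum_{i=1}^{m} \bigl\| (AB)_{(i)} \bigr\|_2^2 \;=\; \sum_{i=1}^m \bigl\| A_{(i)} B \bigr\|_2^2,
\]
which just uses the fact that the $i$-th row of $AB$ equals the $i$-th row of $A$ times $B$. This reduces the claim to estimating the length of a single row-vector-times-matrix product.

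For each individual row, I would apply the spectral-norm bound $\|A_{(i)} B\|_2 \le \|A_{(i)}\|_2 \cdot \|B\|_2$, which is immediate from the definition of $\|B\|_2 = \max_{\|x\|=1}\|x^\top B\|$ stated in the preliminaries (applied to $B^\top$). Plugging this into the row-sum gives
\[
\|AB\|_F^2 \;\le\; \sum_{i=1}^m \|A_{(i)}\|_2^2 \cdot \|B\|_2^2 \;=\; \|A\|_F^2 \cdot \|B\|_2^2.
\]
Finally I would invoke the preliminary inequality $\|B\|_2 \le \|B\|_F$ to conclude $\|AB\|_F \le \|A\|_F \cdot \|B\|_F$, which is the claimed bound. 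Taking square roots is harmless since both sides are nonnegative.

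If one prefers a self-contained proof that avoids any appeal to the spectral norm, the alternative is entrywise Cauchy--Schwarz: $(AB)_{ij}^2 = \bigl(\sum_k A_{ik} B_{kj}\bigr)^2 \le \bigl(\sum_k A_{ik}^2\bigr)\bigl(\sum_k B_{kj}^2\bigr)$, and then summing over $i,j$ factors as $\bigl(\sum_{i,k} A_{ik}^2\bigr)\bigl(\sum_{k,j} B_{kj}^2\bigr) = \|A\|_F^2 \|B\|_F^2$. Either route is a one-line calculation. There is no real obstacle; the only thing to be mindful of is to write $(AB)_{(i)} = A_{(i)} B$ rather than $A B_{(i)}$, so that we are summing row norms and not confusing rows with columns.
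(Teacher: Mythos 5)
Your proof is correct. Note that the paper states this fact in the preliminaries without any proof at all, so there is nothing to compare against on that side; either of your two routes fills the gap. The entrywise Cauchy--Schwarz argument is the cleaner choice here, since it is fully self-contained given only the definition $\|A\|_F=\sqrt{\sum_{i,j}a_{ij}^2}$. Your first route is also fine and in fact proves the sharper intermediate bound $\|AB\|_F\le\|A\|_F\,\|B\|_2$, but it quietly uses that $\max_{\|x\|=1}\|x^\top B\|$ equals the spectral norm $\|B\|_2$ as defined in the paper (which acts on column vectors, i.e.\ $\|B\|_2=\max_{\|x\|=1}\|Bx\|$); that is the standard identity $\|B^\top\|_2=\|B\|_2$, easy via the SVD but worth stating if you take that route. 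The only other ingredient, $\|B\|_2\le\|B\|_F$, is already recorded in the preliminaries, so both arguments close.
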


\begin{theorem}[Weyl]\label{thm:weyl}
For any $m\times n$ matrices $A,E,$ we have
$\left|\sigma_i(A+E)-\sigma_i(A)\right|\le\|E\|_2\mcom$ where $\sigma_i(M)$
denotes the $i$-th singular value of a matrix $M.$ where $\sigma_i(M)$ denotes
the $i$-th singular value of a matrix~$M.$
\end{theorem}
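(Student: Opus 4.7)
My plan is to prove Weyl's theorem via the Courant--Fischer min-max characterization of singular values. Recall that for any $m\times n$ matrix $M$ and any $i\in\{1,\dots,\min(m,n)\}$,
\[
\sigma_i(M) \;=\; \min_{\substack{S\subseteq\R^n\\ \dim S = n-i+1}} \;\max_{\substack{x\in S\\ \|x\|=1}} \|Mx\|\mper
\]
I would first record this variational formula (or derive it in one line from the SVD of $M$) as the workhorse. The only other ingredient I need is the triangle inequality together with the definition $\|E\|_2=\max_{\|x\|=1}\|Ex\|$.

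Next, for any unit vector $x\in\R^n$, the triangle inequality gives
\[
\bigl|\,\|(A+E)x\| - \|Ax\|\,\bigr| \;\leq\; \|Ex\| \;\leq\; \|E\|_2\mper
\]
Fix a subspace $S$ of dimension $n-i+1$. Taking the maximum over unit $x\in S$ and using the above pointwise bound yields
\[
\max_{\substack{x\in S\\ \|x\|=1}} \|(A+E)x\| \;\leq\; \max_{\substack{x\in S\\ \|x\|=1}} \|Ax\| \;+\; \|E\|_2\mper
\]
Now I would minimize both sides over all $(n-i+1)$-dimensional subspaces $S$. Because adding the constant $\|E\|_2$ on the right commutes with taking the minimum, the Courant--Fischer formula delivers
\[
\sigma_i(A+E) \;\leq\; \sigma_i(A) + \|E\|_2\mper
\]

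Finally, by symmetry I would apply the same argument to $A' = A+E$ and $E' = -E$ (noting $\|-E\|_2 = \|E\|_2$), which gives $\sigma_i(A) \le \sigma_i(A+E) + \|E\|_2$. Combining the two inequalities gives $|\sigma_i(A+E) - \sigma_i(A)| \le \|E\|_2$, as desired. There is no real obstacle here: the entire argument is elementary once the min-max formula is in hand. If space is tight, I would simply cite Courant--Fischer rather than re-derive it, since the paper treats this as a folklore fact used only as a black box.
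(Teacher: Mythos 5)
Your proof is correct, and in fact the paper offers no proof at all: it states Weyl's perturbation bound as a classical fact and uses it only as a black box (in \claimref{two}, to compare the singular values of $A_0$ and $A' = A_0 + E$). Your argument via the Courant--Fischer characterization $\sigma_i(M) = \min_{\dim S = n-i+1}\max_{x\in S,\,\|x\|=1}\|Mx\|$ is the standard textbook derivation: the pointwise triangle inequality $\|(A+E)x\|\le\|Ax\|+\|E\|_2$ passes through the max over unit vectors in a fixed subspace and then through the min over subspaces, giving $\sigma_i(A+E)\le\sigma_i(A)+\|E\|_2$, and the symmetric application with $A'=A+E$, $E'=-E$ yields the reverse inequality. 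Every step is sound, so this would serve perfectly well as a self-contained appendix proof; alternatively, citing the min-max theorem (or Weyl's original inequality) as the paper implicitly does is equally acceptable given that the result is folklore.
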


\section{Low-rank approximation via Gaussian measurements}

We will begin by presenting an algorithm of Halko, Martinsson and
Tropp~\cite{HalkoMT11} as described in \figureref{HMT}.  The algorithm
produces a rank $r+p$ approximation that already for $p\ge 2$ closely matches
the best rank~$r$ approximaton of the matrix in Frobenius norm. The guarantees
of the algorithm are detailed in \theoremref{HMT}.

\begin{figure}[h]
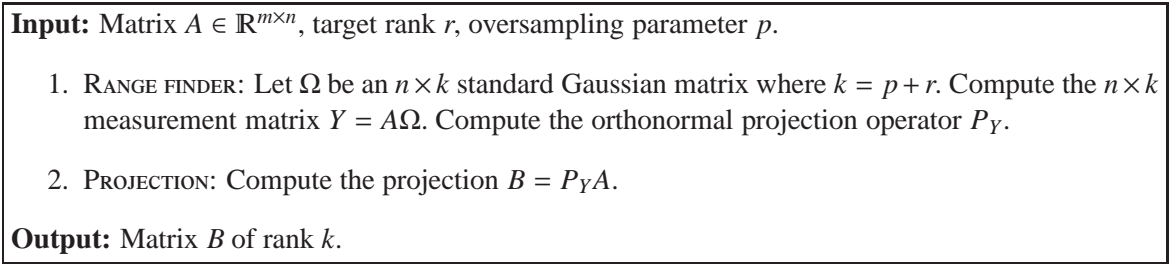

\begin{boxedminipage}{\textwidth}
{\bf Input:} Matrix $A\in\mathbb{R}^{m\times n},$ target rank $r,$ oversampling
parameter $p.$
\begin{enumerate}
\item {\sc Range finder:}
Let $\Omega$ be an $n\times k$ standard Gaussian matrix where $k=p+r.$
Compute the $n\times k$ measurement matrix $Y=A\Omega.$ Compute the
orthonormal projection operator $P_Y.$
\item {\sc Projection:} Compute the projection $B=P_YA.$
\end{enumerate}
{\bf Output:} Matrix $B$ of rank $k.$
\end{boxedminipage}
\caption{Base algorithm for computing a low-rank approximation}
\figurelabel{HMT}
\end{figure}

\begin{theorem}[\cite{HalkoMT11}]
\label{thm:HMT}
Suppose that $A$ is a real $m\times n$ matrix with singular values
$\sigma_1\ge\sigma_2\ge\dots\mper$ Choose a target rank $r\ge 2$ and an
oversampling parameter $p\ge 2$ where $r+p\le\min\{m,n\}.$ Draw an
$n\times(r+p)$ standard Gaussian matrix $\Omega,$ construct the sample
matrix $Y=A\Omega$ and let $B=P_YA.$
Then the expected approximation error in Frobenius norm satisfies
\begin{equation}\label{eq:frob-error}
\E\| A - B \|_F
\le \left(1+\frac r{p-1}\right)^{\nfrac12}\sqrt{\sum_{j>r}\sigma_j^2}\mper
\end{equation}
In particular, for $p=r+1$ we have
\begin{equation}\label{eq:simplified}
\E\| A - B \|_F \le \sqrt{2}\cdot\|A-A_r\|_F
\end{equation}
\end{theorem}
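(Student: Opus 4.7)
The plan is to follow the standard three-step pattern for analyzing the random projection range finder: (i) reduce to a deterministic structural bound, (ii) average out the Gaussian randomness, (iii) apply Jensen's inequality to convert a second-moment bound into a bound on the expectation of the Frobenius norm itself.

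First I would fix the SVD $A=U\Sigma V^{T}$ and partition it by the target rank. Write $\Sigma = \mathrm{diag}(\Sigma_1,\Sigma_2)$ where $\Sigma_1$ holds the top $r$ singular values and $\Sigma_2$ holds the rest, and split $V = [V_1 \ V_2]$ accordingly. Define
\[
\Omega_1 \defeq V_1^{T}\Omega \quad\text{and}\quad \Omega_2 \defeq V_2^{T}\Omega.
\]
Because $V$ is orthogonal and $\Omega$ has i.i.d.\ $N(0,1)$ entries, rotational invariance of the Gaussian distribution implies that $\Omega_1\in\R^{r\times (r+p)}$ and $\Omega_2\in\R^{(n-r)\times(r+p)}$ are \emph{independent} standard Gaussian matrices. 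Since $r+p\le n$, the matrix $\Omega_1$ has full row rank almost surely, so its Moore--Penrose pseudoinverse $\Omega_1^{+}$ is well defined.

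The core step is a deterministic bound:
\[
\|(I-P_Y)A\|_F^{2} \;\le\; \|\Sigma_2\|_F^{2} \;+\; \|\Sigma_2\Omega_2\Omega_1^{+}\|_F^{2}\mper
\]
I expect this to be the main obstacle. The proof idea is to exhibit a convenient element of $\mathrm{range}(Y)$ and use optimality of $P_Y$. Specifically, the matrix $Z \defeq Y\Omega_1^{+}$ has range contained in $\mathrm{range}(Y)$, so $\|(I-P_Y)A\|_F \le \|(I-P_Z)A\|_F$. Expanding $Y$ in the SVD basis, $Z = U_1\Sigma_1 + U_2\Sigma_2\Omega_2\Omega_1^{+}$, and I would manipulate $A - P_Z A$ using block identities (as in HMT's Theorem~9.1, using the Parallelogram/Pythagorean decomposition of $A$ along $U_1$ and $U_2$) to extract the two summands above. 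The key algebraic observation is that the component of $A$ along $U_1$ is captured exactly by $Z$ once we multiply by $\Sigma_1^{-1}V_1^{T}$, leaving only a contribution from $\Sigma_2$ and a cross term.

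Next I would take expectations. Conditioning on $\Omega_1$ and using the Gaussian identity $\E_{\Omega_2}[\Omega_2^{T} M\,\Omega_2] = \mathrm{tr}(M)\,I$ for any fixed $M$, one obtains
\[
\E_{\Omega_2}\,\|\Sigma_2\Omega_2\Omega_1^{+}\|_F^{2}
= \|\Sigma_2\|_F^{2}\cdot \|\Omega_1^{+}\|_F^{2}\mper
\]
Then I would apply the standard inverse-Wishart identity that for a standard Gaussian matrix $G\in\R^{r\times(r+p)}$ with $p\ge 2$ one has $\E\|G^{+}\|_F^{2} = r/(p-1)$, by relating $\|G^{+}\|_F^{2}$ to $\mathrm{tr}((GG^{T})^{-1})$ and using the Wishart moment formula. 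Combining, $\E\|A-B\|_F^{2} \le (1+r/(p-1))\sum_{j>r}\sigma_j^{2}$. Finally, Jensen's inequality applied to the concave function $\sqrt{\cdot}$ delivers~\eqref{eq:frob-error}. Equation~\eqref{eq:simplified} then follows by substituting $p=r+1$, which makes the prefactor $\sqrt{1+r/r}=\sqrt{2}$.
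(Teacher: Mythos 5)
Your proposal is correct and is essentially the proof the paper relies on: the paper does not prove this statement itself but cites Halko--Martinsson--Tropp, and your argument reproduces their route (the deterministic bound $\|(I-P_Y)A\|_F^2\le\|\Sigma_2\|_F^2+\|\Sigma_2\Omega_2\Omega_1^{+}\|_F^2$ from their Theorem~9.1, the Gaussian and inverse-Wishart moment identities giving $\|\Sigma_2\|_F^2\,\E\|\Omega_1^{+}\|_F^2=\frac{r}{p-1}\|\Sigma_2\|_F^2$, and Jensen to pass from the second moment to the expectation). The only part you leave as a sketch is the deterministic range bound, which is indeed the nontrivial step, but your reduction via $Z=Y\Omega_1^{+}$ and the block decomposition is the right way to carry it out.
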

When applying the the theorem we will use Markov's inequality to argue that
the error bounds hold with sufficiently high probability up to a constant
factor loss. As shown in~\cite{HalkoMT11}, much better bounds on the failure
probability are possible. We will omit the precise bounds here for the sake of
simplicity.

\section{Privacy-preserving sub-routines: Range finder and projection}
\sectionlabel{range-projection}

In order to give a privacy preserving variant of the above algorithm, we will
first need to carefully bound the sensitivity of the range finder and of the
projection step, and bound the effect of the necessary perturbations. We do
this for each step in this section.

\subsection{Privacy-preserving range finder}
\sectionlabel{range-finder}

In this section we present a privacy-preserving algorithm which finds a set of
vectors $Y$ whose span contains most of the spectrum of a given
matrix~$A.$

\begin{figure}[ht]
\begin{boxedminipage}{\textwidth}
{\bf Input:} Matrix $A\in\mathbb{R}^{m\times n},$ target rank $r,$ oversampling
parameter $p$ such that $r+p\le\min\{m,n\},$ privacy parameters $\epsilon,\delta\in(0,1).$
\begin{enumerate}
\item Let $\Omega$ be an $n\times k$ standard Gaussian matrix where $k=p+r.$
\item Compute the $n\times k$ measurement matrix $Y=A\Omega.$
\item Let $N\sim N(0,\rho^2)^{m\times k}$ where
$\rho=2\epsilon^{-1}\sqrt{2k\log\left(4k/\delta\right)}.$
\item Let $\tilde Y=Y+N.$
\item Orthonormalize the columns of $\tilde Y$ and let the result be $W.$
\end{enumerate}
{\bf Output:} Orthonormal $m\times k$ matrix $W.$
\end{boxedminipage}
\caption{Privacy-preserving range finder}
\figurelabel{range-finder}
\end{figure}

\begin{lemma}
\label{lem:range-privacy}
The algorithm in~\figureref{range-finder} satisfies $(\epsilon,\delta)$-differential
privacy.
\end{lemma}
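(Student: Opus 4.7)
My plan is to reduce privacy of the whole algorithm to the Gaussian mechanism applied to the map $A \mapsto Y = A\Omega$, conditional on the internal randomness $\Omega$, and then integrate out $\Omega$ via a good-event argument. The algorithm draws $\Omega$ and $N$ independently of the input $A$, and the output $W$ is a deterministic function of $\tilde Y = A\Omega + N$ via Gram-Schmidt orthonormalization. So by the post-processing property of differential privacy, it suffices to show that the release of $\tilde Y$ is $(\epsilon,\delta)$-differentially private.

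Fix neighboring $A, A'$ and write $A - A' = e_i v^\top$ where $v \in \R^n$ has $\|v\|_2 \le 1$. Conditional on $\Omega$, the Frobenius-norm sensitivity of $A \mapsto Y$ is $\|(A-A')\Omega\|_F = \|v^\top \Omega\|_2$. Since $\Omega$ is standard Gaussian and independent of $A$, the vector $v^\top \Omega \in \R^k$ has $k$ i.i.d.\ $N(0, \|v\|_2^2)$ coordinates, so $\|v^\top \Omega\|_2^2$ is stochastically dominated by a $\chi^2_k$ random variable. The Laurent--Massart tail bound then provides a threshold $T = O\bigl(\sqrt{k + \log(1/\delta)}\,\bigr)$ such that $\|v^\top \Omega\|_2 \le T$ with probability at least $1 - \delta/2$ over $\Omega$.

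Conditional on this good event, $\tilde Y$ viewed as a vector in $\R^{mk}$ is a sensitivity-$T$ signal plus $N(0, \rho^2)^{mk}$ noise, so the Gaussian mechanism (\theoremref{Gaussian}) yields $(\epsilon, \delta/2)$-differential privacy as soon as $\rho \ge T\epsilon^{-1}\sqrt{\log(2.5/\delta)}$. A direct calculation shows that the algorithm's choice $\rho = 2\epsilon^{-1}\sqrt{2k\log(4k/\delta)}$ comfortably meets this bound, since $T^2 = O(k + \log(1/\delta))$ while $\rho^2 = \Theta(k\log(k/\delta)/\epsilon^2)$. The usual good-event decomposition over $\Omega$ then gives $\Pr\{\tilde Y \in S \mid A\} \le e^\epsilon \Pr\{\tilde Y \in S \mid A'\} + \delta/2 + \delta/2$ for every measurable $S$ -- one $\delta/2$ coming from the bad event for $\Omega$ and the other from the Gaussian mechanism -- yielding $(\epsilon,\delta)$-DP for $\tilde Y$, and hence for $W$ by post-processing.

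The delicate step is matching the concentration bound on $\|v^\top \Omega\|_2$ to the specific constants in $\rho$; everything else is routine, combining the independence of $\Omega$ from the data with an invocation of the Gaussian mechanism on a signal of random but data-independent sensitivity, and finally post-processing through Gram-Schmidt.
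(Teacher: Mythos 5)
Your proof follows essentially the same route as the paper's: reduce to the release of $\tilde Y$ by post-processing through Gram--Schmidt, condition on the data-independent $\Omega$, bound the induced sensitivity $\|v^\top\Omega\|_2$ with probability $1-\delta/2$, invoke the Gaussian mechanism at level $(\epsilon,\delta/2)$, and fold the bad event for $\Omega$ into the remaining $\delta/2$. The only deviation is cosmetic --- you bound $\|v^\top\Omega\|_2^2$ by $\chi^2_k$ (Laurent--Massart) concentration where the paper uses coordinatewise Gaussian tails plus a union bound to get the threshold $\sqrt{2k\log(4k/\delta)}$ --- and your admittedly loose matching of $T$ against the algorithm's $\rho$ (which glosses over the extra $\sqrt{\log(2.5/\delta)}$ factor demanded by the Gaussian mechanism) is no looser than the paper's own bookkeeping, where the constant $2$ in $\rho$ is made to play that role.
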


\begin{proof}
We argue that outputting $\tilde{Y}$  preserves $(\epsilon,\delta)$-differential privacy. That outputting $W$ preserves $(\epsilon,\delta)$-differential privacy follows from the fact that differential privacy holds under arbitrary post-processing.

Consider any two neighboring matrices $A, A' \in \mathbb{R}^{m\times n}$
differing in their $i$'th row, and let $Y = A\Omega$ and $Y' = A'\Omega$.
Define $e \in \mathbb{R}^n$ to be $e^T = A_{(i)} - A'_{(i)}$. Note that by the
definition of neighboring, we must have $\|e\|_2 \leq 1$. Observe that for
each row $j \neq i$, we have $Y_{(j)} = Y'_{(j)}$, and define $\hat{e} \in
\mathbb{R}^k$ to be $\hat{e} = Y_{(i)} - Y'_{(i)} = e^T\Omega$. First, we will
give a high-probability bound on $\|\hat{e}\|_2$. Observe that for each $j \in
[k],$ $\hat{e}_j$ is distributed like a standard Gaussian:
\[
\hat{e}_j \sim \sum_{\ell = 1}^n e_\ell \cdot N(0,1) = N\left(0, \sum_{\ell =
1}^n e_\ell^2\right) = N(0,1)\mcom
\]
where we used \factref{gaussian-sum}.
Therefore, we have for any $t \geq 1,$
by standard Gaussian tail bounds,
\[
\Pr\Set{\left|\hat{e}_j\right| \geq t} \leq
2\exp\left(-\frac{t^2}{2}\right)
\]
Taking a union bound
over all $k$ coordinates we have:
\[
\Pr\Set{\max_{j\in[k]}\left|\hat{e}_j\right| \geq
\sqrt{2\log\left(4k/\delta\right)}} \leq
\frac{\delta}{2}
\]
In particular, we have except with probability $\delta/2,$
\begin{equation}\equationlabel{hate}
\|\hat{e}\|_2 \leq \sqrt{2k\log\left(4k/\delta\right)}
\end{equation}
Note that we have set $\rho$ such that
conditioned on \equationref{hate} (which holds with probability at
least $1-\delta/2$) we have the following by \theoremref{Gaussian}:
For every set $S\subseteq\mathbb{R}^{m\times k},$
$\Pr\Set{\tilde{Y} \in S} \leq \exp(\epsilon)\Pr\Set{\tilde{Y'} \in S} + \delta/2$.
Hence, without any conditioning we can say:
\[
\Pr\Set{\tilde{Y} \in S} \leq \exp(\epsilon)\Pr\Set{\tilde{Y'} \in S} + \delta
\]
which completes the proof of privacy.
\end{proof}

\begin{theorem}
Let $A$ be an $m\times n$ matrix with singular values
$\sigma_1\ge\sigma_2\ge\dots\mper$
Then, given $A$ and valid parameters $r,p,\epsilon,\delta,$
the algorithm in \figureref{range-finder} returns a matrix $W$ such that
$W$ satisfies $(\epsilon,\delta)$-differential privacy,
and moreover we have the error bound
\begin{equation}
\E\|A-WW^TA\|_F
\le \left(1+\frac r{p-1}\right)^{\nfrac12}\sqrt{\sum_{j>r}(\sigma_j+\rho)^2}
\mper
\end{equation}
\end{theorem}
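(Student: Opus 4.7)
The privacy guarantee is immediate from \lemmaref{range-privacy} together with closure under post-processing, so the real work is the error bound.

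The plan is to reduce the analysis of the noisy range finder back to a clean application of \theoremref{HMT} by absorbing the Gaussian perturbation into the data matrix. Concretely, I would define the augmented matrix
\[
\tilde A \;=\; [\,A \mid \rho I_m\,] \;\in\; \R^{m\times(n+m)},
\]
together with the augmented Gaussian test matrix
\[
\tilde\Omega \;=\; \begin{pmatrix}\Omega\\ \Omega'\end{pmatrix} \;\in\; \R^{(n+m)\times k},
\]
where $\Omega'$ is an independent standard $m\times k$ Gaussian matrix. Then $\tilde A\tilde\Omega = A\Omega + \rho\Omega'$, and $\rho\Omega'$ has exactly the distribution of $N\sim N(0,\rho^2)^{m\times k}$ used in the algorithm. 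Hence $\tilde Y=Y+N$ is equidistributed with $\tilde A\tilde\Omega$, and $\tilde\Omega$ is a standard Gaussian of the correct dimensions for \theoremref{HMT}.

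Next, I would relate the projector to that of the augmented problem. Since $W$ is obtained by orthonormalizing the columns of $\tilde Y$, we have $WW^T=P_{\tilde Y}$. Writing $\tilde A = [A\mid \rho I_m]$ and projecting block by block gives $P_{\tilde Y}\tilde A = [P_{\tilde Y}A\mid \rho P_{\tilde Y}]$, so that
\[
\|\tilde A - P_{\tilde Y}\tilde A\|_F^2 \;=\; \|A - P_{\tilde Y}A\|_F^2 \;+\; \rho^2\,\|I - P_{\tilde Y}\|_F^2 \;\ge\; \|A - WW^TA\|_F^2.
\]
So it suffices to bound $\mathbb{E}\|\tilde A - P_{\tilde Y}\tilde A\|_F$ by the conclusion of \theoremref{HMT} applied to $\tilde A$ and $\tilde\Omega$.

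The remaining step is to convert the spectrum of $\tilde A$ into the bound stated in the theorem. Since $\tilde A\tilde A^T = AA^T + \rho^2 I_m$, the singular values of $\tilde A$ satisfy $\sigma_j(\tilde A)^2 = \sigma_j(A)^2 + \rho^2$, and in particular $\sigma_j(\tilde A)^2 \le (\sigma_j+\rho)^2$ because $\sigma_j,\rho\ge 0$. Applying \theoremref{HMT} to $\tilde A$ therefore gives
\[
\E\|A-WW^TA\|_F \;\le\; \E\|\tilde A-P_{\tilde Y}\tilde A\|_F \;\le\; \Paren{1+\tfrac{r}{p-1}}^{1/2}\sqrt{\sum_{j>r}\sigma_j(\tilde A)^2} \;\le\; \Paren{1+\tfrac{r}{p-1}}^{1/2}\sqrt{\sum_{j>r}(\sigma_j+\rho)^2},
\]
which is exactly the claimed bound.

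There is no real obstacle; the only subtlety is checking that the equidistribution $\tilde Y \stackrel{d}{=} \tilde A\tilde\Omega$ and the block identity $P_{\tilde Y}\tilde A = [P_{\tilde Y}A\mid \rho P_{\tilde Y}]$ let us invoke \theoremref{HMT} as a black box. Once that reduction is in place, the bound follows from the Weyl-type identity $\sigma_j(\tilde A)^2 = \sigma_j(A)^2 + \rho^2$ and the trivial inequality $a^2+b^2\le(a+b)^2$ for $a,b\ge 0$.
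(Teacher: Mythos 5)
Your proof is correct and follows essentially the same route as the paper: augment $A$ to $[A\mid\rho I_{m}]$, observe that $\tilde Y$ is distributed as a Gaussian sketch of the augmented matrix, invoke \theoremref{HMT}, and drop the appended columns to pass back to $\|A-WW^TA\|_F$. The only difference is cosmetic: the paper bounds the augmented singular values via Weyl's perturbation inequality, whereas you compute them exactly from $\tilde A\tilde A^T=AA^T+\rho^2 I_m$, which is if anything slightly cleaner.
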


\begin{proof}
Privacy follows from \lemmaref{range-privacy}. Let us therefore argue the
second part of the theorem. Consider the $m\times (n+m)$ matrix
\[
A' = [ A \mid \rho I_{m\times m} ]\mcom
\]
where $I_{m\times m}$ is the $m\times m$ identity matrix. Let $\Omega'$ denote
a random $(m+n)\times k$ Gaussian matrix.
Note that
\[
\tilde Y \sim A'\Omega'\mper
\]
That is, $\tilde Y$ is distributed the same way as $Y'=A'\Omega'.$ Here, we're
using the fact that $\rho N(0,1)= N(0,\rho^2).$

On the other hand, by \theoremref{HMT}, we know that $Y'$ is a good range for
$A'$ in the sense that
\begin{equation}\label{eq:range-error}
\E\|A'-P_{Y'}A'\|_F
\le \left(1+\frac r{p-1}\right)^{\nfrac12}\sqrt{\sum_{j>r}\sigma_j'^2}\mper
\end{equation}
Here, $\sigma_j'$ denotes the $j$-th largest singular value of $A'.$

\begin{claim}\claimlabel{one}
$ \|A-P_{Y'}A\|_F
\le \|A'-P_{Y'}A'\|_F
$
\end{claim}
\begin{proof}
The claim is immediate, because we can obtain $A$ from $A'$ by truncating the
last $m$ columns. Hence, the approximation error can only decrease.
\end{proof}

\begin{claim}\claimlabel{two}
For all $j,$ we have $ |\sigma_j-\sigma_j'|\le\rho\mper$
\end{claim}

\begin{proof}
Consider the matrix $A_0 = [ A \mid 0_{m\times m} ]$ where $0_{m\times m}$ is
the all zeros matrix. Note that
\[
A' = A_0 + E \quad\text{with}\quad E=[ 0_{m\times n} \mid \rho I_{m\times m}]\mper
\]
Also, $\sigma_j=\sigma_j(A_0),$ since we just appended an all zeros matrix.
On the other hand,
\[
\|E\|_2=\|\rho I_{m\times m}\|_2= \rho\mper
\]
Hence, by Weyl's perturbation bound (\theoremref{weyl})
\[
|\sigma_j-\sigma_j'|
= |\sigma_j(A_0)-\sigma_j'|
\le \|E\|_2=\rho\mper
\]
\end{proof}

Combining the previous claims with (\ref{eq:range-error}), we have
\[
\E\|A-P_{Y'}A\|_F
 \le \E\|A'-P_{Y'}A'\|_F
 \le
\left(1+\frac r{p-1}\right)^{\nfrac12}\sqrt{\sum_{j>r}(\sigma_j+\rho)^2}
\mper
\]
Since $Y'$ and $\tilde Y$ are identically distributed, the same claim is true
when replacing $P_{Y'}$ by $P_{\tilde Y}.$ Furthermore, $P_{\tilde Y}=WW^T$
and so the claim follows.
\end{proof}

\begin{corollary}\corollarylabel{range-finder}
Let $A\in\R^{m\times n}$ be as in the previous theorem. Assume that $m\le n$
and run the algorithm with $p\ge r+1.$ Then, with probability $99/100$,
\[
\|A-WW^TA\|_F
\le O\left(\sqrt{\sum_{j>r}\sigma_j^2} + \sqrt{\rho^2m}\right)\mper
\]
\end{corollary}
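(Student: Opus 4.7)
The plan is to combine the expectation bound of the preceding theorem with Markov's inequality and then simplify the right-hand side using the hypotheses $m \le n$ and $p \ge r+1$.

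First I would invoke the theorem directly to obtain
\[
\E\|A-WW^TA\|_F \le \left(1+\frac{r}{p-1}\right)^{1/2}\sqrt{\sum_{j>r}(\sigma_j+\rho)^2}.
\]
Under the assumption $p \ge r+1$, the prefactor $(1+r/(p-1))^{1/2}$ is bounded by an absolute constant (at most $\sqrt{2}$), so it can be absorbed into the $O(\cdot)$.

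Next I would break up the sum inside the square root. Using the elementary inequality $(\sigma_j + \rho)^2 \le 2\sigma_j^2 + 2\rho^2$, I get
\[
\sum_{j>r}(\sigma_j + \rho)^2 \le 2\sum_{j>r}\sigma_j^2 + 2\rho^2 \cdot (\min\{m,n\}-r).
\]
Here I use the fact that $A$ has at most $\min\{m,n\}$ nonzero singular values, so the sum on the left has at most $\min\{m,n\} - r$ nontrivial terms. Since $m \le n$ by hypothesis, $\min\{m,n\} - r \le m$, giving the bound $2\rho^2 m$ on the second term. Applying $\sqrt{a+b} \le \sqrt{a} + \sqrt{b}$ then yields
\[
\E\|A-WW^TA\|_F \le O\left(\sqrt{\sum_{j>r}\sigma_j^2} + \sqrt{\rho^2 m}\right).
\]

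Finally I would apply Markov's inequality to the nonnegative random variable $\|A - WW^TA\|_F$: with probability at least $99/100$ the value is at most $100$ times its expectation, and this constant factor is absorbed into the $O(\cdot)$ notation, completing the proof. There is no real obstacle here; the only thing to be slightly careful about is correctly bounding the number of singular values beyond index $r$ using the assumption $m \le n$, which is exactly where the hypothesis is used.
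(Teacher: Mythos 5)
Your proposal is correct and follows essentially the same route as the paper: the expectation bound from the theorem, the elementary splitting of $(\sigma_j+\rho)^2$ into $O(\sigma_j^2+\rho^2)$ together with the fact that there are at most $m\le n$ singular values, the inequality $\sqrt{a+b}\le\sqrt{a}+\sqrt{b}$, and Markov's inequality. If anything, your version is slightly cleaner, since you make the count of at most $m$ terms explicit (the paper leaves it implicit) and avoid the paper's case analysis by using $(\sigma_j+\rho)^2\le 2\sigma_j^2+2\rho^2$ directly.
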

\begin{proof}
By Markov's inequality and the previous theorem, we have with probability
$99/100,$
\[
\|A-P_WA\|_F
\le O\left(\sqrt{\sum_{j>r}(\sigma_j+\rho)^2}\right)\mper
\]
But note that $(\sigma_j+\rho)^2=\sigma_j^2+2\sigma_j\rho+\rho^2\le
3\sigma_j^2+3\rho^2.$ This is because either $\sigma_j>\rho\ge1$ and thus
$\sigma_j^2\ge\sigma_j\rho$ or else $\rho\ge\sigma$ in which case
$\rho^2\ge\sigma_j\rho.$ The claim follows by using that
$\sqrt{a+b}\le\sqrt{a}+\sqrt{b}$ for non-negative $a,b\ge0.$
\end{proof}

\subsection{Privacy-preserving projections}
\sectionlabel{projection}

In the previous section we showed a privacy-preserving algorithm that finds a small number of orthonormal vectors $W$ such that $A$ is well-approximated by
$WW^TA.$ To obtain a privacy preserving low-rank approximation algorithm we
still need to show how to carry out the projection step in a
privacy-preserving fashion. We analyze the error of the projection
step in terms of the magnitude of the maximum entry of each column of $W.$ This serves to bound the sensitivity of the matrix multiplication operation. The smaller the entries of $W,$ the smaller the over all error that we incur.

\begin{figure}[h]
\begin{boxedminipage}{\textwidth}
{\bf Input:} Matrix $A\in\mathbb{R}^{m\times n},$ matrix
$W\in\R^{m\times k}$ whose columns have norm at most~$1,$
privacy parameters $\epsilon,\delta\in(0,1).$
\begin{enumerate}
\item Let $W=[w_1\mid w_2\mid\cdots\mid w_k]$ and for each $i\in[k]$
let $\alpha_i=\|w_i\|_\infty$ denote the maximum magnitude entry in $w_i.$
\item Let $N$ be a random $k\times n$ matrix where $N_{ij}\sim N(0,\alpha_i^2\rho^2)$
for $i\in[k],j\in[n]$ and $\rho=2\epsilon^{-1}\sqrt{8k\ln(4k/\delta)\ln(2/\delta)}.$
\item Compute the matrix $B=W(W^TA+N).$
\end{enumerate}
{\bf Output:} Matrix $B$ of rank $k.$
\end{boxedminipage}
\caption{Privacy-preserving projection}
\figurelabel{projection}
\end{figure}

\begin{lemma}
The output $B$ of the algorithm satisfies $(\epsilon,\delta)$-differential
privacy.
\end{lemma}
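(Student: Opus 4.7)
The plan is to peel off the final left-multiplication by $W$ using the post-processing property of differential privacy. Since $B = W(W^T A + N)$ is a deterministic function of $\tilde B := W^T A + N$ (given the fixed, public input $W$), by the post-processing fact stated just before \theoremref{composition} it suffices to prove that the release of $\tilde B$ is $(\epsilon,\delta)$-differentially private.

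Next I would carry out a per-row sensitivity calculation. Fix neighboring $A,A'\in\R^{m\times n}$ differing only in row $i$ by a vector $e$ with $\|e\|_2\le 1$. A direct computation gives
\[
\bigl(W^T(A - A')\bigr)_{(j)} \;=\; w_{ij}\cdot e^T,
\]
whose Euclidean norm is at most $|w_{ij}|\cdot\|e\|_2 \le \|w_j\|_\infty = \alpha_j$. Thus if we view releasing $\tilde B$ as the concurrent release of its $k$ rows, the $j$-th row has $\ell_2$-sensitivity at most $\alpha_j$ and is perturbed by i.i.d.\ Gaussian noise $N(0,\alpha_j^2\rho^2)^n$. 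Applying the Gaussian mechanism (\theoremref{Gaussian}) with $c=\alpha_j$ and noise scale $\alpha_j\rho$, each row release is $(\epsilon_0,\delta_0)$-differentially private whenever $\rho \geq \epsilon_0^{-1}\sqrt{\log(1.25/\delta_0)}$, independently of the value of $\alpha_j$ (the sensitivity and noise scale cancel).

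Finally I would combine the $k$ row-level guarantees via the advanced composition theorem (\theoremref{composition}). Setting $\delta_0 = \delta/(2k)$ and slack parameter $\delta' = \delta/2$ gives $k\delta_0 + \delta' = \delta$, and the composition theorem upgrades to an overall $(\epsilon,\delta)$-guarantee as long as $\sqrt{2k\ln(2/\delta)}\,\epsilon_0 + 2k\epsilon_0^2 \le \epsilon$. Choosing $\epsilon_0$ so that each of the two terms is at most $\epsilon/2$ and plugging into the Gaussian-mechanism requirement yields a sufficient value of $\rho$ of the form $O\!\left(\epsilon^{-1}\sqrt{k\log(k/\delta)\log(1/\delta)}\right)$, which is matched with room to spare by the value $\rho = 2\epsilon^{-1}\sqrt{8k\ln(4k/\delta)\ln(2/\delta)}$ fixed in the algorithm. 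The only conceptual subtlety — and the point worth highlighting in the proof — is the per-row noise calibration: because the $j$-th row of $W^T A$ depends on $A_{(i)}$ only through the scalar $w_{ij}$, scaling noise by $\alpha_j$ row-by-row yields the \emph{same} privacy guarantee as if all rows had unit sensitivity, while injecting strictly less total noise than a uniform calibration to $\max_j \alpha_j$ would; this is exactly what makes the downstream Frobenius-error bound tight enough to improve over randomized response.
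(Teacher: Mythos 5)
Your proposal is correct and follows essentially the same route as the paper's proof: reduce to releasing $W^TA+N$ by post-processing, bound the $\ell_2$-sensitivity of the $j$-th row by $\alpha_j$ (your identity $(W^T(A-A'))_{(j)} = w_{ij}\,e^T$ is in fact a cleaner statement of the paper's computation), apply the Gaussian mechanism per row with noise scaled by $\alpha_j$, and combine via the advanced composition theorem with $\delta_0=\delta/(2k)$ and $\delta'=\delta/2$. The only difference is bookkeeping: the paper fixes $\epsilon_0=\epsilon/\sqrt{8k\ln(2/\delta)}$ explicitly and verifies the composition bound, whereas you argue the algorithm's $\rho$ dominates the required noise scale, which checks out.
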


\begin{proof}
We will argue that releasing $W^TA + N$ preserves
$(\epsilon,\delta)$-differential privacy. That releasing $B$ preserves
differential privacy follows from the fact that differential privacy does not
degrade under arbitrary post-processing.  Fix any two neighboring matrices $A,
A'$ differing in their $i$'th row. Let $E = A - A'$ and
let $e^T = A_{(i)} - A'_{(i)} = E_{(i)}.$
Recall by the definition of neighboring, $\|e\|_2 \leq 1$, and for all
other $j \neq i$, $\|E_j\|_2 = 0$. For any $j \in [k]$, consider the $j$'th
row of $W^TE$:
\[
\|(W^TE)_{(j)}\|_2 = \sqrt{\sum_{\ell=1}^n W_{\ell, j}^2\cdot
e_\ell^2} \leq \alpha_j \|e\|_2 = \alpha_j\mper
\]
Hence, by \theoremref{Gaussian}, releasing $(W^TE)_{(j)} + g^T$ where $g\sim
N(0, \alpha_j^2\rho^2)^n$ preserves $(\frac{\epsilon}{\sqrt{8k\ln
(2/\delta)}}, \frac{\delta}{2k})$-differential privacy.  Finally, we apply
\theoremref{composition} to see that releasing each of the $k$ rows of $W^TE$
preserves $(\epsilon',k(\delta/2k) + \delta/2) =
(\epsilon',\delta)$-differential privacy for: $$\epsilon' \leq
\sqrt{2k\ln(2/\delta)}\cdot\frac{\epsilon}{\sqrt{8k\ln (2/\delta)}} +
2k\left(\frac{\epsilon}{\sqrt{8k\ln (2/\delta)}}\right)^2 \leq \epsilon$$
as desired.
\end{proof}

\begin{theorem}\theoremlabel{projection}
The algorithm above returns a matrix $B$ such that $B$ satisfies
$(\epsilon,\delta)$-differential privacy and moreover with probability
$99/100,$
\[
\|A-B\|_F\le \|A-WW^TA\|_F + O\left(\sqrt{k\sum_{i=1}^k \alpha_i^2\rho^2
n}\right)\mper
\]
In particular if $\max_i\alpha_i=\alpha,$ we have with the same probability,
\[
\|A-B\|_F\le \|A-WW^TA\|_F
+ O\left(\frac{\alpha k\log(k/\delta)\sqrt{n}}\epsilon\right)\mper
\]
\end{theorem}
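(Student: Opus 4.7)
The plan is to proceed by triangle inequality and then control the noise matrix $N$ using only elementary Frobenius norm estimates; privacy is already handled by the preceding lemma, so all that remains is the utility bound.

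First I would rewrite the output as
\[
B \;=\; W(W^TA + N) \;=\; WW^TA + WN,
\]
which immediately gives
\[
\|A-B\|_F \;\le\; \|A - WW^TA\|_F \;+\; \|WN\|_F
\]
by the triangle inequality. The first term is exactly the ``approximation error of the range'' that we carry over from the analysis of the range finder, so we only need to bound $\|WN\|_F$.

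Next I would exploit the fact that $W$ came out of Gram--Schmidt on $\tilde Y$ and therefore has orthonormal columns, i.e.\ $W^TW = I_k$. By the unitary invariance of the Frobenius norm,
\[
\|WN\|_F^2 \;=\; \mathrm{tr}(N^T W^T W N) \;=\; \mathrm{tr}(N^TN) \;=\; \|N\|_F^2.
\]
So the whole task reduces to bounding $\|N\|_F$, where $N$ is a $k\times n$ matrix with independent entries $N_{ij}\sim N(0,\alpha_i^2\rho^2)$.

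The noise bound is then a one-line expectation computation followed by Markov. Summing variances,
\[
\E\|N\|_F^2 \;=\; \sum_{i=1}^k \sum_{j=1}^n \alpha_i^2\rho^2 \;=\; n\rho^2 \sum_{i=1}^k \alpha_i^2,
\]
so with probability at least $99/100$, $\|N\|_F = O\bigl(\sqrt{n\rho^2\sum_{i}\alpha_i^2}\bigr)$, which (after absorbing constants) yields the first displayed error bound. For the second bound, I would specialize to $\max_i \alpha_i = \alpha$: then $\sum_i\alpha_i^2 \le k\alpha^2$, giving $\|N\|_F = O(\alpha\rho\sqrt{kn})$, and substituting $\rho = O(\sqrt{k\log(k/\delta)\log(1/\delta)}/\epsilon)$ from the algorithm's parameter setting produces the stated $O\!\left(\alpha k \log(k/\delta)\sqrt{n}/\epsilon\right)$ term.

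The proof is essentially a bookkeeping exercise; there is no real obstacle. The only ``design choice'' hiding the content is the decision to inject noise into $W^TA$ (a $k\times n$ object) rather than into $A$ itself, combined with the observation that left-multiplying the noise by $W$ is Frobenius-norm preserving because $W$ is orthonormal. That is what allows the noise cost to scale as $\sqrt{kn}$ rather than $\sqrt{mn}$, and it is the only nontrivial step; all remaining manipulations are the triangle inequality and a Markov bound on a Gaussian matrix.
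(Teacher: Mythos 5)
There is a genuine gap: your key step assumes $W^TW=I_k$, but that is not a hypothesis of the theorem. The projection algorithm (and hence the theorem) is stated for an arbitrary input matrix $W\in\R^{m\times k}$ whose columns merely have norm at most $1$, and in the composed \PFP algorithm for $C$-coherent matrices the projection step is run on the \emph{pruned} matrix $W'$, obtained by zeroing out large entries of the orthonormal range-finder output. Pruning destroys orthogonality (and shortens columns), so in exactly the setting where the theorem is invoked your identity $\|WN\|_F^2=\mathrm{tr}(N^TW^TWN)=\|N\|_F^2$ is unjustified; with only the column-norm hypothesis one can have $\|WN\|_F$ as large as $\sqrt{k}\,\|N\|_F$ (e.g.\ when all columns of $W$ are nearly parallel, $W^TW$ has top eigenvalue close to $k$). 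The paper instead bounds $\|WN\|_F\le\|W\|_F\|N\|_F\le\sqrt{k}\,\|N\|_F$ by submultiplicativity, which is where the extra $\sqrt{k}$ inside the first displayed bound comes from; everything else in your argument (the decomposition $B=WW^TA+WN$, triangle inequality, $\E\|N\|_F^2=n\rho^2\sum_i\alpha_i^2$, Jensen plus Markov) coincides with the paper's proof. The fix is purely local: replace the orthonormality step by the submultiplicativity bound and you recover the stated theorem.

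It is worth noting that in the special case where the input really is orthonormal --- which is what happens in the $\mu_0$-coherent analysis, where the pruning parameter is $\alpha=1$ and the range-finder output is passed through unchanged --- your observation is correct and actually sharper than the paper's bound: it removes the extra $\sqrt{k}$ factor, an improvement the authors themselves flag as desirable in a margin note. So your argument proves a stronger statement under an added orthonormality hypothesis, but does not prove the theorem in the generality in which the paper uses it.
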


\begin{proof}
\[
\|A-B\|_F
=\|A-W(W^TA+N)\|_F
=\|A-WW^TA - WN\|_F
\le\|A-WW^TA\|+\|WN\|_F
\]
But $\|W\|_F=\sqrt{k}$ so that, by \factref{sub},
\[
\|WN\|_F\le\|W\|_F\|N\|_F=\sqrt{k}\cdot \|N\|_F.
\]
\Mnote{the previous step doesn't seem tight. I'd like to get rid of the extra
$\sqrt{k}$ so that our algorithm is never worse than randomized response.}
On the other hand, by Jensen's inequality and linearity of expectation,
\[
\E\|N\|_F
\le \sqrt{\E\|N\|_F^2}
= \sqrt{\sum_{i,j} \E N_{ij}^2}
= \sqrt{\sum_{i=1}^k c_k^2\rho^2n}\mper
\]
The claim now follows from Markov's inequality.
\end{proof}

Note that the quantities $\alpha_i$ are always bounded by~$1,$ since all $w_i$'s are unit
vectors. In the next section, we will show that under certain incoherence assumptions, we will have (or will be able to enforce) the condition that the $\alpha_i$ values are bounded significantly below $1$.

%
%

\section{Incoherent matrices}
\sectionlabel{incoherent}

Intuitively speaking, a matrix is \emph{incoherent} if its left singular vectors
have low correlation with the standard basis vectors. There are multiple ways to
formalize this intuition. Here, we will work with two natural notions of
coherence. In both cases we will be able to show that we can find---in a
privacy-preserving way---projection operators that have small entries. As
demonstrated in the previous section, this directly leads to improvements over
randomized response.

\subsection{$C$-coherent matrices}

In this section we work with matrices~$A$ in which row norms do not deviate by
too much from the typical row norm.
Another way to look at this condition is that coordinate projections
provide little spectral information about the matrix $A.$ From this angle the
condition we need can be interpreted as \emph{low coherence} in the sense that
the singular vectors of $A$ that correspond to large singular values
must be far from the standard basis.

\begin{definition}[$C$-coherence]
\definitionlabel{C-coherence}
We say that a matrix $A\in\R^{m\times n}$ is \emph{$C$-coherent} if
\[
\max_{i\in[m]}\|e_i^TA\|\le C\cdot\frac{\|A\|_F}{\sqrt{m}}\mper
\]
Note that we have $1\le C\le \sqrt{m}.$
\end{definition}

The next lemma shows that sparse vectors have poor correlation with the matrix
in the above sense. We say a vector $w$ is \emph{$\ell$-sparse} if it has at
most $\ell$ nonzero coordinates.

\begin{lemma}
\lemmalabel{sparse-vector}
Let $A\in\R^{m\times n}$ be a $C$-coherent matrix.
Let $w$ be an $\ell$-sparse unit vector in $\R^m.$
Then,
\[
\|w^TA\| \le \frac{C\sqrt{\ell}\|A\|_F}{\sqrt{m}}\mper
\]
\end{lemma}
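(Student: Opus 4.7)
The plan is to reduce the bound for a general $\ell$-sparse $w$ to the per-row bound provided by the $C$-coherence assumption, using Cauchy--Schwarz to control the sum over the support.

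First, let $S \subseteq [m]$ denote the support of $w$, so $|S| \le \ell$, and write $w = \sum_{i \in S} w_i e_i$. Then by linearity and the triangle inequality,
\[
\|w^T A\| \;=\; \Big\| \sum_{i \in S} w_i\, e_i^T A \Big\| \;\le\; \sum_{i \in S} |w_i|\, \|e_i^T A\|\mper
\]

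Next, I would apply Cauchy--Schwarz to the sum over the $\ell$-element support $S$, which splits the bound into the norm of $w$ restricted to $S$ (which is just $\|w\|_2 = 1$) and the $\ell_2$ aggregate of the row norms over $S$:
\[
\sum_{i \in S} |w_i|\, \|e_i^T A\| \;\le\; \Big(\sum_{i \in S} w_i^2\Big)^{1/2} \Big(\sum_{i \in S} \|e_i^T A\|^2\Big)^{1/2} \;\le\; \sqrt{\ell}\cdot \max_{i \in [m]} \|e_i^T A\|\mper
\]

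Finally, I invoke \definitionref{C-coherence}, which gives $\max_i \|e_i^T A\| \le C\|A\|_F/\sqrt{m}$, and combining yields $\|w^T A\| \le C\sqrt{\ell}\,\|A\|_F/\sqrt{m}$, as desired. There is no serious obstacle here; the only step that requires any care is applying Cauchy--Schwarz over the support rather than over all of $[m]$, which is what produces the $\sqrt{\ell}$ factor rather than the weaker $\sqrt{m}$ factor one would get by naively bounding $\|w\|_1 \le \sqrt{m}\|w\|_2$.
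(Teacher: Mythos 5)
Your proof is correct and follows essentially the same route as the paper's: decompose $w$ over its support, apply the triangle inequality to get $\sum_{i\in S}|w_i|\,\|e_i^TA\|$, then use Cauchy--Schwarz over the (at most) $\ell$-element support together with the $C$-coherence bound on each row norm. The only cosmetic difference is that you bound the aggregate $\bigl(\sum_{i\in S}\|e_i^TA\|^2\bigr)^{1/2}$ by $\sqrt{\ell}\max_i\|e_i^TA\|$ while the paper bounds $\sum_i|\alpha_i|\le\sqrt{\ell}$; both yield the identical estimate.
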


\begin{proof}
Since $w$ is $\ell$-sparse we can write it as $w=\sum_{i=1}^\ell\alpha_ie_i$
where $e_1,\dots,e_\ell$ are $\ell$ distinct standard basis vectors
and $\sum_i \alpha_i^2=1.$
Hence,
\[
\|w^TA\| = \left\|\sum_{i=1}^\ell\alpha_i e_i^TA\right\|
\le \sum_{i=1}^\ell \alpha_i\|e_i^TA\|
\le \frac{C\sqrt{\ell}\|A\|_F}{\sqrt{m}}\mper
\]
In the last step we used the Cauchy-Schwarz inequality and the fact that $A$ is
$C$-coherent.
\end{proof}

\begin{lemma}
Let $\alpha>0.$
Let $A\in\R^{m\times n}$ be a $C$-coherent matrix.
Let $w\in\R^m$ be a unit vector and suppose $w_\alpha$ is the vector obtained
from $w$ by zeroing all coordinates greater than $\alpha.$ Then,
\[
w_\alpha^TA = w^TA + e
\]
where $e$ is a vector of norm
\[
\|e\|\le\frac{C\|A\|_F}{\alpha\sqrt{m}}\mper
\]
\end{lemma}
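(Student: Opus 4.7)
The plan is to reduce this bound to the previous sparse-vector lemma (\lemmaref{sparse-vector}) by looking at the complementary part of $w$. Write $w' = w - w_\alpha$, so that $w'$ is supported exactly on the set $S = \{i \in [m] : |w_i| > \alpha\}$ of large coordinates, and observe that $e = (w_\alpha - w)^T A = -w'^T A$. Thus the task reduces to upper-bounding $\|w'^T A\|$.

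Next I would argue that $w'$ is sparse. Since $w$ is a unit vector,
\[
1 \;\ge\; \sum_{i \in S} w_i^2 \;>\; \alpha^2 |S|\mcom
\]
so $|S| < 1/\alpha^2$, meaning $w'$ is $\ell$-sparse for $\ell \le 1/\alpha^2$. Moreover $\|w'\| \le \|w\| = 1$.

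Finally I would invoke the previous lemma. Assuming $w' \ne 0$ (otherwise $e=0$ and the bound is trivial), the vector $u = w'/\|w'\|$ is a unit vector that is also $\ell$-sparse with $\ell \le 1/\alpha^2$. Applying \lemmaref{sparse-vector} to $u$,
\[
\|u^T A\| \;\le\; \frac{C \sqrt{\ell}\,\|A\|_F}{\sqrt{m}} \;\le\; \frac{C\,\|A\|_F}{\alpha\sqrt{m}}\mper
\]
Multiplying by $\|w'\| \le 1$ yields $\|e\| = \|w'^T A\| \le C\|A\|_F / (\alpha \sqrt{m})$, which is exactly the claimed bound.

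There is essentially no obstacle here: the only subtlety is making sure to interpret ``zeroing all coordinates greater than $\alpha$'' as thresholding by magnitude, which is what makes the sparsity count $|S| < 1/\alpha^2$ go through cleanly from the unit norm constraint. Everything else is a direct application of the previous lemma together with a homogeneity argument to pass from the non-unit vector $w'$ to the unit vector $w'/\|w'\|$.
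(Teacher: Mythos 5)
Your proof is correct and follows the same route as the paper: decompose $w - w_\alpha$, note it is $\ell$-sparse with $\ell \le 1/\alpha^2$ since $w$ is a unit vector, and invoke \lemmaref{sparse-vector}. Your explicit normalization of $w - w_\alpha$ to a unit vector (which the sparse-vector lemma formally requires) is a nice touch of care that the paper's terser proof glosses over, but it is the same argument.
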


\begin{proof}
Note that $w-w_\alpha$ is an $\ell$-sparse vector with $\ell\le1/\alpha^2.$
Here we used that $w$ is a unit vector and hence there can be at most
$1/\alpha^2$ coordinates larger than $\alpha.$ The lemma now follows directly
from \lemmaref{sparse-vector}.
\end{proof}

The next lemma is a straightforward extension of the previous one for the case
where we multiply $A$ by a matrix $W$ rather than a single vector.
\begin{lemma}\lemmalabel{truncation-error}
Let $\alpha>0.$
Let $A\in\R^{m\times n}$ be a $C$-coherent matrix.
Let $W\in\R^{m\times k}$ be a matrix whose columns have unit length.
Suppose $W_\alpha$ is the matrix obtained
from $W$ by zeroing all entries greater than $\alpha.$ Then,
\[
WW_\alpha^TA = WW^TA + E
\]
where $E$ is a matrix of Frobenius norm
\[
\|E\|_F\le\frac{Ck\|A\|_F}{\alpha\sqrt{m}}\mper
\]
\end{lemma}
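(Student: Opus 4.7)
The plan is to reduce this to the vector version (the preceding lemma) applied columnwise, and then use submultiplicativity of the Frobenius norm to assemble the $k$ per-column bounds into the matrix bound.

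First I would write $E = WW_\alpha^TA - WW^TA = W(W_\alpha - W)^TA$, so the goal becomes bounding $\|W(W_\alpha - W)^TA\|_F$. By \factref{sub},
\[
\|E\|_F \le \|W\|_F \cdot \|(W_\alpha - W)^TA\|_F.
\]
Since each of the $k$ columns of $W$ is a unit vector, $\|W\|_F = \sqrt{k}$. It remains to bound $\|(W_\alpha - W)^TA\|_F$.

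Next, write $W = [w_1 \mid \cdots \mid w_k]$, so that $W_\alpha = [(w_1)_\alpha \mid \cdots \mid (w_k)_\alpha]$ where $(w_i)_\alpha$ is $w_i$ with coordinates of magnitude greater than $\alpha$ zeroed. Then the $i$-th row of $(W_\alpha - W)^TA$ is exactly $((w_i)_\alpha - w_i)^TA$, which is the error vector from the previous lemma applied to the unit vector $w_i$. Hence each such row has norm at most $C\|A\|_F/(\alpha\sqrt{m})$, and
\[
\|(W_\alpha - W)^TA\|_F^2 = \sum_{i=1}^k \|((w_i)_\alpha - w_i)^TA\|^2 \le k \cdot \frac{C^2 \|A\|_F^2}{\alpha^2 m}.
\]
Taking square roots gives $\|(W_\alpha - W)^TA\|_F \le \sqrt{k}\cdot C\|A\|_F/(\alpha\sqrt{m})$, and multiplying by $\|W\|_F = \sqrt{k}$ yields the claimed bound $\|E\|_F \le Ck\|A\|_F/(\alpha\sqrt{m})$.

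There is no real obstacle here: the main content has already been done in \lemmaref{sparse-vector} and its unit-vector corollary, and the matrix lemma is essentially a bookkeeping exercise combining a rowwise application of the vector lemma with submultiplicativity. The only minor point to watch is that the two factors of $\sqrt{k}$ (one from the $k$ rows of $(W_\alpha - W)^TA$, one from $\|W\|_F$) combine to give the factor $k$ in the final bound; this is tight for the Frobenius argument and matches the stated conclusion.
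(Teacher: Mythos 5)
Your proposal is correct and is essentially the paper's own argument: the paper likewise applies the preceding vector lemma to each column of $W$ to get a matrix $E'=W_\alpha^TA-W^TA$ whose rows each have norm at most $C\|A\|_F/(\alpha\sqrt m)$, bounds $\|E'\|_F\le C\sqrt k\|A\|_F/(\alpha\sqrt m)$, and then multiplies by $\|W\|_F=\sqrt k$ via \factref{sub} to obtain the factor $k$. No gaps; your bookkeeping of the two $\sqrt k$ factors matches the paper exactly.
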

\begin{proof}
By the previous lemma,
we have
$W_\alpha^TA = W^TA + E'\mcom$
where every row of $E'$ has Euclidean norm
$C\|A\|_F/\alpha\sqrt{m}.$  Hence,
$\|E'\|_F\le C\sqrt{k}\|A\|_F/\alpha\sqrt{m}.$
But then
\[
WW_\alpha^TA = WW^TA + WE'\mper
\]
Put $E=WE'$ and note that $\|E\|_F\le\|W\|_F\|E'\|_F=\sqrt{k}\|E'\|_F.$ The
lemma follows.
\end{proof}

The previous lemma quantifies what happens if we replace $WW^TA$ by
$WW_\alpha^TA.$ Working with $W_\alpha^T$ for small $\alpha$ instead of $W^T$
will decrease the sensitivity of the computation of $W_\alpha^TA.$  On the
other hand, by the previous lemma we have an expression for the error
resulting from the truncation step.

\subsection{Strong coherence}

Here we introduce and work with the notion of \emph{$\mu_0$-coherence} which
is a standard notion of coherence. As we will see in \sectionref{relation}, it
is a stronger notion than $C$-coherence. Consequently, the results we will be
able to obtain using $\mu_0$-coherence are stronger than our previous results
on $C$-coherence in certain aspects.

\begin{definition}[$\mu_0$-coherence]
\definitionlabel{mu0}
Let $U$ be an $m\times r$ matrix with orthonormal columns and $r\le n.$
Recall, that $P_U=UU^T.$ The \emph{$\mu_0$-coherence} of $U$ is defined as
\begin{equation}
\mu_0(U)
= \frac mr\max_{1\le j\le m} \|P_Ue_j\|^2
= \frac mr\max_{1\le j\le m} \|U_{(j)}\|^2\mper
\end{equation}
Here, $e_j$ denotes the $j$-th $m$-dimensional standard basis vector and
$U_{(j)}$ denotes the $j$-th row of $U.$

The \emph{$\mu_0$-coherence} of an $m\times n$ matrix $A$ of rank $r$ given in
its singular value decomposition $U\Sigma V^T$ where $U\in\R^{m\times r}$ is
defined as $\mu_0(U).$
\end{definition}

\begin{fact}
$1 \le \mu_0(U)\le m$
\end{fact}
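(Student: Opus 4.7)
The plan is to establish both bounds by working with the two equivalent expressions $\|P_U e_j\|^2 = \|U_{(j)}\|^2$ given in the definition, and exploiting the orthonormality of the columns of $U$.

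For the upper bound, I would use the projection expression. Since $P_U = UU^T$ is an orthogonal projection operator, it is nonexpansive, so $\|P_U e_j\|^2 \le \|e_j\|^2 = 1$ for every standard basis vector $e_j$. Therefore $\max_j \|P_U e_j\|^2 \le 1$, which immediately gives
\[
\mu_0(U) = \frac{m}{r}\max_{1 \le j \le m} \|P_U e_j\|^2 \le \frac{m}{r} \le m,
\]
using $r \ge 1$ in the last inequality.

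For the lower bound, I would switch to the row expression. Since $U$ has orthonormal columns, $\|U\|_F^2 = r$ (each of the $r$ columns is a unit vector). Writing the Frobenius norm as a sum over rows gives
\[
\sum_{j=1}^m \|U_{(j)}\|^2 = \|U\|_F^2 = r,
\]
so the average of $\|U_{(j)}\|^2$ over $j \in [m]$ equals $r/m$. The maximum is at least the average, hence $\max_j \|U_{(j)}\|^2 \ge r/m$, and
\[
\mu_0(U) = \frac{m}{r}\max_{1 \le j \le m} \|U_{(j)}\|^2 \ge \frac{m}{r} \cdot \frac{r}{m} = 1.
\]

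There is no real obstacle here; the only subtlety is recognizing that the two formulas in the definition give complementary angles on the problem (projection nonexpansiveness on one side, Frobenius norm accounting on the other), and that the identity $\|P_U e_j\|^2 = \|U^T e_j\|^2 = \|U_{(j)}\|^2$ used implicitly in the definition follows from $U^TU = I_r$.
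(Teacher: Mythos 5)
Your proof is correct and follows essentially the same route as the paper: the lower bound is the identical averaging argument (the rows of $U$ have squared norms summing to $r$, so the maximum is at least $r/m$), and the upper bound is again a direct row-norm bound. The only difference is cosmetic: the paper bounds each squared row norm by $r$ to get $\mu_0(U)\le m$, while your nonexpansiveness argument gives the slightly sharper $\|U_{(j)}\|^2\le 1$ and hence $\mu_0(U)\le m/r\le m$.
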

\begin{proof}
Since $U$ is orthonormal, there must always exists a row of square norm $r/m.$
On the other hand, no row of $U$ has squared norm larger than $r.$
\end{proof}

The above notion is used extensively throughout the literature in the context
of matrix completion and low rank approximation, e.g., in Candes and Recht
\cite{CandesR09}, Keshavan et al. \cite{KeshavanMO10}, Talwalkar and
Rostamizadeh \cite{TalwalkarR10}, Mohri and Talwalkar \cite{MohriT10}.
Motivated by the Netflix problem, Candes and Tao \cite{CandesT10} study matrix
completion for matrices satisfying a stronger incoherence assumption than
small $\mu_0$-coherence.

Our goal from here on is to show that if we run our range finding algorithm from
\sectionref{range-finder} on a low-coherence matrix it will produce a
projection matrix with small entries. This result (presented in
\lemmaref{mu0-projection}) requires several technical lemmas.

The first technical step is a lemma showing that vectors that lie in the range
of an incoherent matrix must have small $\ell_\infty$-norm.

\begin{lemma}\lemmalabel{mu0-infty}
Let $U$ be an orthonormal $m\times r$ matrix. Suppose $w\in\range(U)$ and
$\|w\|=1.$ Then,
\[
\|w\|_\infty^2\le \frac rm\cdot\mu_0(U)\mper
\]
\end{lemma}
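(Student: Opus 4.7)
The plan is to express $w$ in terms of $U$, then bound each coordinate of $w$ by bounding an inner product with a row of $U$ via Cauchy--Schwarz, and finally invoke the definition of $\mu_0$-coherence.

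More concretely, since $w \in \range(U)$, I would write $w = Uc$ for some vector $c \in \R^r$. Because $U$ has orthonormal columns, $U^T U = I_r$, and hence $\|w\|^2 = c^T U^T U c = \|c\|^2$, so the unit-norm hypothesis on $w$ gives $\|c\| = 1$.

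Next, for any coordinate $i \in [m]$, I would compute $w_i = e_i^T w = e_i^T U c = \langle U_{(i)}, c\rangle$, where $U_{(i)}$ is the $i$-th row of $U$ viewed as a vector in $\R^r$. Cauchy--Schwarz then yields $|w_i| \le \|U_{(i)}\| \cdot \|c\| = \|U_{(i)}\|$. Taking the maximum over $i$ gives $\|w\|_\infty \le \max_j \|U_{(j)}\|$, and squaring both sides together with the definition $\mu_0(U) = (m/r)\max_j \|U_{(j)}\|^2$ delivers exactly $\|w\|_\infty^2 \le (r/m)\cdot \mu_0(U)$.

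I do not anticipate a real obstacle here: the statement is essentially a direct unpacking of the definition of $\mu_0$-coherence, and the only inequality invoked is Cauchy--Schwarz applied to the representation of $w$ in the orthonormal basis provided by the columns of $U$. The only subtlety worth being explicit about is that $\|c\| = 1$ relies on the columns of $U$ being orthonormal (so that $U$ is an isometry on $\R^r$); without this, one would only get a bound in terms of $\|c\|$, which could be larger than $1$.
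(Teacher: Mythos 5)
Your proposal is correct and is essentially the paper's own argument: both express $w$ as a combination of the columns of $U$ with a unit-norm coefficient vector (you write $w=Uc$, the paper writes $w=\sum_i\alpha_i u_i$), bound each coordinate via Cauchy--Schwarz by the norm of the corresponding row of $U$, and then invoke the definition of $\mu_0$-coherence. No gaps.
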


\begin{proof}
Let $u_1,\dots,u_r$ denote the columns of $U.$ By our set of assumptions, there
exist $\alpha_1,\dots,\alpha_r\in\R$ such that
\[
w = \sum_{i=1}^r \alpha_iu_i
\qquad\text{and}\qquad
\sum_i\alpha_{i}^2=1\mper
\]
Therefore, denoting the
$j$-th entry of $w$ by $w_j$ and the $j$-th entry of $u_i$ by $u_{ij},$ we have
\begin{align*}
|w_j|^2
 = \left(\sum_{i=1}^r\alpha_i u_{i j}\right)^2
& \le \left(\sum_{i=1}^r\alpha_i^2\right)
\left(\sum_{i=1}^ru_{i j}^2\right) \tag{by Cauchy-Schwarz} \\
& = \|U_{(j)}\|^2\mper
\end{align*}
In particular $\|w\|_\infty^2 \le \max_{j\in[m]}\|U_{(j)}\|^2.$
On the other hand,
using \definitionref{mu0},
\[
\|w\|_\infty^2\le \max_{j\in[m]}\|U_{(j)}\|^2
= \frac rm\cdot\mu_0(U)\mper
\]
The lemma follows.
\end{proof}

We will need the following geometric lemma: If we start with a small
orthonormal set of vectors of low coherence and we append few random unit
vectors, then the span of the resulting set of vectors has a low coherence
basis.

\begin{lemma}\lemmalabel{mu0-perturb}
Let $u_1,\dots,u_r\in\R^m$ be orthonormal vectors. Pick unit vectors
$n_1,\dots,n_k\in\mathbb{S}^{m-1}$ uniformly at random. Assume that
\begin{equation}\equationlabel{mlarge}
m\ge c_0 k(r+k)\log(r+k)
\end{equation}
where $c_0$ is a sufficiently large constant.
Then, there exists a
set of orthonormal vectors $v_1,\dots,v_{r+k}\in\R^m$ such that
$\mathrm{span}\{v_1,\dots,v_{r+k}\}=\mathrm{span}\{u_1,\dots,u_r,n_1,\dots,n_k\}$ and
furthermore, with probability $99/100,$
\[
\mu_0([v_1\mid \dots\mid v_{r+k}])
\le
2\mu_0([u_1\mid\dots\mid u_k])+ O\left( \frac{k\log m}{r}\right)
\]
\end{lemma}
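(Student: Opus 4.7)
\medskip

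\noindent\textbf{Proof plan.}
The plan is to take $v_1,\dots,v_r$ to be $u_1,\dots,u_r$ themselves, and to obtain $v_{r+1},\dots,v_{r+k}$ by orthogonalizing the random vectors $n_1,\dots,n_k$ first against $U=[u_1|\dots|u_r]$ and then against each other. Concretely, set $\Pi=I-UU^T$ and $N=[n_1|\dots|n_k]$, take a QR factorization $\Pi N = QR$, and let $v_{r+j}$ be the $j$-th column of $Q$. This manifestly spans the same subspace as $\{u_1,\dots,u_r,n_1,\dots,n_k\}$ and is orthonormal. Writing $V=[U\mid Q]$, each row satisfies $\|V_{(j)}\|^2 = \|U_{(j)}\|^2 + \|Q_{(j)}\|^2$, so the problem reduces to bounding $\|Q_{(j)}\|^2$, because $\|U_{(j)}\|^2 \le \mu_0(U)\cdot r/m$ by \definitionref{mu0}.

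From $Q = \Pi N R^{-1}$ I get $\|Q_{(j)}\|^2 \le \|(\Pi N)_{(j)}\|^2 \cdot \|R^{-1}\|_2^2$, so the two quantities I must control are (a) the row norms of $\Pi N$ and (b) the operator norm of $R^{-1}$, i.e.\ $1/\sigma_{\min}(\Pi N)$. For (a) I bound each coordinate $((I-P_U)n_i)_j$ by the triangle inequality as $|n_i[j]| + |U_{(j)}\,U^Tn_i|$. Standard concentration for the uniform distribution on $\mathbb{S}^{m-1}$ gives $|n_i[j]| = O(\sqrt{\log m/m})$ uniformly, while Cauchy--Schwarz yields $|U_{(j)}\,U^Tn_i|\le \|U_{(j)}\|\cdot\|U^Tn_i\|\le \sqrt{\mu_0(U)r/m}\cdot O(\sqrt{r\log m /m})$. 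Under hypothesis \equationref{mlarge} the second term is dominated by the first, so every squared coordinate is $O(\log m/m)$ with high probability, and summing over the $k$ columns gives $\|(\Pi N)_{(j)}\|^2 = O(k\log m /m)$. Taking a union bound over $j\in[m]$ only costs another $\log m$ factor, which is absorbed in the $O(\cdot)$.

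For (b), I observe that the Gram matrix $G=(\Pi N)^T(\Pi N)$ is close to the identity: the diagonal entries are $\|\Pi n_i\|^2 = 1 - \|U^T n_i\|^2 = 1 - O(r/m)$, and the off-diagonal entries $\langle \Pi n_i, \Pi n_j\rangle = \langle n_i,n_j\rangle - \langle U^Tn_i, U^Tn_j\rangle$ are $O(\sqrt{\log(r+k)/m})$ in absolute value by the usual concentration of inner products of independent uniform unit vectors together with the bound on $\|U^Tn_i\|$. Condition \equationref{mlarge} with a sufficiently large constant $c_0$ makes the off-diagonal mass per row $o(1)$, so Gershgorin gives $\sigma_{\min}(\Pi N)^2 \ge 1/2$ and hence $\|R^{-1}\|_2^2 \le 2$. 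Combining, $\|Q_{(j)}\|^2 \le O(k\log m /m)$, so
\[
\mu_0(V) = \frac{m}{r+k}\max_j \|V_{(j)}\|^2 \le \frac{r}{r+k}\mu_0(U) + O\!\left(\frac{k\log m}{r+k}\right),
\]
which implies the claimed bound $2\mu_0(U)+O(k\log m/r)$ with room to spare (the factor of $2$ absorbs the constant $\|R^{-1}\|_2^2$).

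The main obstacle is the simultaneous, high-probability control of all the scalar concentration events I appeal to: per-coordinate bounds on the $n_i$, the $\ell_2$ bounds on $U^Tn_i$, and the near-orthogonality of the $\Pi n_i$'s. Each is standard, but I must take a union bound over $m$ coordinates and over $O(k^2)$ pairs while keeping enough slack so that \equationref{mlarge} still drives the off-diagonal Gershgorin estimate below $1/2$; this is what forces the $c_0 k(r+k)\log(r+k)$ lower bound on $m$ in the hypothesis.
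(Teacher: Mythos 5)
Your construction (orthonormalize $\Pi N$ in one shot via QR, keep $U$ as the first $r$ columns, and control $\max_j\|Q_{(j)}\|$ through the row norms of $\Pi N$ and $\|R^{-1}\|_2=1/\sigma_{\min}(\Pi N)$) is a genuinely different route from the paper, which instead runs Gram--Schmidt one random vector at a time and propagates a recursive bound $\mu_0(V_{t+1})\le(1+O(\tfrac{(r+k)\log(r+k)}{m}))\mu_0(V_t)+O(\tfrac{\log m}{t})$ using \lemmaref{mu0-infty} at every step. Your one-shot Gram/Gershgorin control of $\sigma_{\min}(\Pi N)$ is fine under \equationref{mlarge}, and the skeleton of the argument is sound. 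However, one step as written is wrong: the claim that, under \equationref{mlarge}, the cross term $|U_{(j)}U^Tn_i|\le\sqrt{\mu_0(U)r/m}\cdot O(\sqrt{r\log m/m})$ is dominated by $|n_i[j]|=O(\sqrt{\log m/m})$. That domination needs $\mu_0(U)\,r^2=O(m)$, which \equationref{mlarge} does not give: the lemma must cover highly coherent $U$ (e.g.\ $\mu_0(U)$ as large as $m/r$, say $U$ containing standard basis vectors), and there the cross term exceeds $\sqrt{\log m/m}$ by a factor up to $\sqrt{r}$. Consequently your intermediate bound $\|(\Pi N)_{(j)}\|^2=O(k\log m/m)$, and hence the displayed bound on $\mu_0(V)$, is not justified by the triangle-inequality argument you give.

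The gap is repairable in either of two ways, and it is worth seeing both. First, the intermediate bound is actually true, but for a different reason: $(\Pi n_i)_j=\langle \Pi e_j,\,n_i\rangle$ and $\|\Pi e_j\|_2\le1$, so this coordinate is a $1$-Lipschitz function of the uniform vector $n_i$ with median $0$, and L\'evy's lemma (\theoremref{levy}) plus a union bound over the $mk$ pairs $(j,i)$ gives $|(\Pi n_i)_j|=O(\sqrt{\log m/m})$ directly, with no need to split off the $P_Un_i$ term at all (this exploits the cancellation that your triangle inequality discards). Alternatively, keep your crude split but charge the cross term to the coherence budget rather than to the $k\log m/r$ budget: it contributes at most $O\bigl(\tfrac{m}{r+k}\cdot k\cdot\tfrac{r\mu_0(U)}{m}\cdot\tfrac{(r+k)\log(r+k)}{m}\bigr)=O(\mu_0(U)/c_0)$ to $\mu_0(V)$ by \equationref{mlarge}, which is absorbed by the factor $2$ in front of $\mu_0(U)$. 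With either fix your proof goes through and is, if anything, cleaner than the paper's iterative argument, since it avoids compounding multiplicative losses over the $k$ Gram--Schmidt steps.
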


\begin{proof}
We will construct the basis iteratively using the Gram-Schmidt orthonormalization
algorithm starting with the partial orthonormal basis $u_1,\dots,u_r.$ The
algorithm works as follows: At
iteration $i$ we have obtained a partial orthonormal basis $v_1,\dots,v_t$
where $t=r+i-1.$ We then pick a random unit vector $v\in\mathbb{S}^{m-1}$ and
let $v' = \sum_{i=1}^t v_iv_i^Tv.$ Put
\[
v_{t+1} = \frac{v-v'}{\|v-v'\|}\mper
\]
Let $V_t=[v_1\mid\dots\mid v_t]$ and $V_{t+1}=[V_t\mid v_{t+1}].$
Our goal is to bound $\|v_{t+1}\|_\infty^2$ as this will directly lead to a bound
on $\mu_0(V_{t+1})$ in terms of $V_t.$ Summing up this bound over $t$ will
lead to a bound on $\mu_0(V_{r+k})$ which is what the lemma is asking for.

Let us start with a two simple claims that follow from measure concentration
on the sphere. The first one bounds the $\ell_\infty$-norm of a random unit
vector.
\begin{claim}\claimlabel{one}
$\|v\|_\infty^2 \le O\left(\frac{\log m}m\right)$ with probability
$1-1/200k.$
\end{claim}
\begin{proof}
It is not hard to show that for every $i\in[m],$ the coordinate projection
$f_i(v)=v_i$ is a Lipschitz function on the sphere. Moreover, the median of
$f_i$ is $0$ by spherical symmetry. By measure
concentration (\theoremref{levy}), $\Pr\Set{\left|f_i\right|>
\epsilon}\le O(\exp(-\epsilon^2m/2)).$ Setting $\epsilon=O( \sqrt{(\log
m+\log k)/m})=O(\sqrt{\log(m)/m})$
and taking a union bound over all $m$ coordinates completes the proof.
\end{proof}

The second claim we need bounds the Euclidean norm of $v'.$

\begin{claim}\claimlabel{two}
$\|v'\|^2\le O\left(\frac {r+k} m\right)$ with probability $1-1/200k.$
\end{claim}
\begin{proof}
Proceeding as in proof of the previous claim, we note that for each $i\in[t],$
$f_i(v)=\langle v_i,v\rangle$ is a Lipschitz function on the sphere with
median~$0.$ Applying \theoremref{levy} with $\epsilon=O(\sqrt{(\log t+\log
k)/m}),$ it follows that with probability $1-1/200kt,$
\[
f_i(v)^2\le O\left(\frac{\log k+\log t}m\right)\mper
\]
Taking a union bound over all
$i\in[t]$ we have with probability $1-1/200k,$
\[
\|v'\|^2=\sum_{i=1}^t\langle v_i,v\rangle^2
\le O\left(\frac{t(\log k+\log t)}m\right)
= O\left(\frac{(r+k)\log(r+k))}m\right)
\mcom
\]
where we used that $t\le r+k.$
\end{proof}

On the one hand, note
that $v'$ is in the span of $v_1,\dots,v_t$ by definition. Hence,
\lemmaref{mu0-infty} directly implies
that
\begin{equation}
\equationlabel{three}
\|v'\|_\infty^2 \le\frac{t}{m}\cdot\|v'\|^2\cdot\mu_0(V_t)\mper
\end{equation}
Hence, combining \equationref{three} with \claimref{two},
we have with probability $1-1/200k,$
\begin{equation}\equationlabel{v'infty}
\|v'\|_\infty^2
\le O\left(\frac{t(r+k)\log(r+k)}{m^2}\cdot\mu_0(V_t)\right)\mper
\end{equation}
On the other hand, we can bound $\|v_{t+1}\|_\infty^2$ as follows:
\[
\|v_{t+1}\|_\infty^2
= \frac{\|v-v'\|_\infty^2} {\|v-v'\|^2}
\le \frac{\|v\|_\infty^2+2\|v\|_\infty\|v'\|_\infty+\|v'\|_\infty^2} {\|v-v'\|^2}
\le \frac{3(\|v\|_\infty^2+\|v'\|_\infty^2)} {\|v-v'\|^2}
\mper
\]
By \claimref{two} we have that with probability $1-1/200k,$
\[
\|v-v'\|^2=\|v\|^2+\|v'\|^2-2\langle v,v'\rangle
\ge 1 - 2\|v'\|^2
\ge 1 - O\left(\frac{(r+k)\log(r+k)}m\right)\mper
\]
In the first inequality above we used that $\langle v,v'\rangle = \sum_{i=1}^t \langle
v_i,v\rangle^2=\|v'\|^2.$
We then applied \claimref{two} in the second inequality.
By \equationref{mlarge}, $m$ is sufficiently large so that
\begin{equation}\equationlabel{inverse}
\frac1{\|v-v'\|^2} \le O(1)\mper
\end{equation}

Combining \equationref{v'infty} with \equationref{inverse} and applying
\claimref{one}, we conclude that with
with probability at least $1-1/100k,$
\begin{align*}
\|v_{t+1}\|_\infty^2
&
\le
O \left(\frac{\log m}m +
\frac{t(r+k)\log(r+k)}{m^2}\cdot\mu_0(V_t)
\right) 
\end{align*}
But when the above bound on $\|v_{t+1}\|_\infty^2$ holds, then we must
have
\begin{equation}\equationlabel{plusone}
\mu_0(V_{t+1})
\le \mu_0(V_t) + \frac m{t+1}\|v_{t+1}\|_\infty^2
\le \left(1+ O\left(\frac{(r+k)\log(r+k)}{m}\right) \right) \mu_0(V_t) +
O \left(\frac{\log m}{t}\right)
\end{equation}
Taking a union bound over all $k$ steps, we find that with probability
$99/100,$ \equationref{plusone} is true at all steps of the Gram-Schmidt
algorithm. Assuming that this event occurs, we have:
\begin{align*}
\mu_0(V_{r+k})
& \le \left(1+ O\left(\frac{(r+k)\log(r+k)}{m}\right) \right)^k \mu_0(V_r) +
O \left(\frac{k\log m}{r}\right)  \\
& \le 2\mu_0(V_r) + O\left(
\frac{k\log m}{r}\right)
\tag{since $m\gg k(r+k)\log(r+k)$ by \equationref{mlarge}}
\end{align*}
This finishes our proof of the lemma since $\mu_0(V_r)=\mu_0([u_1\mid\dots\mid u_r])$ by
definition.
\end{proof}

The choice of failure probability in the previous lemma was rather arbitrary
and stronger bounds can be achieved.
We finally arrive at the main lemma in this section.

\begin{lemma}\lemmalabel{mu0-projection}
Let $A$ be an $m\times n$ matrix of rank $r.$
Let $\Omega\sim N(0,1)^{n\times k}$ with $k\le r$
denote a random standard Gaussian matrix and define $Y=A\Omega.$
Assume that $m\ge c_0kr\log r$ for sufficiently large constant $c_0.$
Further, let $\sigma>0$ and
$N\sim N(0,\sigma^2)^{m\times k}$ denote a random Gaussian matrix with i.i.d.
entries sampled from $N(0,\sigma^2).$ Put $\tilde Y = A\Omega + N$
and let
$w_1,\dots,w_k$ be an orthonormal basis for the range of $\tilde Y.$
Then, with probability $99/100,$
\[
\max_{i\in[k]}\|w_i\|_\infty
\le
\sqrt{\frac {4r}m\cdot\mu_0(A)}
+ O\left(\sqrt{\frac{k\log m}m}\right)\mper
\]
\end{lemma}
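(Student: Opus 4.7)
The plan is to reduce the problem to \lemmaref{mu0-perturb} by using the left singular vectors of $A$ as the starting orthonormal system and the columns of the noise matrix $N$ as the random perturbations.

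\textbf{Step 1: Setting up the ambient subspace.} Write $A=U\Sigma V^T$ with $U\in\R^{m\times r}$, so that $\range(A)=\range(U)$ and $\mu_0(A)=\mu_0(U)$. Every column of $Y=A\Omega$ lies in $\range(U)$, and every column of $\tilde Y=Y+N$ lies in the subspace
\[
S \defeq \range(U) + \mathrm{span}\{n_1,\dots,n_k\},
\]
where $n_1,\dots,n_k$ are the columns of $N$. Consequently $\range(\tilde Y)\subseteq S$, and $S$ has dimension at most $r+k$.

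\textbf{Step 2: Applying \lemmaref{mu0-perturb}.} Since the columns of $N$ are i.i.d.~Gaussian, the normalized directions $\hat n_i=n_i/\|n_i\|$ are i.i.d.~uniform on $\mathbb{S}^{m-1}$, and $\mathrm{span}\{n_1,\dots,n_k\}=\mathrm{span}\{\hat n_1,\dots,\hat n_k\}$. Take the columns $u_1,\dots,u_r$ of $U$ as the initial orthonormal set. Under the assumption $m\ge c_0 kr\log r$ and $k\le r$, we have $m\ge c_0' k(r+k)\log(r+k)$ (with a slightly larger constant), so the hypothesis of \lemmaref{mu0-perturb} is met. It provides, with probability $99/100$, an orthonormal basis $v_1,\dots,v_{r+k}$ of $S$ satisfying
\[
\mu_0([v_1\mid\dots\mid v_{r+k}])\le 2\mu_0(U)+O\!\left(\frac{k\log m}{r}\right).
\]

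\textbf{Step 3: Bounding $\|w_i\|_\infty$.} Each $w_i$ is a unit vector in $\range(\tilde Y)\subseteq S$, so by \lemmaref{mu0-infty} applied to the orthonormal matrix $V\defeq[v_1\mid\dots\mid v_{r+k}]$,
\[
\|w_i\|_\infty^2 \le \frac{r+k}{m}\cdot\mu_0(V)
\le \frac{r+k}{m}\left(2\mu_0(A)+O\!\left(\frac{k\log m}{r}\right)\right).
\]
Using $k\le r$, the right-hand side is at most $\frac{4r}{m}\mu_0(A)+O\!\left(\frac{k\log m}{m}\right)$, and taking square roots together with $\sqrt{a+b}\le\sqrt a+\sqrt b$ yields the claimed bound.

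\textbf{Main obstacle.} The delicate point is Step 2: we need a coherence-preserving extension of the basis of $\range(A)$ by the noise directions. This is exactly what \lemmaref{mu0-perturb} provides, but we must verify that feeding it Gaussian columns is legitimate, which reduces to the rotational invariance of the Gaussian and the fact that $\mu_0$ depends only on the span, not on the length of the perturbing vectors. Once this reduction is clear the rest is bookkeeping to absorb lower order terms into the stated bound.
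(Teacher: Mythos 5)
Your proposal is correct and follows essentially the same route as the paper's own proof: reduce to \lemmaref{mu0-perturb} with the left singular vectors of $A$ as the initial orthonormal system and the normalized columns of $N$ as the uniform random directions (justified by rotational invariance), then bound $\|w_i\|_\infty$ via \lemmaref{mu0-infty} applied to the resulting low-coherence basis of the containing subspace. The only differences are cosmetic: you make explicit the verification that $m\ge c_0kr\log r$ with $k\le r$ implies the hypothesis of \lemmaref{mu0-perturb}, which the paper leaves implicit.
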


\begin{proof}
Let $U$ denote the left singular factor of $A.$ Let $u_1,\dots,u_r$ denote the
columns of $U.$ We have,
\[
\mathrm{span}(\{w_1,\dots,w_k\})=\range(\tilde Y),
\]
since $w_1,\dots,w_k$ is an orthonormal basis for the range of $\tilde Y$ by
construction. On the other hand,
\[
\mathrm{range}(Y)\subseteq\range(A)=\mathrm{span}(\{u_1,\dots,u_r\})\mper
\]
Since $\tilde Y = Y + N$ this implies that
$\mathrm{range}(\tilde Y) \subseteq\mathrm{span}\{u_1,\dots,u_r,n_1,\dots
n_k\}\mcom$
where $n_1,\dots,n_k$ are the columns of $N$ normalized such that $\|n_i\|=1.$
By assumption $m$ is large enough so that we can apply \lemmaref{mu0-perturb}.
Thus we obtain orthonormal vectors $v_1,\dots,v_{r+k}$ satisfying
\[
\mathrm{range}(\tilde Y) \subseteq\mathrm{span}\{v_1,\dots,v_{r+k}\}
\]
and the matrix $V$ whose columns are $v_1,\dots,v_{r+k}$ has coherence
\[
\mu_0(V)\le 2\mu_0(U)
+ O\left(\frac{k\log m}{r}\right)
\]
with probability $99/100.$ In particular,
$w_i \in \range(V)$ for all $i\in[k]\mper$ Therefore,
by \lemmaref{mu0-infty}, we have that
\[
\max_{i\in[k]} \|w_i\|_\infty^2
\le \frac{r+k}{m}\cdot\mu_0(V)
\le
\frac{2(r+k)}{m}\cdot\mu_0(U)
+ O\left( \frac{(r+k)k\log m}{rm}\right)\mper
\]
Since $k\le r$ and $\mu_0(A)=\mu_0(U),$ we conclude that
\[
\max_{i\in[k]} \|w_i\|_\infty
\le
\sqrt{\frac{4r}{m}\cdot\mu_0(A)}
+ O\left(\sqrt{\frac{k\log m}m}\right)\mper
\]
The lemma follows.
\end{proof}

\begin{remark}
We remark that the previous lemma is essentially tight. Indeed, under the
given assumption on $A$ there could be a left singular vector of
$\ell_\infty$-norm $\sqrt{r\mu_0(A)/m}.$ The above lemma implies that
we are never more than a $O(\sqrt{\log m})$-factor away from this bound.
\end{remark}

\subsection{Relation between $C$-coherence and $\mu_0$-coherence}
\sectionlabel{relation}

Here we show that the assumption of small $\mu_0$-coherence is strictly
stronger than that of small $C$-coherence assuming the rank of the matrix is
not too large.

\begin{lemma}
Let $A$ be an $m\times n$ matrix of rank $r.$ Then, $A$ is $C$-coherent where
\[
C \le \sqrt{r\mu_0(A)}\mper
\]
\end{lemma}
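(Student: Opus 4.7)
The plan is to expand $\|e_i^T A\|$ directly using the singular value decomposition and then combine two elementary bounds: one that extracts the Frobenius norm from the singular values, and one that uses the $\mu_0$-coherence to control the row norms of the left singular factor.

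First I would write $A = U\Sigma V^T$ with $U \in \R^{m\times r}$, $V \in \R^{n\times r}$ having orthonormal columns and $\Sigma = \mathrm{diag}(\sigma_1,\dots,\sigma_r)$. Since $V^T V = I_r$, we have
\[
\|e_i^T A\|^2 \;=\; \|e_i^T U \Sigma V^T\|^2 \;=\; \|e_i^T U \Sigma\|^2 \;=\; \sum_{j=1}^r \sigma_j^2\, U_{ij}^2\mper
\]
Then I would bound the right-hand side by pulling out the largest $U_{ij}^2$ in the sum, noting $\max_j U_{ij}^2 \le \sum_j U_{ij}^2 = \|U_{(i)}\|^2$, and using $\sum_j \sigma_j^2 = \|A\|_F^2$ to obtain
\[
\|e_i^T A\|^2 \;\le\; \|U_{(i)}\|^2 \cdot \|A\|_F^2\mper
\]

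Finally, I would invoke the definition of $\mu_0$-coherence (\definitionref{mu0}), which gives $\|U_{(i)}\|^2 \le (r/m)\,\mu_0(U) = (r/m)\,\mu_0(A)$ for every $i \in [m]$. Substituting this in and taking square roots yields
\[
\max_{i \in [m]} \|e_i^T A\| \;\le\; \sqrt{r\mu_0(A)} \cdot \frac{\|A\|_F}{\sqrt{m}}\mcom
\]
which is exactly the definition of $C$-coherence with $C \le \sqrt{r\mu_0(A)}$.

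There is no real obstacle here; the only thing to be careful about is that $V^T V = I_r$ (used to eliminate $V$), which holds because the SVD is taken in its reduced form where the number of columns of $U$ and $V$ equals the rank $r$. Everything else is a one-line application of the definitions of $\|A\|_F$, $\mu_0(A)$, and $C$-coherence.
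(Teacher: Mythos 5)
Your proof is correct and follows essentially the same route as the paper: expand $\|e_i^TA\|^2$ via the reduced SVD, arrive at the bound $\|e_i^TA\|^2 \le \|U_{(i)}\|^2\,\|A\|_F^2$, and then invoke the definition of $\mu_0$-coherence. The only cosmetic difference is the intermediate inequality---you pull out $\max_j U_{ij}^2$ and use $\sum_j \sigma_j^2 = \|A\|_F^2$, whereas the paper reaches the same bound via $\ell_2 \le \ell_1$ followed by Cauchy--Schwarz---so this is a minor simplification rather than a different argument.
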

\begin{proof}
Let the SVD of $A$ be $U\Sigma V^T$ and denote the right singular vectors by
$v_1,\dots,v_r.$ Extend them arbitrarily to an orthonormal basis of $\R^n,$
denoted $v_1,\dots,v_n.$ We then have for every $j\in[m],$
\begin{equation}\equationlabel{ejTA}
\|e_j^TA\|^2
 =
\sum_{i=1}^n\langle e_j^TA,v_i\rangle^2
 =
 \sum_{i=1}^r \left(\sigma_i\langle e_j, u_i\rangle\right)^2
 \le
\left(\sum_{i=1}^r \left|\sigma_i\langle e_j, u_i\rangle\right|\right)^2\mcom
\end{equation}
where we used that the $\ell_2^2$-norm of a vector is bounded by the
$\ell_1^2$-norm. On the other hand,
\begin{equation}
\left(\sum_{i=1}^r \left|\sigma_i\langle e_j, u_i\rangle\right|\right)^2
\le
\left(\sum_{i=1}^r \left|\sigma_i\right|\left|\langle e_j, u_i\rangle\right|\right)^2
\le
\left(\sum_{i=1}^r\sigma_i^2\right)\left(\sum_{i=1}^r \langle
e_j,u_i\rangle^2\right)
= \|A\|_F^2\cdot\|U_{j}\|^2\mcom
\end{equation}
where we used Cauchy-Schwarz in the inequality. It follows that
\[
\max_{j\in[m]}\|e_j^TA\|^2
= \|A\|_F^2 \max_{j\in[m]} \|U_{j}\|^2
= \|A\|_F^2 \frac{r\mu_0(A)}{m}
\mper
\]
Taking square roots on both sides and rearranging, we find
\[
\frac{\sqrt{m}}{\|A\|_F}\cdot \max_{j\in[m]}\|e_j^TA\|
\le \sqrt{r\mu_0(A)}\mper
\]
Note that the left hand side is exactly the smallest $C$ for which $A$ is
$C$-coherent. This proves the lemma.
\end{proof}

Recall that \lemmaref{sparse-vector} showed that the singular vectors
corresponding to large singular values of a $C$-coherent matrix $A$ cannot be
too sparse. In particular, the top singular vectors must have
small $\mu_0$-coherence as a result. However, we cannot rule out that there
are singular vectors corresponding to small singular values that do have
large coordinates.

\section{Privacy-preserving low rank approximations}

In this section we compose the range finder, projection and truncation step to
get a private low rank approximation algorithm suitable for matrices of low
coherence.

\begin{figure}[ht]
\begin{boxedminipage}{\textwidth}
{\bf Input:} Matrix $A\in\mathbb{R}^{m\times n},$ target rank $r\ge 2,$ oversampling
parameter $p\ge 2,$ pruning parameter $\alpha>0,$
privacy parameters $\epsilon,\delta\in(0,1).$
\begin{enumerate}
\item {\sc Range finder:}
\itemlabel{range-finder}
Run the range finder (\figureref{range-finder})
on $A$ with sampling parameter $k=p+r$ and privacy
parameters $(\epsilon/2,\delta/2).$ Let the output be denoted by~$W.$
\item {\sc Pruning:}
\itemlabel{pruning}
Let $W'$ be the matrix obtained from $W$ by zeroing out all entries larger
than $\alpha.$
\item {\sc Projection:}
\itemlabel{projection}
Run the projection algorithm (\figureref{projection}) on input $A,W'$ and
privacy parameters $(\epsilon/2,\delta/2).$
Let $B$ denote the output of the projection algorithm.
\end{enumerate}
{\bf Output:} Matrix $B$ of rank $k=(r+p).$
\end{boxedminipage}
\caption{The private find and project algorithm (\PFP) for computing
privacy-preserving low-rank approximations}
\figurelabel{find-project}
\end{figure}

\begin{lemma}
The \PFP algorithm satisfies $(\epsilon,\delta)$-differential privacy.
\end{lemma}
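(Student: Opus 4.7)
The plan is to decompose the \PFP algorithm into its three stages and argue about each using the privacy guarantees already established, then combine via composition.

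First, I would note that step~\ref{item:range-finder} (the range finder) is $(\epsilon/2,\delta/2)$-differentially private directly by \lemmaref{range-privacy} applied with privacy parameters $(\epsilon/2,\delta/2)$. The output is the matrix $W$. Next, step~\ref{item:pruning} (the pruning) takes $W$ and zeroes out entries above $\alpha$, which is a deterministic function of $W$ that uses no additional information from $A$. Consequently, by post-processing closure of differential privacy (the first \factref{} following the definition), the matrix $W'$ produced at the end of step~\ref{item:pruning} is also $(\epsilon/2,\delta/2)$-differentially private as a function of $A$.

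Next I would observe that the projection step~\ref{item:projection} is $(\epsilon/2,\delta/2)$-differentially private with respect to $A$ \emph{for any fixed} input matrix $W'$ whose columns have norm at most $1$. This requires a quick check that the input precondition of \figureref{projection} is satisfied after pruning: the columns of $W$ have unit norm by construction in the range finder, and zeroing entries cannot increase the Euclidean norm, so each column of $W'$ has norm at most $1$. The projection algorithm then satisfies $(\epsilon/2,\delta/2)$-differential privacy with respect to $A$ by the projection privacy lemma (applied with privacy parameters $(\epsilon/2,\delta/2)$).

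Finally, to combine, I would invoke adaptive composition: the pair $(W',B)$ is produced by first running the $(\epsilon/2,\delta/2)$-DP range-finder-plus-pruning on $A$, then running the $(\epsilon/2,\delta/2)$-DP projection on $A$ using the (already released) $W'$ as an auxiliary input. By basic composition (\theoremref{composition} with $k=2$), the joint release $(W',B)$ is $(\epsilon,\delta)$-differentially private with respect to $A$. Since $B$ is a function of $(W',B)$, the output of \PFP satisfies $(\epsilon,\delta)$-differential privacy by post-processing. There is no real obstacle here; the only subtlety worth flagging is verifying that the projection lemma's precondition (unit-norm columns) survives pruning, which it does trivially.
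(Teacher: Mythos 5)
Your argument is correct and is essentially the paper's own proof, which simply cites composition together with the privacy guarantees of the subroutines; you have just spelled out the details (post-processing for the pruning step, the check that pruned columns still have norm at most $1$, and adaptive composition of the two $(\epsilon/2,\delta/2)$-private calls). No gaps.
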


\begin{proof}
This follows directly from composition and the privacy guarantee achieved by
the subroutines.
\end{proof}

The next theorem details the performance of \PFP on $C$-coherent matrices. In
particular, it shows that in a natural range of parameters it improves
significantly over randomized response (input perturbation).

\begin{theorem}[Approximation for $C$-coherent matrices]
\theoremlabel{C-approx}
There is an $(\epsilon,\delta)$-differentially private algorithm that
given a $C$-coherent matrix $A\in\R^{m\times n}$ and parameters
$r\ge2,p\ge2$ produces a rank $k=r+p$ matrix $B$ such that with
probability $9/10,$
\begin{equation}\equationlabel{C-precise}
\|A-B\|_F
\le
O\left(
\sqrt{1+\frac r{p-1}}\cdot \|A-A_r\|_F
+
\frac{\sqrt{km}\log(k/\delta)}\epsilon
+\sqrt{C\|A\|_F}k\left(\frac nm\right)^{1/4}
\frac{\log(k/\delta)^{1/2}}{\epsilon^{1/2}}
\right)\mper
\end{equation}
In particular, the second error term is $o\left(\sqrt{kn\log(k/\delta)}/\epsilon\right),$ whenever
\begin{equation}\equationlabel{range-m}
m = o(n) \qquad\text{and}\qquad
\frac{Ck\|A\|_F\sqrt{\log(k/\delta)}}{\sqrt{n}} = o\left(\sqrt{m}\right)\mper
\end{equation}
\end{theorem}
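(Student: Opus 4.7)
The plan is to compose the privacy guarantees of the two subroutines, then decompose the Frobenius-norm error into three pieces that can be bounded separately, and optimize the pruning threshold $\alpha$ at the end.

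For privacy, the range finder (\figureref{range-finder}) run with parameters $(\epsilon/2,\delta/2)$ is $(\epsilon/2,\delta/2)$-differentially private by \lemmaref{range-privacy}. The pruning step is pure post-processing of the range finder output and therefore privacy-preserving. The projection subroutine (\figureref{projection}), also run with parameters $(\epsilon/2,\delta/2)$, is separately $(\epsilon/2,\delta/2)$-differentially private. Basic composition then yields the $(\epsilon,\delta)$ guarantee for the full PFP algorithm.

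For utility, I would first apply \theoremref{projection} with the matrix $W'$ (whose columns have $\ell_\infty$-norm at most $\alpha$ by construction of the pruning step) as input. This yields
\[
\|A - B\|_F \le \|A - W'(W')^T A\|_F + O\!\left(\frac{\alpha k \log(k/\delta)\sqrt{n}}{\epsilon}\right).
\]
Next, I would split the remaining term via the triangle inequality as $\|A - W'(W')^T A\|_F \le \|A - WW^T A\|_F + \|WW^T A - W'(W')^T A\|_F$. The first summand is the range-finder error, which by Markov's inequality applied to the bound stated just after \lemmaref{range-privacy} is at most $O\!\left(\sqrt{1 + r/(p-1)}\|A - A_r\|_F + \sqrt{km\log(k/\delta)}/\epsilon\right)$ with probability $99/100$. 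For the second summand I would expand
\[
W'(W')^T A - WW^T A = W\bigl((W')^T - W^T\bigr)A + (W' - W)(W')^T A,
\]
where \lemmaref{truncation-error} bounds the first piece directly by $O(Ck\|A\|_F/(\alpha\sqrt{m}))$, and the second piece admits a bound of the same order by combining the $(1/\alpha^2)$-sparsity of the columns of $W - W'$ with the row-wise norm control on $(W')^T A$ obtained from \lemmaref{sparse-vector}.

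Summing the three contributions yields a total error of
\[
O\!\left(\sqrt{1 + \tfrac{r}{p-1}}\,\|A - A_r\|_F + \frac{\sqrt{km}\log(k/\delta)}{\epsilon} + \frac{Ck\|A\|_F}{\alpha\sqrt{m}} + \frac{\alpha k \log(k/\delta)\sqrt{n}}{\epsilon}\right).
\]
The last two terms depend on $\alpha$; balancing them by choosing $\alpha^{*2} = \Theta\!\left(C\|A\|_F\,\epsilon\big/(\sqrt{mn}\log(k/\delta))\right)$ makes both equal to $\Theta\!\left(k\sqrt{C\|A\|_F}\,(n/m)^{1/4}\sqrt{\log(k/\delta)/\epsilon}\right)$, which is exactly the third term in \equationref{C-precise}. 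A union bound over the $O(1)$ independent failure events (each of constant probability by Markov's inequality) completes the argument with probability at least $9/10$. The main technical obstacle I anticipate is controlling the cross-term $\|(W' - W)(W')^T A\|_F$ in the truncation decomposition: \lemmaref{truncation-error} does not cover it directly, and a careful argument is needed to exploit both the column-sparsity of $W - W'$ and the $C$-coherence of $A$ in order to bring this cross-term under the same $O(Ck\|A\|_F/(\alpha\sqrt{m}))$ umbrella as the first piece.
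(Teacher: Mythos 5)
Your overall route is the paper's route: privacy by composing the two subroutines (pruning being post-processing), error split into a range-finder/HMT term, a truncation term scaling as $1/\alpha$, and a projection-noise term scaling as $\alpha$, then balancing $\alpha$ -- and your choice $\alpha^{*2}=\Theta(C\|A\|_F\epsilon/(\sqrt{mn}\log(k/\delta)))$ reproduces the third term of \equationref{C-precise} exactly. The one genuine gap is precisely the cross term $\|(W'-W)(W')^TA\|_F$ that your decomposition creates, and the argument you sketch for it does not work. \lemmaref{sparse-vector} controls $\|v^TA\|$ only for a vector $v$ that is itself sparse and hits $A$ directly; the rows of $(W')^TA$ are $(w_i')^TA$ where $w_i'$ is the \emph{non-sparse} (small-entry) part of $w_i$, so the lemma gives no row-wise control there, and the column sparsity of $W-W'$ only limits how many coordinates get mixed, not the size of what they carry. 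In fact no bound of order $Ck\|A\|_F/(\alpha\sqrt m)$ holds for this term uniformly over orthonormal $W$: take $A$ of rank one with a maximally spread left singular vector $u$ (so $A$ is $O(1)$-coherent) and a column $w=(e_1+u)/\sqrt2$; then $w'\approx u/\sqrt2$, $w-w'\approx e_1/\sqrt2$, and the cross term has Frobenius norm about $\tfrac12\|u^TA\|=\Omega(\|A\|_F)$, which exceeds $Ck\|A\|_F/(\alpha\sqrt m)$ at the balancing value of $\alpha$ in natural parameter regimes (e.g.\ $\|A\|_F=\sqrt n$, $C,k=O(1)$, $\alpha^*\approx m^{-1/4}$). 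Any rescue would have to invoke distributional properties of the $W$ actually produced by the range finder, which is exactly what the $C$-coherent analysis is designed to avoid (that is the $\mu_0$-coherence route).

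The paper's proof avoids this term by never putting the pruned matrix on the left: \lemmaref{truncation-error} is stated as $WW_\alpha^TA=WW^TA+E$, i.e.\ the pruned matrix enters only inside the sensitive product $(W')^TA$, where its entrywise bound $\alpha$ caps the per-row sensitivity, while the left factor remains the un-pruned $W$. This costs nothing: $W$ is itself the output of the private range finder, so left-multiplying the noisy product by $W$ is post-processing, and the noise bound $\|WN\|_F\le\sqrt k\|N\|_F$ from \theoremref{projection} is unchanged. (If one reads \figureref{find-project} literally as outputting $W'((W')^TA+N)$, the same cross term appears and the paper's terse proof does not address it either; the consistent reading is $B=W((W')^TA+N)$.) So the fix for your write-up is to analyze the output with $W$ as the left factor; then your three error contributions are exactly the ones the paper combines -- \theoremref{HMT} plus Markov via \corollaryref{range-finder}, \lemmaref{truncation-error}, and \theoremref{projection} -- and your $\alpha$-balancing and union bound over the constant-probability failure events finish the proof.
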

We generally think of $C,k$ as small compared to both $m$ and $n.$
\equationref{range-m} states that the algorithm outperforms randomized
response whenever $m$ is not too large compared to $n$ and not too small
compared to the rank $k,$ the Frobenius norm of $A$ divided by $\sqrt{n},$
and the coherence parameter $C.$ These two conditions are
naturally satisfied for a wide range of parameters. For example, when
$\|A\|_F=O(\sqrt{kn})$ (so that randomized response no longer provides
non-trivial error) and $C=O(1)$ (i.e., the matrix is very incoherent), then
the requirement on $m$ is just that
\[
\omega(k^{3})\le m \le o(n)\mper
\]
The proof of
\theoremref{C-approx} is a straightforward combination of our previous
error bounds for range finding, pruning and projection.
\begin{proof}[Proof of \theoremref{C-approx}]
We run \PFP with the given set of parameters $r,p,\epsilon,\delta$
and a suitable choice of the pruning parameter $\alpha>0.$
Before fixing $\alpha,$ we claim that the error of the algorithm satisfies,
with probability $9/10,$
\[
\|A-B\|_F
\le
O\left(
\sqrt{1+\frac r{p-1}}\cdot \|A-A_r\|_F
+ \frac{Ck\|A\|_F}{\alpha\sqrt{m}}
+\left(\sqrt{km}
+ \alpha k\sqrt{n}\right)
\cdot \frac{\log(k/\delta)}\epsilon\right)
\]
Here, the first term follows from \theoremref{HMT} and an application of
Markov's inequality to argue that the bound holds except with sufficiently
small constant probability. The other terms
follow from \theoremref{projection} (error bound of the
projection algorithm), \corollaryref{range-finder} (error bound of the range
finder), and, \lemmaref{truncation-error} (error bound for the pruning step
with parameter $\alpha$). We can now optimize $\alpha$ so as to achieve the
geometric mean between the two terms that it appears in (as $\alpha$ and
$1/\alpha$). Running \PFP with this choice of $\alpha$ directly results
in the error bound stated in
\equationref{C-precise}. \equationref{range-m} is now easily verified by
equating the $O(\cdot)$-term in \equationref{C-precise} with
$o(\sqrt{kn\log(k/\delta)}/\epsilon)$ and rearranging.

Since all sub-routines fail with probability at most $1/100,$ we can take a
union bound to conclude that the algorithm fails to satisfy the error bound
with probability at most $1/10.$
\end{proof}

We will next analyze the performance of \PFP on $\mu_0$-incoherent matrices. In this
case no truncation is necessary, since we argued that the projection matrix
with high probability already has very small entries. The error bound here is
stronger in certain aspects as we will discuss in a moment.

\begin{theorem}[Approximation for $\mu_0$-coherent matrices]
\theoremlabel{mu-approx}
There is an $(\epsilon,\delta)$-differentially private algorithm that
given a rank $R$ matrix $A\in\R^{m\times n}$ and parameters
$r\ge2,p\ge2$ such that $k=r+p\le R$ and $m\ge \omega(Rk \log R)$
produces a rank $k$ matrix $B$ such that with probability $9/10,$
\begin{equation}\equationlabel{mu-precise}
\|A-B\|_F
\le
O\left(\sqrt{\frac r{p-1}}\cdot \|A-A_r\|_F +
\left(\sqrt{km}
+ \sqrt{\frac{kR\mu_0(A)+k^2\log m}{m}}
\sqrt{kn}\right)
\cdot \frac{\log(k/\delta)}\epsilon
\right)\mper
\end{equation}
In particular, the error is $o\left(\sqrt{kn\log(k/\delta)}/\epsilon\right),$ whenever
\begin{equation}\equationlabel{range-m-2}
m = o(n) \qquad\text{and}\qquad
Rk(\mu_0(A)+\log m)\sqrt{\log(k/\delta)} = o(m)\mper
\end{equation}
\end{theorem}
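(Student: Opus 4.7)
The plan is to show that under the $\mu_0$-coherence assumption, the pruning step in \PFP becomes unnecessary: the orthonormal basis $W$ returned by the range finder already has columns of small $\ell_\infty$-norm, so we can essentially set $\alpha = 1$ (i.e.\ $W' = W$). Privacy then follows immediately from composition, exactly as for \theoremref{C-approx}.

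For utility, I would split the error as $\|A-B\|_F \le \|A-WW^TA\|_F + \|WW^TA - B\|_F$ and bound the two pieces via \corollaryref{range-finder} and \theoremref{projection} respectively. The range finder, instantiated with noise parameter $\rho = O(\sqrt{k\log(k/\delta)}/\epsilon)$, yields with probability $99/100$ the bound
\[
\|A-WW^TA\|_F \le O\left(\sqrt{1 + r/(p-1)}\cdot \|A-A_r\|_F + \sqrt{km}\log(k/\delta)/\epsilon\right),
\]
which accounts for the $\sqrt{r/(p-1)}\|A-A_r\|_F$ and $\sqrt{km}$ contributions in \equationref{mu-precise}.

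The key step is to control $\alpha := \max_i \|w_i\|_\infty$, which governs the magnitude of the projection-step error. Here the assumption $m \ge \omega(Rk \log R)$ matches exactly the hypothesis of \lemmaref{mu0-projection} applied with rank parameter $R$, so I can invoke that lemma to obtain, with probability $99/100$,
\[
\alpha \le \sqrt{4R\mu_0(A)/m} + O\left(\sqrt{k \log m/m}\right).
\]
Feeding this $\alpha$ into \theoremref{projection} produces a projection-step error of order
\[
\alpha k \sqrt{n}\, \log(k/\delta)/\epsilon \;=\; \sqrt{kn}\cdot\sqrt{(kR\mu_0(A) + k^2\log m)/m}\cdot \log(k/\delta)/\epsilon,
\]
which is precisely the second error term in \equationref{mu-precise}. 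A union bound over the three high-probability events (range-finder correctness, the incoherence bound on $W$, and projection-step correctness) then yields overall probability at least $9/10$, and \equationref{range-m-2} follows by equating the second error term with $o(\sqrt{kn\log(k/\delta)}/\epsilon)$ and solving for $m$.

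The main technical obstacle---that injecting Gaussian noise into the measurement matrix $\tilde Y$ does not substantially worsen the $\mu_0$-coherence of its range---has already been absorbed into \lemmaref{mu0-projection} via its Gram--Schmidt analysis and the geometric concentration lemmas leading up to it. Given that lemma, the remainder of the argument is a clean composition of existing estimates, and no further nontrivial work is expected beyond bookkeeping the $\log(k/\delta)/\epsilon$ factors.
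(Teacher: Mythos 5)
Your proposal is correct and follows essentially the same route as the paper: run \PFP with $\alpha=1$ so the pruning step is vacuous, bound $\max_i\|w_i\|_\infty$ via \lemmaref{mu0-projection} with rank parameter $R$ (using $m\ge\omega(Rk\log R)$), and combine \corollaryref{range-finder} with \theoremref{projection}, finishing with a union bound over the sub-routine failure events. The only cosmetic difference is that you attribute both the $\sqrt{r/(p-1)}\|A-A_r\|_F$ and $\sqrt{km}$ terms directly to \corollaryref{range-finder}, while the paper cites \theoremref{HMT} plus Markov for the first and the corollary for the second; the content is identical.
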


Just as in the previous theorem we get a range for $m$ in which the algorithm
improves over randomized response. Here, we need the coherence of $A$ to be
small compared to~$m.$ We also observe a dependence on the rank of the matrix.
This means the algorithm presents no improvement if the matrix is close to
being full rank.  Recall that $\mu_0(A)$ can be as small as $O(1).$ In
particular, in the natural case where $\mu_0(A), k, R$ all are small compared
to $m,$ e.g., $m^{0.3},$ the requirement in \equationref{range-m-2} reduces to
$m=o(n).$

Note that \theoremref{mu-approx} is quantitatively stronger than
\theoremref{C-approx} in the following regime: When $k,R,C,\mu_0(A)$ are all
small (e.g., $n^{o(1)}$), $m\le\sqrt{n}$ and $\|A\|_F^2\ge n,$
then \theoremref{mu-approx}
improves over randomized response by a factor of roughly $\sqrt{m},$ whereas
\theoremref{C-approx} achieves an $m^{1/4}$-factor improvement.

\begin{proof}[Proof of \theoremref{mu-approx}]
We run \PFP with the given set of parameters $r,p,\epsilon,\delta$
and $\alpha=1.$ Note that this choice of $\alpha$ implies that we never modify
the matrix returned by the range finder.
We claim that the error of the algorithm is with probability $9/10,$
\[
\|A-B\|_F
\le
O\left(\sqrt{1+\frac r{p-1}}\cdot \|A-A_r\|_F
+ \left(\sqrt{km}
+ \sqrt{\frac{kR\mu_0(A)+k^2\log m}{m}}
\sqrt{kn}\right)
\cdot \frac{\log(k/\delta)}\epsilon
\right)\mcom
\]
which is what we stated in the theorem. The first error term follows as before
from \theoremref{HMT} and Markov's inequality so that it holds with
probability $99/100.$
The term of $O(\sqrt{km\log(k/\delta)}/\epsilon)$
follows from \corollaryref{range-finder}. To understand
the remaining terms that by \lemmaref{mu0-projection} we have that the matrix
$W=[w_1\mid\cdots\mid w_k]$ returned by the range finder satisfies with
probability $99/100,$
\[
\alpha= \max_{i\in[k]}\|w_i\|_\infty\le
\sqrt{\frac {4R}m\cdot\mu_0(A)}
+ O\left(\sqrt{\frac{k\log m}m}\right)\mper
\]
In applying \lemmaref{mu0-projection} we needed that $m\ge c_0kR\log R$ for
sufficiently large constant which is satisfied by our assumption.
Hence, \theoremref{projection}
ensures that the error resulting from the projection
operation is at most $O(\alpha k\sqrt{n\log(k/\delta)}/\epsilon).$ Expanding
$\alpha$ in the latter bound gives the stated error term.
\equationref{range-m} is now easily verified by
equating the $O(\cdot)$-term in \equationref{mu-precise} with
$o(\sqrt{kn\log(k/\delta)}/\epsilon)$ and rearranging.

Again, we can take a union bound over the failure probabilities of the
sub-routines to bound the probability that our algorithm fails to satisfy the
stated bound by $1/10.$
\end{proof}

\bibliographystyle{alpha}
\bibliography{moritz}

\appendix
\section{Privacy of the Gaussian Mechanism}

\begin{theorem}[Gaussian Mechanism]
Let $x, y \in \mathbb{R}^d$ be any two vectors such that $||x-y||_2 \leq c$. Let $Y \in \mathbb{R}^d$ be an independent random draw from $N(0,\rho^2)^d$, where $\rho = c\epsilon^{-1}\sqrt{\log 1.25/\delta}$. Then for any $S \subseteq \mathbb{R}^d$:
$$\Pr[x + Y \in S] \leq \exp(\epsilon)\Pr[y + Y \in S] + \delta$$
\end{theorem}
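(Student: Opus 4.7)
The plan is to analyze the \emph{privacy loss random variable} and then invoke a standard Gaussian tail bound. Writing $f_x(z)$ and $f_y(z)$ for the densities of $x+Y$ and $y+Y$ at a point $z \in \R^d$, the log-likelihood ratio is
\[
L(z) \defeq \ln\frac{f_x(z)}{f_y(z)} = \frac{\norm{z-y}_2^2 - \norm{z-x}_2^2}{2\rho^2} = \frac{2\iprod{z-x, v} + \norm{v}_2^2}{2\rho^2}\mcom
\]
where $v \defeq x - y$ satisfies $\norm{v}_2 \le c$ by hypothesis. The bulk of the argument is to show that when $z$ is distributed as $x + Y$, the scalar random variable $L(z)$ is small with probability at least $1 - \delta$, and then combine this with the pointwise likelihood-ratio bound to derive the privacy guarantee.

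First I would identify the distribution of $L(x+Y)$. When $z = x + Y$, the difference $z - x$ equals $Y \sim N(0,\rho^2)^d$, so by \factref{gaussian-sum} the inner product $\iprod{Y, v}$ is a one-dimensional Gaussian $N(0, \rho^2\norm{v}_2^2)$. Therefore $L(x+Y)$ is distributed as $N\!\left(\norm{v}_2^2/(2\rho^2),\ \norm{v}_2^2/\rho^2\right)$, which in the worst case $\norm{v}_2 = c$ has mean $\mu = c^2/(2\rho^2)$ and variance $\sigma^2 = c^2/\rho^2$. With the choice $\rho = c\epsilon^{-1}\sqrt{\log(1.25/\delta)}$, both $\mu$ and $\sigma$ are easily computable in terms of $\epsilon$ and $\log(1.25/\delta)$.

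Next I would apply the standard one-sided Gaussian tail bound $\Pr\Set{N(0,1) > t} \le \exp(-t^2/2)$ to show $\Pr\Set{L(x+Y) > \epsilon} \le \delta$. After centering, the relevant threshold is $t = (\epsilon - \mu)/\sigma$, which for the chosen $\rho$ equals $\sqrt{\log(1.25/\delta)}$ up to a small additive correction coming from $\mu$. The main (and only delicate) obstacle is getting the constant $1.25$ exactly right: the computation requires carefully verifying that the mean shift together with the $1/\sqrt{2\pi}$ factor in the Gaussian tail absorb into the prescribed constant. This is the classical tuning that yields the familiar Gaussian mechanism parameter, and I would do it by direct substitution.

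Finally I would conclude via the by-now standard ``good set / bad set'' argument. For any measurable $S \subseteq \R^d$, partition $S = S_{\mathrm{good}} \cup S_{\mathrm{bad}}$ where $S_{\mathrm{good}} = \Set{z \in S \colon L(z) \le \epsilon}$. On $S_{\mathrm{good}}$ the pointwise inequality $f_x(z) \le e^\epsilon f_y(z)$ gives
\[
\Pr\Set{x+Y \in S_{\mathrm{good}}} \le \exp(\epsilon)\Pr\Set{y+Y \in S_{\mathrm{good}}} \le \exp(\epsilon)\Pr\Set{y+Y \in S}\mper
\]
On $S_{\mathrm{bad}}$ the previous step gives $\Pr\Set{x+Y \in S_{\mathrm{bad}}} \le \Pr\Set{L(x+Y) > \epsilon} \le \delta$. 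Summing the two bounds yields the desired inequality $\Pr\Set{x+Y \in S} \le \exp(\epsilon)\Pr\Set{y+Y \in S} + \delta$.
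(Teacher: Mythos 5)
Your overall strategy is the same as the paper's, just packaged differently: the paper proves the one-dimensional case by splitting $S$ into the set $T=\{z\in S:\ z<\rho^2\epsilon/c-c/2\}$ (which is exactly the set where the log-likelihood ratio is at most $\epsilon$) and its complement, and then reduces $d$ dimensions to one by rotating $x-y$ onto $e_1$; your privacy-loss variable $L(z)$, which depends on $z$ only through $\langle z-x,v\rangle$, performs that one-dimensional reduction implicitly, and your good-set/bad-set split is the same decomposition. The architecture is fine, and your identification of the law of $L(x+Y)$ as $N\bigl(\|v\|_2^2/(2\rho^2),\ \|v\|_2^2/\rho^2\bigr)$, with the worst case at $\|v\|_2=c$, is correct.

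The genuine gap is in the step you defer to ``direct substitution.'' With $\rho=c\epsilon^{-1}\sqrt{\log(1.25/\delta)}$ the standardized threshold is $t=(\epsilon-\mu)/\sigma=\sqrt{\log(1.25/\delta)}-\epsilon/\bigl(2\sqrt{\log(1.25/\delta)}\bigr)$, and the bound you invoke, $\Pr\Set{N(0,1)>t}\le\exp(-t^2/2)$, yields $\exp(-t^2/2)\approx(\delta/1.25)^{1/2}e^{\epsilon/2}$, which is of order $\sqrt{\delta}$, not $\delta$; so the key claim $\Pr\Set{L(x+Y)>\epsilon}\le\delta$ does not follow as planned. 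To close this step one needs both the sharper tail estimate $\Pr\Set{N(0,1)>t}\le\exp(-t^2/2)/(t\sqrt{2\pi})$ and, more importantly, the standard parameter $\rho=c\epsilon^{-1}\sqrt{2\log(1.25/\delta)}$ together with a restriction such as $\epsilon\le1$ (this is the usual Dwork--Roth calculation); the noise level in the statement here is smaller by a factor of $\sqrt{2}$ than what that computation supports, and no amount of tuning the constant $1.25$ recovers the missing factor, since the shortfall is $\sqrt{\delta}$ versus $\delta$. In fairness, the paper's own appendix is loose at exactly the same point (it quotes a tail bound of the form $(\rho/\sqrt{2\pi})\exp(-t^2/2\rho^2)$ and asserts that plugging in $\rho$ completes the claim), but as a self-contained argument your plan would fail at this step unless you either work with the $\sqrt{2}$-inflated $\rho$ or replace the crude tail bound by a substantially sharper analysis of the Gaussian privacy loss.
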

\begin{proof}
For a set $S \subseteq \mathbb{R}^d$, write $S - x$ to denote the set $\{s - x : s\in S\}$ and $SQ$ to denote $\{sQ : s \in S\}$. Write $S_i = \{s_i : s \in S\}$ to denote the projection of the set onto the $i$'th coordinate of its elements.

First we consider the one dimensional case, where $x, y \in \mathbb{R}$ and $||x-y||_2 = |x-y| \leq c$. Without loss of generality, we may take $x = 0$ and $y = c$. Let $T \subseteq S$ be the set $T = \{z \in S : z < \frac{\rho^2\epsilon}{c} - \frac{c}{2}\}$ First, we argue that $\Pr[x + Y \in S\setminus T] = \Pr[Y \in S \setminus T] \leq \delta$. This follows directly from the tail bound:
$$\Pr[Y \geq t] \leq \frac{\rho}{\sqrt{2\pi}}\exp(-t^2/2\rho^2)$$
Observing that:
$$\Pr[Y \in S \setminus T] \leq \Pr[Y \geq \frac{\rho^2\epsilon}{c} - \frac{c}{2}]$$
and plugging in our choice of  $\rho = c\epsilon^{-1}\sqrt{\log 1.25/\delta}$ completes the claim. Next we show that conditioned on the event that $Y \not\in S \setminus T$, we have:  $\Pr[x + Y \in S] \leq \exp(\epsilon)\Pr[y + Y \in S]$. Conditioned on this event we have:
$$\left|\ln\left(\frac{\Pr[Y \in S]}{\Pr[Y \in S-c]}\right)\right| \leq \max_{z \in T} \left|\ln\left(\frac{\Pr[Y = z]}{\Pr[Y = z-c]}\right)\right| =\left|\ln\left(\frac{\exp(-z^2/2\rho^2)}{\exp(-(z+c)^2/2\rho^2)}  \right)\right|$$
where here $\Pr[Y = t]$ denotes the probability density function of $N(0,\rho^2)$ at $t$.
This quantity is bounded by $\epsilon$ whenever:
$$z \leq \frac{\rho^2\epsilon}{c} - \frac{c}{2}$$
i.e. whenever $z \in T$.
Therefore:
$$\Pr[x + Y \in S] \leq \exp(\epsilon)\Pr[y + Y \in S] + \delta$$
which completes the proof in the 1-dimensional case.

For the multi-dimensional case, we will take advantage of the rotational invariance of the Gaussian distribution to rotate any Euclidean length $c$-perturbation into a length $c$ standard basis vector, reducing it to the $1$-dimensional case.

Consider any two vectors $x, y \in \mathbb{R}^d$ such that $||x-y||_2 \leq c$.  Let $Q \in \mathbb{R}^{d\times d}$ be the orthonormal (rotation) matrix such that $(x-y)Q = c'\cdot e_1$ where $e_1 \in \mathbb{R}^d$ is the 1st standard basis vector $e_1 = (1, 0, \ldots, 0)$, and $c' = ||x-y||_2 \leq c$. We will use the fact that for any orthonormal matrix $Q$, and for any $Y \sim N(0,\rho^2)^d$, $YQ \sim N(0,\rho^2)^d$: i.e. spherically symmetric Gaussian distributions are invariant under rotation. We have:
$$\Pr[x + Y \in S] = \Pr[(x+Y)Q \in SQ] = \Pr[xQ + YQ  \in SQ] = \Pr[Y \in SQ-xQ]$$
We want to bound:
$$\left|\ln\left(\frac{\Pr[Y \in SQ-xQ]}{\Pr[Y \in SQ-yQ]}\right)\right|$$
 Now note that we have chosen $Q$ such that $(SQ - xQ)_i = (SQ-yQ)_i$ for all $i > 1$ (Because $(xQ)_i = (yQ)_i$ for all $y > 1$). Therefore, we have:
 $$\left|\ln\left(\frac{\Pr[Y \in SQ-xQ]}{\Pr[Y \in SQ-yQ]}\right)\right|= \left|\ln\left(\frac{\Pr[Y_1 \in (SQ)_1-(xQ)_1]}{\Pr[Y_1 \in (SQ)_1-(yQ)_1]}\right)\right|$$

Note that by rotational invariance, we have: $\Pr[(zQ)_1 \geq t] = \Pr[z_1 \geq t]$ for any vector $z \in \mathbb{R}^d$, and so we are now again in the $1$-dimensional case, in which the theorem is already proven.

\end{proof}

\section{Measure concentration on the sphere}
In \sectionref{incoherent} we used the following classical result regarding
concentration of Lipschitz functions on the sphere. A proof can be found for
example in Matousek's text book~\cite{Matousek02}.
\begin{theorem}[L\'{e}vy's lemma]\theoremlabel{levy}
Let $f\colon\mathbb{S}^{d-1}\to\R$ be a Lipschitz function in the sense that
\[
\left|f(x)-f(y)\right|\le \|x-y\|_2
\]
and define the median of $f$ as
$\mathrm{med}(f)=\sup\Set{t\in\mathbb{R}\colon\Pr\Set{f\le t}\le\frac12}\mper$
Then,
\[
\Pr\Set{ \left|f-\mathrm{med}(f)\right| > \epsilon}\le 4\exp(-\epsilon^2d/2)\mcom
\]
where probability probability and expectation are computed with respect to the
uniform measure on the sphere.
\end{theorem}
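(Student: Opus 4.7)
The plan is to prove L\'evy's lemma by the standard route via the spherical isoperimetric inequality, reducing the tail bound for $f$ to a measure estimate for $\epsilon$-neighborhoods of half-spheres. The Lipschitz hypothesis is what lets us turn a statement about \emph{sets} (which is what isoperimetry gives) into a statement about the \emph{values} of $f$.

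First I would set $A = \{x \in \mathbb{S}^{d-1} : f(x) \le \mathrm{med}(f)\}$ and note that by the definition of the median, $\mu(A) \ge \tfrac12$, where $\mu$ denotes the uniform probability measure on the sphere. Let $A_\epsilon = \{x \in \mathbb{S}^{d-1} : \mathrm{dist}(x,A) \le \epsilon\}$ be the (geodesic or Euclidean, the choice only changes constants) $\epsilon$-neighborhood of $A$. The Lipschitz hypothesis $|f(x)-f(y)| \le \|x-y\|_2$ immediately yields
\[
x \in A_\epsilon \;\Longrightarrow\; f(x) \le \mathrm{med}(f) + \epsilon,
\]
and consequently $\Pr\{f > \mathrm{med}(f) + \epsilon\} \le 1 - \mu(A_\epsilon)$. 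Applying the same reasoning to $B = \{x : f(x) \ge \mathrm{med}(f)\}$ gives the symmetric tail bound. It therefore suffices to prove that any set of measure at least $\tfrac12$ satisfies $\mu(A_\epsilon) \ge 1 - 2\exp(-\epsilon^2 d/2)$; a union bound over the two tails then produces the stated factor of $4$.

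The key ingredient is the spherical isoperimetric inequality: among all measurable subsets of $\mathbb{S}^{d-1}$ with a prescribed measure, spherical caps minimize the measure of their $\epsilon$-neighborhoods. Applied to $A$ (whose measure is at least that of a hemispherical cap), this reduces the estimate to the case where $A$ is exactly a hemisphere. For a hemisphere, $\mu(A_\epsilon)$ is a one-dimensional integral that can be bounded by comparing $\sin^{d-2}$-weighted arc length to a Gaussian tail, yielding $\mu(\mathbb{S}^{d-1} \setminus A_\epsilon) \le 2\exp(-\epsilon^2 d/2)$. Combining this estimate with the two-sided reduction above completes the proof.

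The main obstacle, and the part I would \emph{cite} rather than reprove, is the spherical isoperimetric inequality itself; this is a substantial classical result (L\'evy--Schmidt) whose proof uses two-point symmetrization or rearrangement on the sphere, and is the reason the statement is presented in the appendix with a pointer to Matou\v{s}ek's textbook rather than proved from scratch. Once isoperimetry is granted, the remaining estimates are elementary: the Lipschitz step is a one-line implication, and the cap measure bound is a routine Gaussian-type tail estimate on the one-dimensional density $c_d \sin^{d-2}\theta$ near the equator $\theta = \pi/2$.
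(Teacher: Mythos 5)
The paper does not prove this statement at all: it is presented as a classical result with a pointer to Matou\v{s}ek's textbook, so there is no in-paper proof to compare against. Your outline is correct and is precisely the standard argument found in that reference --- reduce the two one-sided tails to neighborhood estimates for the sets $\{f\le\mathrm{med}(f)\}$ and $\{f\ge\mathrm{med}(f)\}$ (each of measure at least $\tfrac12$ under the paper's sup-definition of the median) via the Lipschitz condition, invoke the L\'evy--Schmidt isoperimetric inequality to reduce to hemispheres, bound the complement of the $\epsilon$-neighborhood of a hemisphere by $2\exp(-\epsilon^2 d/2)$, and union-bound the two tails to get the factor $4$. Citing the isoperimetric inequality rather than reproving it is the appropriate choice here, for exactly the reason you give.
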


\section{The Netflix Data}
\sectionlabel{netflix}
In this section we illustrate why the data set released by Netflix satisfies the
assumptions underlying \theoremref{informal1}. That is, the matrix is
unbalanced, sparse and $C$-coherent (\definitionref{C-coherence}) for very
small $C.$ Indeed, according to information released by Netflix, the data set
has the following properties:
\begin{enumerate}
\item There are $x=100,480,507$ movie ratings,
$m=17,770$ movies and $n=480,189$ users. In particular, the data set is very sparse
in that only a $x/mn\approx0.011$ fraction of the matrix is nonzero. Also note
that $m\ll n.$
\item The most rated movie in the data set is \emph{Miss Congeniality} with $t=227,715$
ratings (followed by \emph{Independence Day} with $216,233$). Hence, the maximum
number of entries in one row is only a $t/x\approx0.0022$ fraction of the
total number of \emph{nonzero} entries. 
Moreover, all entries of the matrix are in $\{1,\dots,5\}$ and thus very small
numbers.
\end{enumerate}
We conclude that, indeed, the Netflix matrix is \emph{sparse} and the maximum
norm of any row takes up only a tiny fraction of the total norm of the matrix.
We further believe that these properties are likely to hold in other
recommender systems.  Indeed, the average number of ratings per user should be
small (thus resulting in a sparse matrix), and no item should be rated almost
as often as all other items taken together (thus resulting in low coherence).
\end{document}